\documentclass[journal]{IEEEtran}
\usepackage[numbers,sort&compress]{natbib}
\usepackage{graphicx}
\usepackage{amsmath}
\usepackage{amssymb}
\usepackage{amsthm}
\usepackage{epsfig}
\usepackage{epstopdf}

\usepackage{algorithm,algorithmic}

\usepackage{mathptmx}
\usepackage{color}
\usepackage{multirow}
\usepackage{bm}
\usepackage{ulem}
\usepackage{mathtools}

\usepackage{cuted}

\usepackage{xcolor}
\usepackage{caption}
\usepackage{eucal}

\usepackage{tabularx,booktabs,ragged2e}
\setlength\stripsep{3pt plus 1pt minus 1pt}
\newtheorem{assumption}{Assumption}

\allowdisplaybreaks[3]

\ifCLASSOPTIONcompsoc
\usepackage[caption=false,font=normalsize,labelfont=sf,textfont=sf]{subfig}
\else
\usepackage[caption=false,font=footnotesize]{subfig}
\fi

\ifCLASSINFOpdf

\else

\fi

\hyphenation{op-tical net-works semi-conduc-tor}

\newcommand{\matA}{\boldsymbol{A}}
\newcommand{\matB}{\boldsymbol{B}}

\newcommand{\matP}{\boldsymbol{p}}
\newcommand{\matC}{\boldsymbol{c}}

\newcommand{\matPhi}{\boldsymbol{\phi}}

\newcommand{\bs}[1]{\pmb{#1}}

\newcommand{\diag}[1]{\text{diag}\left(#1\right)}
\newcommand{\rank}[1]{\text{rank}\left(#1\right)}
\newcommand{\Tr}[1]{\text{Tr}\left(#1\right)}
\newcommand{\LRT}[2]{%
	\mathrel{\mathop\gtrless\limits^{#1}_{#2}}%
}

\newcommand{\fracPS}[1]{\frac{1}{\sigma_{#1}^2}}

\newcommand{\vecwang}[1]{\text{vec}\left( #1 \right)}
\newtheorem{theorem}{Theorem}

\newtheorem{definition}{Definition}
\newtheorem{lemma}{Lemma}

\definecolor{zswPink}{RGB}{169, 53, 41}
\definecolor{ao(english)}{rgb}{0.0, 0.5, 0.0}
\definecolor{armygreen}{rgb}{0.29, 0.33, 0.13}
\definecolor{chocolate(web)cocoabrown}{rgb}{1.0, 0.75, 0.0}
\definecolor{buff}{rgb}{0.94, 0.86, 0.51}
\definecolor{chocolate(web)}{rgb}{0.82, 0.41, 0.12}
\definecolor{cocoabrown}{rgb}{0.82, 0.41, 0.12}

\newcommand{\inchhspace}{\hspace{1in}}

\DeclarePairedDelimiter\abs{\lvert}{\rvert}%
\makeatletter
\let\oldabs\abs
\def\abs{\@ifstar{\oldabs}{\oldabs*}}

	\definecolor{ao(english)}{rgb}{0.0, 0.5, 0.0}

\newcommand{\hpt}{\hspace{2pt}}
\newcommand{\rewCone}{}

\begin{document}

\title{A Statistical Learning-Based Algorithm for Topology Verification in Natural Gas Networks  Based on Noisy Sensor Measurements  \thanks{This work was supported by the Department of Energy under Award DEOE0000779. } }

\author{Zisheng Wang and Rick S. Blum, \IEEEmembership{Fellow, IEEE}.%
	\thanks{Zisheng Wang and Rick S. Blum are with the Department of Electrical and Computer Engineering, Lehigh University, Bethlehem, PA 18015 USA (e-mail: \{ziw417,rblum\}@lehigh.edu)}}
\maketitle

\IEEEpeerreviewmaketitle

%


\begin{abstract}
Accurate knowledge of natural gas network topology is critical for the proper operation of natural gas networks. Failures, physical attacks, and cyber attacks can cause the actual natural gas network topology to differ from what the operator believes to be present. Incorrect topology information misleads the operator to apply inappropriate control causing damage and lack of gas supply. Several methods for verifying the topology have been suggested in the literature for electrical power distribution networks, but we are not aware of any publications for natural gas networks. In this paper, we develop a useful topology verification algorithm for natural gas networks based on modifying a general known statistics-based approach to eliminate serious limitations for this application while maintaining good performance. We prove that the new algorithm is equivalent to the original statistics-based approach for a sufficiently large number of sensor observations.  We provide new closed-form expressions for the asymptotic performance that are shown to be accurate for the typical number of sensor observations required to achieve reliable performance. 
\end{abstract}

\begin{IEEEkeywords}
	Natural gas networks, generalized likelihood ratio test, semidefinite relaxation programming, asymptotic performance.
\end{IEEEkeywords}

\section{Introduction}

Modern natural gas delivery monitoring and control systems incorporate a Supervisory Control and Data Acquisition (SCADA) system for enhanced remote wide area control and situational awareness. However, the increased usage of information and communication technologies in these SCADA systems also introduces more vulnerability into natural gas networks by providing opportunities for malicious attackers. Attacks can, for example, modify the system databases that hold the current natural gas network topology, which is essential for the operator to evaluate and control the network. The operator might also obtain incorrect knowledge of the topology from modified communications or actual physical attacks which modify the topology. 
Incorrect topology information can cause an operator to apply inappropriate control that may cause severe damage to the system and lack of supply to critical natural gas fired electric power generation plants. This can cause a loss of power to critical infrastructure. In this paper, we study topology verification which attempts to decide if the topology the operator believes to be present is actually present. We use a hypothesis testing formulation based on noisy sensor measurements.

A number of publications have focused on topology verification for electrical networks \cite{weimer2012distributed,liang2017framework,sevlian2018outage,cavraro2015distribution}. Other publications on the related topic of topology identification have also appeared \cite{deka2016estimating,gao2013method,tian2016mixed,deka2018structure}. In topology identification, there is no assumption that the operator believes some particular topology is present. In \cite{weimer2012distributed}, the authors provide centralized and distributed algorithms  for topology verification for electrical networks using noisy sensor measurements.
In \cite{liang2017framework}, the authors document  the impact of the operator believing an incorrect topology is present for electrical network cases.
The authors of \cite{sevlian2018outage} propose a maximum likelihood outage detection framework for tree  structured electrical power distribution networks using real time power flow measurements and load forecasts. 
The authors of \cite{cavraro2015distribution} propose a topology identification algorithm for electrical networks by employing time series sensor measurements. In \cite{deka2016estimating}, the authors present a framework for topology identification for radial structured electrical networks. The authors of \cite{gao2013method} develop an algorithm for identifying the correct topology  of an electrical network by using power measurements in a novel manner. In \cite{tian2016mixed}, the authors propose a mixed integer quadratic programming model to identify the actual topology  of an electrical network with noisy sensor measurements. The authors of \cite{deka2018structure} develop a low-complexity algorithm to learn the current topology of an electrical network using measurements from a subset of all nodes.

While some topology verification and identification algorithms have been reported in the literature for electrical networks \cite{weimer2012distributed,liang2017framework,sevlian2018outage,cavraro2015distribution,deka2016estimating,gao2013method,tian2016mixed,deka2018structure}, we did not see any publications providing such algorithms for natural gas networks. 
Further, the physics-based mathematical equations that the natural gas network sensor measurements obey are very different from the equations that electrical network sensor measurements follow. Since the published algorithms for electrical networks are highly dependent on the electrical network physics-based mathematical equations, these algorithms can't be applied in natural gas networks. Therefore, in this paper we design a new topology verification algorithm for natural gas networks. This topology verification algorithm is obtained by modifying a general known statistics-based approach \cite{lehmann2006testing} consisting of non-convex optimization problems. We rigorously justify that the new algorithm has the same performance as the original statistic-based approach when the number of sensor observations satisfies a given condition. Moreover, the new algorithm is more efficient in terms of run time while also being more reliable than the original statistic-based approach. We provide new closed-form expressions for the asymptotic performance that are shown to be accurate for the typical number of sensor observations required to achieve reliable performance.
These asymptotic expressions are analytically justified.  We have not seen any derivations in the literature of the asymptotic performance for the cases we consider in this paper, which involve estimations of discrete variables which describe the natural gas network topology. 

\subsection{Notation and organization}

Throughout this paper, bold upper case letters denote matrices and bold lower case letters denote vectors. Let $ [\bs{A}]_{m,n} $ denote the element in the $ m^{th} $ row and $ n^{th} $ column of the matrix $ \bs{A} $. Let $ [\bs{A}]_{[i,:]} $ denote the $ i^{th} $ row vector of the matrix $ \bs{A} $. The quantity $ \vecwang{\bs{A}} $ denotes the vector of the stacked columns of the matrix $ \bs{A} $. $ [\bs{A}]_+ = \max(0,\bs{A}) $ where the max is applied element-wise. Let $ \Tr{\bs{A}} $ denote the trace of the matrix $ \bs{A} $. A list of critical notations is shown in Table \ref{tb:notation_list}.

\begin{table}
	\renewcommand{\arraystretch}{1.3}
	\newcolumntype{L}{>{\raggedright\arraybackslash}X} 
	\caption{The list of critical notations}
	\label{tb:notation_list}
	\centering
	\begin{tabularx}{0.495\textwidth}{lL}
		\hline\hline
		Notation & Description \\
		\hline	
		$ \bs{\phi},\bs{\phi}_F,\bs{\phi}_C $& Gas flow vectors for all pipelines, fixed pipelines, and changeable pipelines.\\
		$ \bs{p}  $&   The gas pressure vector.\\
		$ \bs{q}$&    The gas injection/withdraw vector.\\
		$ \tilde{\bs{p}}(t),\tilde{\bs{q}}(t),\tilde{\bs{\phi}}(t) $& The sensor measurements of the pressures, the injections, and the flows.\\
		$ \bs{v} $& The full vector of observations.\\
		$ T_a $& The number of observations.\\
		$ L,N $ &	The number of pipelines and nodes (excluding the reference node).\\
		$ \tilde{\bs{A}},\bs{a},\bs{A} $& The incidence matrix for the topology, the column vector of the incidence matrix corresponding to the reference node, the rest of the incidence matrix with $ \bs{a} $ removed.\\
		$ \bs{A}_{H_0},\bs{A}_{H_1},\bs{A}_{FC} $& The incidence matrices for the topology which the operator believe to be true, the actual topology, and the topology with all changeable pipelines closed.\\ 
		$ \bs{B},\bs{b} $& Weighted incidence matrices.\\
		$ \bs{\delta}_p,\bs{\delta}_q,\bs{\delta}_\phi,\bs{\delta} $& The sensor placements variables for the pressure, the injection, and the flow sensors.\\ 
		$ \bs{\theta},\hat{\bs{\theta}},\bs{\theta}^* $& The parameter of the hypothesis testing problem, the estimated parameter, the computed parameter by using data from $ H_1 $.\\
		$ \bs{\omega},\hat{\bs{\omega}},\bs{\omega}^* $& Continuous part of $ \bs{\theta} $, $ \hat{\bs{\theta}} $, and $ \bs{\theta}^* $.\\
		$ g(\bs{v}|\bs{\theta}) $& The pdf of $ \bs{v} $ parameterized by $ \bs{\theta} $.\\
		$ \bs{X} $& The variables for the Relaxed GLRT in $ \mathbb{R}^{S\times S} $.\\
		$ \bs{M},\bs{Z} $& The matrices used in the objective function and constraints of the Relaxed GLRT.\\
		$ \bs{J}(\bs{A}) $& The fisher information matrix for estimating $ \bs{\omega} $ given a general $ \bs{A} $.\\
		$ \bs{f}(\bs{\theta}) $& The constraint functions.\\
		$ \bs{F}(\bs{A},\bs{\omega}), \widetilde{\bs{F}} $& The gradient of $ \bs{f}(\bs{\theta}) $ with respect to $ \bs{\omega} $ and the gradient when $ \bs{A} = \bs{A}_{FC} $.\\
		$ \bs{U}(\bs{A},\bs{\omega}),\widetilde{\bs{U}} $& Orthogonal bases for null spaces of $ \bs{F}(\bs{A},\bs{\omega}) $ and $ \widetilde{\bs{F}} $.\\
		$ \lambda $&  The non-centrality parameter of chi-squared distribution.\\
		\hline
		\hline
	\end{tabularx}
\end{table} 

The remainder of this paper is organized as follows. Section II describes the topology verification problem based on sensor observations in a natural gas network. Section III describes our statistics-based algorithm to decide if the topology the operator believes in agrees with the true one. 
A minimum requirement on a proper sensor placement is also described. Section IV describes our closed-form asymptotic performance predictor. Section V presents numerical results and analysis. Section VI presents conclusions and discussions. 

This paper is an extended version of the preliminary work presented at an IEEE conference \cite{wang2019topology}. In \cite{wang2019topology} we proposed a relaxed algorithm to verify the topology of a natural gas network without any analytical justification of good performance. Here we prove the relaxation loses no performance for a sufficient number of observations in Section III. We develop closed-form performance expressions which are shown to be accurate for the desired performance levels in Section IV. We also provide conditions on a suitable sensor placement in Section IV and include greatly expanded numerical results in Section V. 

\section{Natural Gas Network Model}\label{section_2}

The natural gas network is modeled as a directed graph $ \mathcal{G}\triangleq(\Xi,\mathcal{L}) $. The set of graph vertices $ \Xi = \{0,1,\cdots,N\} $ represents all possible nodes in the natural gas network.
At each node, gas is possibly injected or withdrawn. { Gas injection occurs when gas suppliers produce gas or gas is provided by adjacent networks. On the other hand, gas withdraw occurs when gas is consumed or sent to adjacent networks. } 
Let $ q_i $ and $ p_i $ represent the gas injection and pressure at node $ i $ respectively.  We use the term pipeline to refer to a pipe connecting two adjacent nodes. 
The gas pipeline between two nodes is modeled by one of the graph edges in the set $ \mathcal{L} = \{1,2,\cdots,L\} $. The gas flow through the $ l^{th} $ pipeline is $ \phi_l $. The gas injection $ q_i $ at node $ i $ is positive for injection, negative for withdraw, and zero if no injection or withdraw takes place. If the direction of the gas flow $ \phi_l $ is the same as the direction of the directed graph, $ \phi_l > 0$, otherwise, $ \phi_l < 0 $. 

Fig. \ref{fig_example_2_model} illustrates a gas distribution system modeled using a directed graph. We allow the topology verification algorithm we describe in this paper to use pressure, injection/withdraw, and flow measurements. In Fig. \ref{fig_example_2_model}, these three possible types of sensors are illustrated.
Fig. \ref{fig_example_2_model} shows the sensors at node 1 and the pipeline directly connected to node 1. At the other nodes and pipelines, sensors are employed in the similar way. 
It is not necessary to employ all three types of sensors at every node or pipeline. We will describe the strategy to employ sensors later in this paper.
\begin{figure}
	\centering
	\includegraphics[width=0.49\textwidth]{./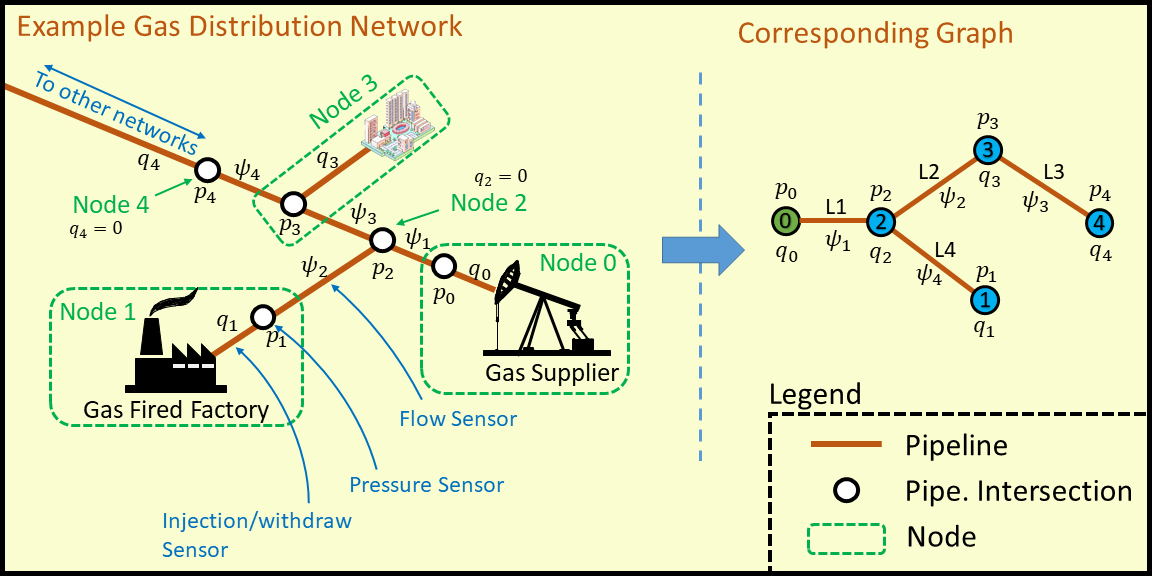}
	\caption{An example gas distribution network and the corresponding directed graph.}
	\label{fig_example_2_model}
\end{figure}

Consider a medium to high pressure network with time-invariant gas injections.
The relationship between the gas node pressures $ p_i,  p_j $ and the gas flow $ \phi_l $ along the pipeline $l$ connecting nodes $i$ and $j$ follows the Weymouth equation \cite{osiadacz1987simulation}
\begin{align}
\alpha_lp_i^2-p_j^2=c_l\phi_l|\phi_l|,\inchhspace l = 1,\ldots, L  \label{weymouth}
\end{align}
when the graph edge is directed from $ i $ to $ j $ and $ \alpha_l $ is the compressor ratio in the $ l $th pipeline. Let $ c_l = \gamma_l[1-(1-\alpha_l)r_l]>0 $ denote the characteristic value of $ l^{th} $ pipelines where $ \gamma_l $ is the pipeline constant and $ r_l\in[0,1] $ indicates the compressor  is located at normalized distance $ r_l $ from node $ i $ and $ (1-r_l) $ from node $ j $. When no compressor is present at the $ l^{th} $ pipeline, $ \alpha_l = 1 $. The natural gas network satisfies the mass-conservation equation. At each node $ i $, the gas injection $ q_i $ is equal to the sum of gas flows that leave node $ i $ minus the sum of gas flows that enter node $ i $, such that
\begin{align}
&q_i = \sum_{l:\text{leave }i}\phi_l - \sum_{l:\text{enter }i}\phi_l,\inchhspace i = 0,\cdots,N \label{gas_mass_conversation_1} 
\end{align}
and
\begin{align}	
&q_0 = -\sum_{i\in\Xi,i\ne 0}q_i	\label{gas_mass_conversation_2}
\end{align}
where $q_0$ is the injection from the reference node. The reference node is typically chosen as a node supplying gas.

Define $ \bs{p}\triangleq[ p_1,\hpt\cdots,\hpt p_N ]^T $ and $ \bs{q}\triangleq[ q_1,\hpt\cdots,\hpt q_N ]^T $. Note that the reference node variables $ p_0 $ and $ q_0 $ (assumed known) are excluded from the vectors $ \bs{p} $ and $ \bs{q} $. Similarly, we also construct $ \bs{\phi}\triangleq[ \phi_1,\hpt\cdots,\hpt\phi_L ]^T $, $ \bs{c}\triangleq[ c_1,\hpt\cdots,\hpt c_L ]^T $ and $ \bs{\alpha}\triangleq[ \alpha_1,\hpt\cdots,\hpt\alpha_L ]^T $. 
Based on (\ref{weymouth}), the $ L\times (N+1) $ incidence matrix $ \bs{\tilde{A}} $ has elements
\begin{align}
&[\tilde{\bs{A}}]_{l,n} = \begin{cases}
+1, & n = i+1 \\
-1, & n = j+1 \\
0, & \text{otherwise}
\end{cases},\label{incidence_matrix}
\end{align}
with $ l = 1,2,\dots,L, \ n = 0,1,\dots,N $ and $ 0\le i,j \le N $. 
Isolating the first column corresponding to the reference node, the matrix $ \bs{\tilde{A}} $ can be partitioned as $ \bs{\tilde{A}} = \begin{bmatrix} \bs{a} & \bs{A} \end{bmatrix} $. From  (\ref{gas_mass_conversation_1}) we obtain $ \bs{q} = \bs{A}^T\bs{\phi} $ which does not involve the 
reference node.   A similar set of manipulations provides a matrix and vector version of (\ref{weymouth}) that is independent of the reference node, see \cite{osiadacz1987simulation}, to obtain 
\begin{align}
\matA^T\matPhi&=\bs{q}, \label{PF_1}\\
\matB(\matP\odot\matP)&=\matC\odot\matPhi\odot|\bs{\phi}|-p_0^2\bs{b}, \label{PF_2}
\end{align}
where $ \odot $ denotes the element-wise product, 
\begin{align}
\matB&\triangleq \text{diag}\{\boldsymbol{\alpha}\}[\matA]_+-[-\matA]_+, \label{def_B}\\
\bs{b}&\triangleq \text{diag}\{\boldsymbol{\alpha}\}[\boldsymbol{a}]_+-[-\boldsymbol{a}]_+, \label{Defb}
\end{align}
and $ [x]_+ = \max(0,x) $ is an element wise operator.

\section{Topology  verification } \label{sec_topo_attack_detect}

Let $ \Psi\subset\mathcal{L} $ denote the flow measurements employed by our algorithm. 
Let $ \Gamma\subset\Xi $ denote the pressure measurements employed, while $ \Lambda\subset\Xi $ denotes the  injection/withdraw measurements employed.  Assume a specific set of pipelines are changeable such that their states can be open (not employed) or closed (employed).  To be completely general, we allow all pipelines to be changeable but in practice this is usually not the case.  Let $ L_C $ denote the number of changeable pipelines, while $ \bs{\phi}_C $ is the vector of flows over these changeable pipelines. The rest of the pipelines are called fixed pipelines. Let $ L_F $ denote the number of fixed pipelines, while $ \bs{\phi}_F $ is the vector of flows over these pipelines. We must have $ L_F+L_C = L $ total pipelines. We also define four indicator functions that describe if measurements from the $ {\rewCone u}^{th} $ fixed pipeline, $ k^{th} $ changeable pipeline, $ n^{th} $ pressure sensor, or $ n^{th} $ injection  sensor will be employed as
\begin{align}
\rewCone {\delta}_{F,u} &= \begin{cases}
1, & \text{if the $ u^{th} $ fixed pipeline}\in\Psi\\
0, & \text{otherwise}
\end{cases},\label{censoring_vector_1}\\
{\delta}_{C,k} &= \begin{cases}
1, & \text{if the $ k^{th} $ changeable pipeline}\in\Psi\\
0, & \text{otherwise}
\end{cases},\label{censoring_vector_2}\\
{\delta}_{p,n} &= \begin{cases}
1, & \text{if the $ n^{th} $ pressure sensor}\in\Gamma\\
0, & \text{otherwise}
\end{cases},\label{censoring_vector_3}
\end{align} 
and
\begin{align}
{\delta}_{q,n} &= \begin{cases}
1, & \text{if the $ n^{th} $ injection sensor}\in\Lambda\\
0, & \text{otherwise}
\end{cases},\label{censoring_vector_4}
\end{align}
with $ {\rewCone u = 1,2,\dots,L_F}, k=1,2,\dots, L_C, $ and $ n = 1,2,\dots,N. $
Let $ \bs{\delta}_F = (\delta_{F,1},\dots,\delta_{F,L_F})^T,\hpt\bs{\delta}_C = (\delta_{C,1},\dots,\delta_{C,L_C})^T,\hpt\bs{\delta}_p = (\delta_{p,1},\dots,\delta_{p,N})^T,\bs{\delta}_q = (\delta_{q,1},\dots,\delta_{q,N})^T $, $ \bs{\delta}_\phi = (\bs{\delta}_F^T,\bs{\delta}_C^T) $ and $ {\bs{\delta}} =(\bs{\delta}_F^T,\allowbreak\bs{\delta}_C^T,\allowbreak\bs{\delta}_p^T,\allowbreak\bs{\delta}_q^T)^T $ denote the full vector of indicator functions.
The following assumptions are made through out this paper.
\begin{assumption} \label{Assumption_1}
	The set of all nodes  which could be employed is assumed to be known to the operator correctly. 
\end{assumption}
\begin{assumption} \label{Assumption_2}
	The set of all measurement errors in the gas pressures, flows and injections/withdraws forms an independent distributed sequence of zero-mean Gaussian random variables and are not manipulated by attackers. The measurement errors are taken to be relatively small in this study as they would be in practice. Larger measurement errors still follow the trends presented. 
\end{assumption}

{\bf Assumptions} \ref{Assumption_1} and \ref{Assumption_2} define our assumed problem model.   In {\bf Section~\ref{relaxation_method}}, we provide minimum requirements on the sensor placement employed. 
Recall that the matrix $ \bs{A} $ describes the topology of the network when the reference node is excluded. Each row of $ \bs{A} $ describes which two nodes are connected by the pipeline corresponding to the row number as per (\ref{incidence_matrix}). Let $ \bs{A}_{H_0} $ denote the topology describing matrix $ \bs{A} $ that the operator believes is correct. For convenience, we assume the first $ L_F $ rows of $ \bs{A} $, denoted by  $ \bs{A}_F $, correspond to the fixed pipelines while the next $ L_C $ rows, denoted by  $ \bs{A}_C $, correspond to the changeable pipelines. Thus $\bs{A} = \begin{bmatrix}\bs{A}_F^T&\bs{A}_C^T\end{bmatrix}^T $. 
The same partition is applied to $ \bs{A}_{H_0} $, $\bs{a} $, $ \bs{B} $, $ \bs{b} $, and $ \bs{\phi} $

In order to verify the topology, we use sensor measurements. 
At times $ t = 1,2,\cdots, T_a $, we observe noisy versions of the deterministic gas pressures $ {{\bs{p}}}[t]$, the injections/withdraws $ {\bs{q}}[t] $, the gas flows over the fixed pipelines $ {\bs{\phi}}_F[t] $ and the gas flows over the changeable pipelines ${\bs{\phi}}_C[t] $. 
Let $ \tilde{\bs{p}}[t],\tilde{\bs{q}}[t], \tilde{\bs{\phi}}_F[t] $ and $ \tilde{\bs{\phi}}_C[t] $ denote these noisy observations for $ t = 1,2,\cdots, T_a $. 
By {\bf Assumption} \ref{Assumption_2}, at each time $ t $ these observations follow a jointly Gaussian distribution with independent components
\begin{align}
\tilde{\bs{p}}[t]&\sim\mathcal{N}\left( \bs{p},\sigma_{p}^2\bs{I}_N \right),\label{gaussian_p} \\
\tilde{\bs{q}}[t]&\sim\mathcal{N}\left(\bs{q},\sigma_{q}^2\bs{I}_N \right), \label{gaussian_q}
\end{align}
and
\begin{align}
\tilde{\bs{\phi}}[t] = \begin{bmatrix}
\tilde{\bs{\phi}}_F[t]\\\tilde{\bs{\phi}}_C[t]
\end{bmatrix}\sim\mathcal{N}\left(\begin{bmatrix}
\bs{\phi}_F\\\bs{\phi}_C
\end{bmatrix},\sigma_{\phi}^2\bs{I}_L \right)\label{gaussian_phi_1},
\end{align}
where $ \bs{I}_N $ denotes an $ N\times N$ identity matrix, $ \sigma_p^2,\sigma_q^2 $ and $ \sigma_{\phi}^2 $ denote the 
variances of the measurement errors of the observed gas pressures, withdraws and flows, respectively. 

Let $ P_G(\bs{x}|\bs{\mu},\bs{\Sigma}) $ denote a Gaussian pdf with the mean vector $ \bs{\mu} $ and the covariance matrix $ \bs{\Sigma} $. Denote the full vector of observations as $ \bs{v} = ( \tilde{\bs{p}}[t],\tilde{\bs{\phi}}_F[t], \tilde{\bs{\phi}}_C[t],\tilde{\bs{q}}[t] ) $ and denote its pdf as $g(\bs{v} | \bs{\theta})$ which is parameterized by $ \bs{\theta} = ( \vecwang{\bs{A}}^T,\bs{p}^T,\bs{\phi}_F^T,\bs{\phi}_C^T )^T $.
Based on {\bf Assumption} \ref{Assumption_2} and equations (\ref{gaussian_p})-(\ref{gaussian_phi_1}), we have 
\begin{align}
&g(\bs{v} | \bs{\theta}) = \prod_{t=1}^{T_a}\left[ \prod_{\begin{subarray}{c} n=1\\\bs{\delta}_{p,n}\ne 0\end{subarray}}^Np_G(\tilde{\bs{p}}_n[t]|\bs{p}_n,\sigma_p^2)\prod_{\begin{subarray}{c} n=1\\\bs{\delta}_{q,n}\ne 0\end{subarray}}^Np_G(\tilde{\bs{q}}_n[t]|[\bs{A}^{T}\bs{\phi}]_n,\sigma_{q}^2)\right.\notag\\
&\left.\prod_{\begin{subarray}{c} u=1\\\bs{\delta}_{F,u}\ne 0\end{subarray}}^{L_F}p_G(\tilde{\bs{\phi}}_{F,u}[t]|\bs{\phi}_{F,u},\sigma_{\phi}^2)
\prod_{\begin{subarray}{c} k=1\\\bs{\delta}_{C,k}\ne 0\end{subarray}}^{L_C}p_G(\tilde{\bs{\phi}}_{C,k}[t]|\bs{\phi}_{C,k},\sigma_{\phi}^2) \right]. \label{e14} 
\end{align}
To test if the topology matches what the operator believes to be true (verification), we will test if the observed data comes from one of two different possible sets of pdfs. The first set includes all pdfs $ g(\bs{v}|\bs{\theta}) $ with $ \bs{\theta} $ such that $ {\bs{A}}={\bs{A}_{H_0}} $ where $ \bs{A}_{H_0} $ corresponds to the topology the operator expects to be present. The other set includes all other possible pdfs. 
Due to the fact that the gas pressures and the flows have to follow (\ref{PF_2}), then $ \bs{\theta} $ must satisfy
\begin{align}
\bs{f}(\bs{\theta}) = \bs{B}(\bs{p}\odot\bs{p}) - \bs{c}\odot\begin{bmatrix}
\bs{\phi}_F\\
\bs{\phi}_C
\end{bmatrix}\odot\begin{bmatrix}
\bs{\phi}_F\\
\bs{\phi}_C
\end{bmatrix} + \bs{b}p_0^2 = \bs{0}, \label{con_func_stan_GLRT}
\end{align}
in which $ \bs{B} $ and $ \bs{b} $ were defined in (\ref{def_B}) and (\ref{Defb}). 
Define the parameter sets $ \Theta_{H_0} $ and $ \Theta_{H_1} $ as
\begin{align}
\Theta_{H_0} = \left\{ \bs{\theta};\hspace{5pt}\bs{f}(\bs{\theta}) = \bs{0},\hspace{5pt}\bs{A} = \bs{A}_{H_0} \right\}, \label{def_par_set_h_0}
\end{align}
and
\begin{align}
\Theta_{H_1} = \left\{ \bs{\theta};\hspace{5pt}\bs{f}(\bs{\theta}) = \bs{0},\hspace{5pt}\bs{A} \ne \bs{A}_{H_0} \right\} \label{def_par_set_h_1}.
\end{align}

Verifying a topology requires solving the following hypothesis test involving the family of pdfs $ g(\bs{v} | \bs{\theta})$ parameterized by $ \bs{\theta} = (\vecwang{\bs{A}}^T,\hpt\bs{p}^T,\hpt\bs{\phi}_F^T,\hpt\bs{\phi}_C^T
)^T $ 
\begin{align}
H_0: \bs{\theta} \in \Theta_{H_0}\hpt\text{ and }\hpt  H_1: \bs{\theta} \in \Theta_{H_1}, \label{def_Set_H1}
\end{align}
where in either case the parameter must satisfy the constraint in (\ref{con_func_stan_GLRT}).
Both $ H_0 $ and $ H_1 $ are composite hypotheses (more than one pdf is possible under $ H_0 $ and $H_1$) which generally makes it much more difficult to define an optimum test \cite{poor2013introduction}. 
For a specific set of problems \cite{zeitouni1992generalized}, the Generalized Likelihood Ratio Test (GLRT) provides the largest error exponent under $ H_1 $ such that the probability of error under $ H_1 $ 
decreases most rapidly with more observations. 
The popular GLRT makes its decision by comparing a test statistic to a threshold $ \rho $ as
\begin{align}
&\frac{\sup_{\bs{\theta}\in\Theta_{H_1}} g(\bs{v}|\bs{\theta})}{\sup_{\bs{\theta}\in\Theta_{H_0}} g(\bs{v}|\bs{\theta})}\LRT{H_1}{H_0}\rho.\label{GLRT_test}
\end{align}
The threshold $ \rho $ is typically set to fix the probability that we decide for $ H_1 $ given that $ H_0 $ is true. This probability is called the false alarm probability and denoted by $ P_{fa} $. The optimizations in \eqref{GLRT_test} are called constrained maximum likelihood estimates of $ \bs{\theta} $. The probability of deciding for $ H_1 $ given $ H_1 $ is actually true is called the detection probability and denoted by $ P_d $.
The test in (\ref{GLRT_test}) is equivalent to
\begin{align}
&\sup_{\bs{\theta}\in\Theta_{H_1}} \ln g(\bs{v}|\bs{\theta}) - \sup_{\bs{\theta}\in\Theta_{H_0}} \ln g(\bs{v}|\bs{\theta})\LRT{H_1}{H_0}\ln \rho,\label{log_GLRT_test}
\end{align}
in which
\begin{align}
&\ln g(\bs{v}|\bs{\theta})
= C - \sum_{t = 1}^{{T_a}}\frac{1}{2} \left[ \frac{1}{\sigma_\phi^2}\left\|\begin{pmatrix}
\bs{\delta}_F\\\bs{\delta}_C
\end{pmatrix}\odot\left[\begin{pmatrix}
\tilde{\bs{\phi}}_F[t]\\\tilde{\bs{\phi}}_C[t]
\end{pmatrix} - \begin{pmatrix}
\bs{\phi}_F\\\bs{\phi}_C
\end{pmatrix}\right]\right\|^2 + \right.\notag\\
&\left.\frac{1}{\sigma_p^2}\left\|\bs{\delta}_p\odot\left(\tilde{\bs{p}}[t] - \bs{p}\right)\right\|^2 + \frac{1}{\sigma_q^2} \left\|\bs{\delta}_q\odot\left(\tilde{\bs{q}}[t] - \bs{A}^T\bs{\phi}\right)\right\|^2\right], \label{likelihood_f}
\end{align}
where $ \|\bullet\| $ denotes the $ l^2 $ norm, and 
$C = -{T_a}\|\bs{\delta}\|^2\ln(2\pi)/2-{T_a}\{\|\bs{\delta}_q\|^2\ln(\sigma_{q})-(\|\bs{\delta}_F\|^2+\|\bs{\delta}_C\|^2)\ln(\sigma_{\phi})-\|\bs{\delta}_p\|^2\ln(\sigma_{p})\}$.

\section{Relaxed GLRT}\label{relaxation_method}

Due to the optimization problems in (\ref{log_GLRT_test}) being nonconvex, it is difficult to obtain optimal solutions in general. We can apply the powerful tool of semidefinite programming relaxation (SDR) \cite{luo2010semidefinite} to change the optimization problems involving continuous variables in (\ref{log_GLRT_test}) to convex problems. This yields a new test. In this section we show the new test achieves a performance which is the same as the performance of the GLRT test in (\ref{log_GLRT_test}) for the typical values of $ {T_a} $ of interest. We start by defining some quantities to describe the new test.

Define $ S = N+L_F+L_C+1. $ Let $ \bs{X} $ denote an $ S\times S $ symmetric positive semidefinite matrix and recall that $ \bs{A} $ denotes an $ L\times N $ matrix of the form defined after (\ref{incidence_matrix}). All the other variables employed in the following functions were previously defined and are fixed by the physical problem we are solving. 
Define the functions 
\begin{align}
\tilde{g}(\bs{A},\bs{X},\bs{v}) &= C-\Tr{\frac{1}{2}\bs{M}({\bs{A}},\bs{v})\bs{X}},\label{def_tilde_g}
\end{align}
and
\begin{align}
\tilde{f}(m,{\bs{A}},\bs{X}) &= \Tr{\bs{Z}(m,{\bs{A}})\bs{X}}+\bs{b}_mp_0^2,\quad m=1,\dots,L,  \label{def_tilde_f}
\end{align}
where $ \Tr{\bullet} $ denotes the trace operator and 
\begin{align}
\bs{M}({\bs{A}},\bs{v}) &= \begin{bmatrix}
\bs{M}'_{11} & \bs{0} & \bs{M}'_{13}\\
\bs{0} & \bs{M}'_{22} & \bs{M}'_{23}\\
\bs{M}^{'T}_{13} & \bs{M}_{23}^{'T} & \bs{M}'_{33}\\
\end{bmatrix} \label{def_M_S0},
\end{align}
with
\begin{align}
\bs{M}'_{11} &= {T_a}\fracPS{p}\hpt\diag{\bs{\delta}_p},\\
\bs{M}'_{14} &= -\sum_{t=1}^{T_a}\fracPS{p}\tilde{\bs{p}}[t]\odot\bs{\delta}_p,\\
\bs{M}'_{22} &= {T_a}\fracPS{q}\bs{A}\hpt\diag{\bs{\delta}_q}\hpt\bs{A}^{T} + {T_a}\fracPS{\phi}\hpt\diag{\bs{\delta}_\phi},\\
\bs{M}'_{23} &= -\sum_{t=1}^{T_a}\fracPS{q}\bs{A}(\tilde{\bs{q}}[t]\odot\bs{\delta}_q) - \sum_{t=1}^{T_a}\fracPS{\phi}\bs{\delta}_\phi\odot\tilde{\bs{\phi}}[t], 
\end{align}
and
\begin{align}
\bs{M}'_{33} &= \sum_{t=1}^{T_a}\left(\fracPS{q}\left\| \tilde{\bs{q}}[t]\odot\bs{\delta}_q \right\|^2 + \fracPS{p}\left\| \tilde{\bs{p}}[t]\odot\bs{\delta}_p \right\|^2 + 
\right.\notag\\&\left. 
\fracPS{\phi}\left\| \bs{\delta}_\phi\odot\tilde{\bs{\phi}}[t] \right\|^2\right)\label{M_S_sub_m_9}.
\end{align}
In (\ref{def_tilde_f}), $ \bs{Z}(m,{\bs{A}}) = \diag{\begin{bmatrix} \bs{B}_{[m,:]}&-\bs{c}_m\bs{e}_{m,{1}\times L}&0 \end{bmatrix}} $ where $ \bs{B}_{[m,:]} $ denotes the $ m^{th} $ row of $ \bs{B} $, and $ \bs{e}_{m,1\times L} $ is an $ 1\times L $ row vector with all zero entries except for the $ m^{\text{th}} $ which is 1.
Let 
\begin{align}
\Omega({\bs{A}}) = &\left\{ \bs{X}\in\mathbb{R}^{S\times S}\hpt|\hpt\bs{X}\ge 0;\hspace{1em} \tilde{f}\left(m,{\bs{A}},\bs{X}\right) = 0;\hspace{1em}\bs{X}_{SS}= 1;\right.\notag\\
&\left.\bs{X}^T = \bs{X};\hspace{1em}m\in\{1,\cdots,L\} \right\} \label{def_Omega}
\end{align}
define a set of values of $ \bs{X} $. Here $ \bs{X}\ge 0 $ implies that $ \bs{X} $ is positive semidefinite. 
Let $ \mathcal{A}_{H_1} = \left\{ \bs{A}\hpt|\hpt\bs{A_{}}\ne\bs{A}_{H_0} \right\} $. 
\begin{theorem}[Standard GLRT Equivalent Expression] \label{Stnd_GLRT_THM}
	Let $ \bs{v} = ( \tilde{\bs{p}}[t],\tilde{\bs{\phi}}_F[t], \tilde{\bs{\phi}}_C[t],\tilde{\bs{q}}[t] ) $ represent the sensor observations. The GLRT in (\ref{log_GLRT_test}) can be expressed as 
	\begin{align}
	\sup_{\begin{subarray}{c} \bs{A} \in \mathcal{A}_{H_1}\\\bs{X}\in\Omega({\bs{A}})\\\rank{\bs{X}} = 1\end{subarray}}&\tilde{g}({\bs{A}},\bs{X},\bs{v}) - \sup_{\begin{subarray}{c} \\\bs{X}\in\Omega\left({\bs{A}_{H_0}}\right)\\\rank{\bs{X}} = 1\end{subarray}}\tilde{g}({\bs{A}_{H_0}},\bs{X},\bs{v}) \LRT{H_1}{H_0}\ln\rho. \label{equ_log_GLRT_test}
	\end{align}
\end{theorem}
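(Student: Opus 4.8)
The plan is to exhibit an exact correspondence between the constrained maximum-likelihood problems over the continuous parameters in \eqref{log_GLRT_test} and the rank-one semidefinite problems in \eqref{equ_log_GLRT_test}, for each fixed topology matrix $\bs{A}$; the outer maximization over $\bs{A}$ (either $\bs{A}=\bs{A}_{H_0}$ or $\bs{A}\in\mathcal{A}_{H_1}$) is the same discrete search in both formulations and simply carries through. Write the continuous part of $\bs{\theta}$ as $\bs{\omega}=(\bs{p}^T,\bs{\phi}_F^T,\bs{\phi}_C^T)^T\in\mathbb{R}^{S-1}$ and form the augmented vector $\bs{y}=(\bs{\omega}^T,1)^T\in\mathbb{R}^{S}$. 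The central device is the lifting $\bs{X}=\bs{y}\bs{y}^T$, under which every quadratic, linear, or constant monomial in the entries of $\bs{\omega}$ becomes a single entry of $\bs{X}$ (the linear and constant terms living in the last row and column).

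First I would verify the objective identity: substituting $\bs{X}=\bs{y}\bs{y}^T$ into \eqref{def_tilde_g} reproduces the log-likelihood \eqref{likelihood_f}. Expanding the three squared-norm terms of \eqref{likelihood_f} as quadratics in $\bs{\omega}$ and collecting coefficients shows that the pressure term gives the $\bs{p}$--$\bs{p}$ block $\bs{M}'_{11}$, the flow and injection terms together give the $\bs{\phi}$--$\bs{\phi}$ block $\bs{M}'_{22}$ (the $\bs{A}\diag{\bs{\delta}_q}\bs{A}^T$ piece coming from $\bs{A}^T\bs{\phi}$), the linear terms populate the last-column blocks $\bs{M}'_{13},\bs{M}'_{23}$, and the data-only remainder is $\bs{M}'_{33}$; the $\bs{p}$--$\bs{\phi}$ block vanishes because $\bs{p}$ and $\bs{\phi}$ never appear multiplied together in \eqref{likelihood_f}. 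The one point needing care is the factor of two: since $\bs{X}$ is symmetric, $\Tr{\bs{M}\bs{X}}$ counts each off-diagonal block twice, which is exactly what converts the single cross terms $-2(\cdot)^T\bs{\omega}$ in the expanded squares into the symmetric placement of $\bs{M}'_{13},\bs{M}'_{23}$, while the overall $-\tfrac12$ in \eqref{def_tilde_g} matches the $-\tfrac12$ in front of \eqref{likelihood_f}.

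Next I would verify the constraint identity: for $\bs{X}=\bs{y}\bs{y}^T$, each $\tilde{f}(m,\bs{A},\bs{X})$ of \eqref{def_tilde_f} equals the $m$-th component of $\bs{f}(\bs{\theta})$ in \eqref{con_func_stan_GLRT}. Because $\bs{Z}(m,\bs{A})$ is diagonal, $\Tr{\bs{Z}(m,\bs{A})\bs{X}}$ picks out the corresponding diagonal entries of $\bs{X}$, namely $p_n^2=[\bs{X}]_{nn}$ against the weights $\bs{B}_{[m,:]}$ and $\phi_m^2$ against the single weight $-c_m$, so that $\tilde{f}(m,\bs{A},\bs{X})=\bs{B}_{[m,:]}(\bs{p}\odot\bs{p})-c_m\phi_m^2+b_mp_0^2=[\bs{f}(\bs{\theta})]_m$. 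Hence $\tilde{f}(m,\bs{A},\bs{X})=0$ for all $m$ is equivalent to $\bs{f}(\bs{\theta})=\bs{0}$.

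Finally I would close the equivalence with the rank-one parametrization. A symmetric positive semidefinite matrix of rank one can be written $\bs{X}=\bs{y}\bs{y}^T$; the normalization $\bs{X}_{SS}=1$ forces $y_S=\pm1$, and since $\bs{y}\bs{y}^T=(-\bs{y})(-\bs{y})^T$ we may take $y_S=1$ without loss of generality, giving $\bs{y}=(\bs{\omega}^T,1)^T$ for a unique $\bs{\omega}\in\mathbb{R}^{S-1}$. This yields a bijection between the feasible $\bs{\omega}$ (those with $\bs{f}(\bs{\theta})=\bs{0}$) and the rank-one members of $\Omega(\bs{A})$, under which the two objectives coincide by the first step. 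Applying this with $\bs{A}=\bs{A}_{H_0}$ to the $H_0$ supremum, and jointly over $\bs{A}\in\mathcal{A}_{H_1}$ and the continuous variables to the $H_1$ supremum, shows each supremum in \eqref{log_GLRT_test} equals its counterpart in \eqref{equ_log_GLRT_test}; subtracting gives the stated test. I expect the main obstacle to be purely bookkeeping---tracking the block indexing and the symmetry-induced factor of two so the lifted quadratic matches \eqref{likelihood_f} term by term---rather than anything conceptual, since imposing $\rank{\bs{X}}=1$ is precisely what keeps the lift from enlarging the feasible set.
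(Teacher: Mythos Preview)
Your proposal is correct and follows exactly the approach the paper has in mind: the paper's own proof is simply ``omitted as it involves only algebra,'' and your lifting $\bs{X}=\bs{y}\bs{y}^T$ with $\bs{y}=(\bs{\omega}^T,1)^T$, together with the block-by-block matching of \eqref{likelihood_f} to $\bs{M}(\bs{A},\bs{v})$ and of \eqref{con_func_stan_GLRT} to the diagonal trace constraints $\tilde f(m,\bs{A},\bs{X})=0$, is precisely that algebra. The bookkeeping you flag (symmetry-induced factor of two, the sign normalization $y_S=1$) is handled correctly.
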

\begin{proof}
	The proof is omitted as it involves only algebra. 
\end{proof}

The Relaxed GLRT is defined identically to (\ref{equ_log_GLRT_test}) but with the  $ \rank{\bs{X}} = 1 $ constraint removed.
\begin{definition}[Relaxed GLRT] \label{Relaxed_GLRT_THM}
	Let $ \bs{v} = ( \tilde{\bs{p}}[t],\tilde{\bs{\phi}}_F[t], \tilde{\bs{\phi}}_C[t],\tilde{\bs{q}}[t] ) $ represent the sensor observations. We define the Relaxed GLRT as the test 
	\begin{align}
	\sup_{\begin{subarray}{c} \bs{A}
		\in\mathcal{A}_{H_1}\\\bs{X}\in\Omega({\bs{A}})\end{subarray}}&\tilde{g}({\bs{A}},\bs{X},\bs{v}) - \sup_{\begin{subarray}{c} \\\bs{X}\in\Omega\left({\bs{A}_{H_0}}\right)\end{subarray}}\tilde{g}({\bs{A}_{H_0}},\bs{X},\bs{v}) \LRT{H_1}{H_0}\ln\rho. \label{relaxed_GLRT_test}
	\end{align}
\end{definition} 

In the sequel, the test in (\ref{log_GLRT_test}) is called the Standard GLRT. The advantage of the Relaxed GLRT is that it can be computed in a numerically reliable and efficient manner since the constraints for the continuous variables are convex. This is not true for the Standard GLRT in general due to the nonconvex optimization problems involved. 

One important property of the Relaxed GLRT is that it is equivalent to the Standard GLRT for a sufficiently large $ {T_a} $, i.e., the optimizations in \eqref{relaxed_GLRT_test} and \eqref{log_GLRT_test} yield identical optimal solutions. 
 Define $ \bs{\omega} = (\bs{p}^T,\hpt\bs{\phi}_F^T,\hpt\bs{\phi}_C^T)^T $ as the continuous part of $ \bs{\theta} $.
We define $ \mu(\bs{A}) $ and $ \xi(\bs{A}) $ as
\begin{align}
\mu(\bs{A}) &= \sup_{\bs{\omega}:\bs{f}(\bs{A},\bs{\omega}) = \bs{0}} \ln g(\bs{v}|\bs{\omega},\bs{A}),\label{original_problem}
\end{align}
and
\begin{align}
\xi(\bs{A}) &= \sup_{\begin{subarray}{c} \bs{X}\in\Omega({\bs{A}})\end{subarray}}\tilde{g}({\bs{A}},\bs{X},\bs{v}).\label{relaxed_problem}
\end{align}
Note that \eqref{original_problem} can represent either of the terms in \eqref{log_GLRT_test} which are both constrained maximum likelihood (ML) optimizations. Similarly, 
\eqref{relaxed_problem} can stand for either (relaxed) term in \eqref{relaxed_GLRT_test}. In the sequel, we call \eqref{relaxed_problem} the relaxed ML while \eqref{original_problem} is called the original ML.
From (13)-(15), define 
\begin{align}
\bs{n}_p[t] = \tilde{\bs{p}}[t] - {\bs{p}}[t]
\sim\mathcal{N}\left( \bs{0}, \sigma_p^2\bs{I}_N \right),\\	
\bs{n}_q[t] = \tilde{\bs{q}}[t] - {\bs{q}}[t]
\sim\mathcal{N}\left( \bs{0}, \sigma_q^2\bs{I}_N \right),
\end{align}
and
\begin{align}
\bs{n}_\phi[t] = \tilde{\bs{\phi}}[t] - {\bs{\phi}}[t] 
\sim\mathcal{N}\left( \bs{0}, \sigma_\phi^2\bs{I}_L \right).
\end{align}
The relaxed ML \eqref{relaxed_problem} is equivalent to the original one \eqref{original_problem}, i.e., $ \mu(\bs{A}) = \xi(\bs{A}) $, for sufficently large $ T_a$. 

\begin{theorem}[Asymptotic equivalence of the relaxed ML to the original ML] \label{thm_relaxed_is_exact}
	Suppose that the number of observations $ {T_a} $ satisfies
	\begin{align}
	{T_a} &> \max\left( \left\| \diag{\bs{\delta}_p}\sum_{t=1}^{T_a}\bs{n}_p[t] \right\|_{\infty}, \hspace{3pt}\left\| \diag{\bs{\delta}_\phi}\sum_{t=1}^{T_a}\bs{n}_\phi[t] \right\|_{\infty},\right.\notag\\&\left.\hspace{3pt}\left\| \bs{A}\hspace{2pt}\diag{\bs{\delta}_q}\sum_{t=1}^{T_a}\bs{n}_q[t] \right\|_{\infty}/\eta_{\min}\left[ \bs{A}\hspace{2pt}\diag{\bs{\delta}_q}\hspace{2pt}\bs{A}^T \right] \right), \label{exact_relaxed_final_condi}
	\end{align}
	where $ \eta_{min}[\bs{A}\hspace{2pt}\diag{\bs{\delta}_q}\hspace{2pt}\bs{A}^T] $ denotes the smallest non-zero eigenvalue of $ \hspace{2pt}\bs{A}\hspace{2pt}\diag{\bs{\delta}_q}\hspace{2pt}\bs{A}^T $, and $ \|(x_1\allowbreak,x_2,\allowbreak\dots,\allowbreak x_m)\|_{\infty} = \max(|x_1|,|x_2|,\dots,|x_m|) $ denotes the infinity norm. Then, the relaxed ML \eqref{relaxed_problem} is equivalent to the original ML \eqref{original_problem}, i.e., $ \mu(\bs{A}) = \xi(\bs{A}) $. 
This implies the Relaxed GLRT is equivalent to the Standard GLRT when $ {T_a} $ satisfies \eqref{exact_relaxed_final_condi} for all possible $ \bs{A} $. 
\end{theorem}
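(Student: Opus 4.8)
The plan is to prove the single equality $\mu(\bs{A}) = \xi(\bs{A})$ for each fixed admissible $\bs{A}$, from which the GLRT claim follows. Both statistics in \eqref{relaxed_GLRT_test} and \eqref{log_GLRT_test} are assembled from these quantities: the $H_0$ term uses $\bs{A}=\bs{A}_{H_0}$ while the $H_1$ term takes a supremum over $\bs{A}\in\mathcal{A}_{H_1}$. Hence, once $\mu(\bs{A})=\xi(\bs{A})$ holds for every $\bs{A}$ (as the hypothesis requires), the two suprema agree and the two tests make identical decisions. Since the feasible set of \eqref{relaxed_problem} contains the lift of the feasible set of \eqref{original_problem}, the inequality $\xi(\bs{A})\ge\mu(\bs{A})$ is automatic, so the entire content is the reverse inequality $\xi(\bs{A})\le\mu(\bs{A})$.

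First I would make the lift explicit. Any $\bs{X}\in\Omega(\bs{A})$ has $\bs{X}_{SS}=1$, so writing its last column as $\bs{\omega}$ and $\bs{y}=(\bs{\omega}^T,1)^T$, the Schur complement gives $\bs{\Sigma}=\bs{X}-\bs{y}\bs{y}^T\succeq 0$. Because each $\bs{Z}(m,\bs{A})$ is diagonal, the constraints $\tilde f(m,\bs{A},\bs{X})=0$ see $\bs{X}$ only through its diagonal $\bs{X}_{ii}=\omega_i^2+\bs{\Sigma}_{ii}$, so they read exactly as the Weymouth constraint $\bs{f}(\bs{A},\bs{\omega})=\bs{0}$ with each squared coordinate inflated by $\bs{\Sigma}_{ii}$. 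Partitioning $\bs{M}$ as in \eqref{def_M_S0}, the objective becomes $\tilde g(\bs{A},\bs{X},\bs{v})=\ln g(\bs{v}|\bs{\omega},\bs{A})-\tfrac12\Tr{\bs{M}_0\bs{\Sigma}}$, where $\bs{M}_0\succeq 0$ is the leading $(S-1)\times(S-1)$ block (comprising $\bs{M}'_{11}$ and $\bs{M}'_{22}$), whose entries scale like $T_a$. Thus the whole relaxation gap is the penalty $\tfrac12\Tr{\bs{M}_0\bs{\Sigma}}$, and the only role of the lift $\bs{\Sigma}$ is to buy feasibility by shifting the Weymouth constraints.

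The core step is a rank-one rounding: given feasible $\bs{X}$, set $\tilde\omega_i=\mathrm{sign}(\omega_i)\sqrt{\bs{X}_{ii}}$. Because the constraints use only $\tilde\omega_i^2=\bs{X}_{ii}$, the point $(\tilde{\bs{\omega}},\bs{0})$ is feasible for \eqref{original_problem}, so it suffices to show $\ln g(\bs{v}|\tilde{\bs{\omega}},\bs{A})\ge\tilde g(\bs{A},\bs{X},\bs{v})$, i.e. $\tilde{\bs{y}}^T\bs{M}\tilde{\bs{y}}\le\Tr{\bs{M}\bs{X}}$ with $\tilde{\bs{y}}=(\tilde{\bs{\omega}}^T,1)^T$. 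The diagonal contributions match by construction, so the inequality reduces to the off-diagonal part of $\bs{M}$. The pressure--flow block of $\bs{M}$ vanishes, so the pressure coordinates decouple; the cross terms with the homogenizing coordinate (the blocks $\bs{M}'_{13},\bs{M}'_{23}$) contribute a quantity whose favorable sign holds provided the rounded coordinates are not reversed relative to the data. After centering the observations this amounts to keeping the noise accumulators $\diag{\bs{\delta}_p}\sum_t\bs{n}_p[t]$ and $\diag{\bs{\delta}_\phi}\sum_t\bs{n}_\phi[t]$ dominated by $T_a$, which is exactly the first two bounds in \eqref{exact_relaxed_final_condi}.

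I expect the injection term to be the main obstacle. The flow block of $\bs{M}$ contains $T_a\sigma_q^{-2}\bs{A}\diag{\bs{\delta}_q}\bs{A}^T$, whose off-diagonal entries couple all flows through $\bs{A}^T\bs{\phi}$, so the clean coordinate-wise sign argument breaks and must be replaced by a matrix estimate. I would bound the damage the rounding inflicts on these coupled terms by the magnitude of the injection residual $\bs{A}\diag{\bs{\delta}_q}\sum_t\bs{n}_q[t]$, rescaled by the smallest nonzero eigenvalue $\eta_{\min}[\bs{A}\diag{\bs{\delta}_q}\bs{A}^T]$ that controls the conditioning of the injection-to-flow map; requiring $T_a$ to exceed the third term of \eqref{exact_relaxed_final_condi} forces this damage below the penalty $\tfrac12\Tr{\bs{M}_0\bs{\Sigma}}$ that the rounding removes. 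Taking the maximum over the three measurement types then yields $\tilde{\bs{y}}^T\bs{M}\tilde{\bs{y}}\le\Tr{\bs{M}\bs{X}}$ for every feasible $\bs{X}$, hence $\xi(\bs{A})\le\mu(\bs{A})$ and the claimed equality; applying it at $\bs{A}_{H_0}$ and at every $\bs{A}\in\mathcal{A}_{H_1}$ gives the equivalence of the two GLRTs. As a cross-check I would reconfirm the conclusion from the SDP dual, where the same conditions force the dual slack matrix $\bs{M}+\sum_m\nu_m\bs{Z}(m,\bs{A})$ (plus the homogenizing multiplier) to be positive semidefinite with a one-dimensional kernel, which is equivalent to a rank-one primal optimum.
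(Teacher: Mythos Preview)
Your approach is genuinely different from the paper's. The paper's own proof is very short: it observes (as you do) that the constraints depend only on the diagonal of $\bs{X}$, then \emph{stacks} the $T_a$ observations via Kronecker products (so that, e.g., $\overline{\bs{A}}=\bs{1}\otimes\bs{A}^T$) to cast \eqref{original_problem} and \eqref{relaxed_problem} into the canonical form of a noisy least-squares SDP relaxation, and finally invokes Proposition~5 of \cite{so2010probabilistic} as a black box. That proposition already guarantees tightness of the SDR under conditions of the form $\eta_{\min}[\overline{\bs{G}}^T\overline{\bs{G}}]>\|\overline{\bs{G}}^T\overline{\bs{n}}\|_\infty$ for each design/noise pair $(\overline{\bs{G}},\overline{\bs{n}})$; applying it separately to the pressure, flow, and injection blocks and using $\eta_{\min}(T_a\bs{G})=T_a\,\eta_{\min}(\bs{G})$ yields \eqref{exact_relaxed_final_condi} directly. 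So the paper outsources all the analysis to \cite{so2010probabilistic}.

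By contrast you attempt a self-contained proof via Schur complement and an explicit rank-one rounding $\tilde\omega_i=\mathrm{sign}(\omega_i)\sqrt{\bs{X}_{ii}}$. This is in spirit how results like So's are proved, and your decomposition of $\tilde g$ into $\ln g(\bs{v}|\bs{\omega},\bs{A})-\tfrac12\Tr{\bs{M}_0\bs{\Sigma}}$ is correct and illuminating. However, the treatment of the coupled flow block is where your argument is not yet a proof: you say you ``would bound the damage the rounding inflicts\ldots rescaled by the smallest nonzero eigenvalue,'' but you do not produce the actual inequality showing that the off-diagonal change $2\sum_{i\neq j}[\bs{M}'_{22}]_{ij}(\tilde\omega_i\tilde\omega_j-\bs{X}_{ij})$ is dominated by $\Tr{\bs{M}'_{22}\bs{\Sigma}}$ under the third branch of \eqref{exact_relaxed_final_condi}. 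That step is exactly the technical content of \cite[Prop.~5]{so2010probabilistic}, and it requires a careful argument (controlling the cross terms via $\eta_{\min}$ and the noise accumulator simultaneously) rather than the heuristic you give. If you want a self-contained proof, you should either reproduce that lemma for the block $\bs{A}\diag{\bs{\delta}_q}\bs{A}^T$ or, as the paper does, cite it.
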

\begin{proof}
	The proof is shown in Appendix \ref{app:proofOfTheoremAsympEquivalent}.
\end{proof}

The essence of our proof shows our constrained optimization problem is a member of a more general class of problems considered in Proposition 5 in \cite{so2010probabilistic}. We can also give an intuitive justification to Theorem \ref{thm_relaxed_is_exact}. 
When we solve each optimization in \eqref{relaxed_GLRT_test}  (the relaxed problem) for a sufficiently large $ T_a $, the rank of the optimal solution for $ \bs{X} $ is always $ 1 $. 
From (26), it can be seen that the deterministic components of $  \bs{M}(\bs{A},\bs{v})  $ will be much larger than the random components for a sufficiently large $ T_a $. The dominance becomes even more significant as $ T_a $ is increased. By ignoring the much smaller random parts of $ \bs{M}(\bs{A},\bs{v}) $, the solution to each optimization in \eqref{relaxed_GLRT_test}  (the relaxed problem) is easily shown to be unit rank.
By comparing Theorem 1 and Definition 1, the rank-1 solutions are exactly the optimal solutions to each optimization in \eqref{equ_log_GLRT_test} (the unrelaxed standard problem). Thus the relaxed problem gives the solution for the unrelaxed problem for a sufficiently large $ T_a $. In Section VI, we provide numerical results which show the rank of $ \bs{X} $ is $ 1 $ for a sufficiently large $ T_a $ for an example.

\newcommand{\caltilG}{\mathcal{\tilde{G}}}

\subsection{An Efficient Topology Verification Algorithm} \label{eq:effiTopoVerAlgo}

The following algorithm is a more efficient way to solve \eqref{def_Set_H1} as opposed to employing \eqref{relaxed_GLRT_test}. Define $ \caltilG = \sup_{\begin{subarray}{c} \\\bs{X}\in\Omega\left({\bs{A}_{H_0}}\right),\hpt\rank{\bs{X}} = 1\end{subarray}}\tilde{g}({\bs{A}_{H_0}},\bs{X},\bs{v}) $.

{\bf Step 1: } First check if the observed data $ \bs{v} $ produces a sufficiently small value of $ \caltilG $, which implies the data does not fit the assumed model if $ H_0 $ is true, so $ H_0 $ is not true. This allows us to decide $ H_1 $ must be true without further calculations. If $ \caltilG $ is small, then from \eqref{def_tilde_g}, $ \inf_{\begin{subarray}{c} \\\bs{X}\in\Omega\left({\bs{A}_{H_0}}\right)\\\rank{\bs{X}} = 1\end{subarray}}\Tr{\frac{1}{2}\bs{M}({\bs{A}},\bs{v})\bs{X}} $ is larger than zero. Thus, sufficiently small $ \caltilG $ implies
\begin{align}
	\inf_{\begin{subarray}{c} \\\bs{X}\in\Omega\left({\bs{A}_{H_0}}\right)\\\rank{\bs{X}} = 1\end{subarray}}\Tr{\frac{1}{2}\bs{M}({\bs{A}},\bs{v})\bs{X}} > \epsilon, \label{eq:testFitness}
\end{align}
for a suitable positive $ \epsilon $. Thus we can use \eqref{eq:testFitness} to decide $ H_1 $ without further calculations.
It is easily shown from Theorem \ref{thm_relaxed_is_exact} that if $ H_0 $ is true, the left hand side of \eqref{eq:testFitness} will approach $ 0 $ as $ T_a\rightarrow\infty $. Thus \eqref{eq:testFitness} will not ever be true if $ T_a $ is sufficiently large and $ H_0 $ is true. Thus if \eqref{eq:testFitness} is true, we stop and decide $ H_1 $ is true.

{\bf Step 2: } If \eqref{eq:testFitness} is not true, then we employ \eqref{relaxed_GLRT_test}. We set $ \bs{A} = \bs{A}_{H_0} $ and employ the gradient guided search in \cite{hu2000gradient} to solve for $ \bs{A} $ in the first term in \eqref{relaxed_GLRT_test}. The gradient guided search will first try all $ \bs{A} $ which differ from $ \bs{A}_{H_0} $ by a one link change. It will pick the one link change that increases the first term in \eqref{relaxed_GLRT_test} the most. If none of the one link changes increase the first term in \eqref{relaxed_GLRT_test}, it will stop and evaluate \eqref{relaxed_GLRT_test}. If it finds a one link change that increases the first term of \eqref{relaxed_GLRT_test}, it picks the one link change with the largest increase and repeats this procedure. If all possible topologies have been tested, the algorithm will stop and evaluate \eqref{relaxed_GLRT_test}. 

Now we explain why the algorithm is efficient. If the topology is changed significantly (true $ \bs{A} $ very different from $ \bs{A}_{H_0} $), then the algorithm will stop during step 1 and no search over the topology $ \bs{A} $ is needed. This will occur when the attacker tries to cause significant changes which is typically going to cause significant problems if the operator uses this incorrect network in planning. If the topology undergoes only small changes, then the gradient guided search will find the correct topology quickly. Thus, with appropriate $ \epsilon $, the algorithm will always stop quickly. We illustrate with numerical results this and discuss how to pick an appropriate threshold $ \epsilon $ in Section \ref{sec_num_simulation}.

We end this section by describing a minimum requirement on the sensor placement. 
\begin{definition}[Proper Sensor Placement]\label{lemma_app_sensor_placement}
	A minimum requirement for a proper sensor placement for the algorithms given in Theorem \ref{Stnd_GLRT_THM} and Definition \ref{Relaxed_GLRT_THM} is that the placement ensures the constrained maximum likelihood estimate of $ \bs{\theta} $ is consistent (converges to the correct value as $ {T_a}\rightarrow\infty $) under hypothesis $ H_i,i\in\{0,1\} $ when the data used for the estimation is from $ H_i,i\in\{0,1\}$.
\end{definition}
In the sequel, we consider only sensor placements which satisfy {\bf Definition \ref{lemma_app_sensor_placement}}. At the end of the next section, we provide a direct calculation to make sure {\bf Definition} 2 is satisfied.

\section{Asymptotic Performance Approximation}\label{sec_asym_perf}

In this section, we study the asymptotic performances for the Standard GLRT and the Relaxed GLRT.  
Define the Fisher Information Matrix (FIM) for estimating $ \bs{\omega} $ given a general $ \bs{A} $ as
\begin{align}
\bs{J}(\bs{A},\bs{\omega})=E_{\bs{\omega}}\left[\frac{\partial}{\partial\bs{\omega}^T}\ln g(\bs{v}|\bs{A},\bs{\omega})\frac{\partial}{\partial\bs{\omega}}\ln g(\bs{v}|\bs{A},\bs{\omega}) \right] \label{def_FIM},
\end{align} 
where $ E_{\bs{\omega}}(\bullet) $ denotes the expected value computed by averaging using the pdf in (\ref{e14})  for the assumed value of $ \bs{\omega} $. Let $ \bs{F}(\bs{A},\bs{\omega}) = \frac{\partial}{\partial\bs{\omega}}\bs{f}(\bs{A},\bs{\omega}) $ where $ \bs{f}(\bs{A},\bs{\omega}) $ was defined in \eqref{con_func_stan_GLRT}. Then, for a general $ \bs{A} $ define the matrix $ \bs{U}(\bs{A},\bs{\omega})\in\mathbb{R}^{(N+L)\times \rank{\bs{F}(\bs{A},\bs{\omega})}} $ as the solution to 
\begin{align}
\bs{F}(\bs{A},\bs{\omega})\bs{U}(\bs{A},\bs{\omega})  = \bs{0} \hspace{7pt}\text{such that}\hspace{7pt} \bs{U}^T(\bs{A},\bs{\omega})\bs{U}(\bs{A},\bs{\omega}) = \bs{I}. \label{def_U}
\end{align}
Using these definitions, we define the constrained Cramer Rao Bound as 
\begin{align}
\text{CCRB}(\bs{A},\bs{\omega})=&\bs{U}(\bs{A},\bs{\omega})[\bs{U}^T(\bs{A},\bs{\omega})\notag\\&\hpt\bs{J}(\bs{A},\bs{\omega})\hpt\bs{U}(\bs{A},\bs{\omega})]^{-1}\bs{U}^T(\bs{A},\bs{\omega})\label{def_CCRB}.
\end{align}

\begin{lemma}\label{lemma_choose_wrong_A_go_zero_asym}
	Let $ {\bs{\theta}}^* = ( \vecwang{\bs{A}^*}, \bs{\omega}^* )^T $ be the solution to 
	$ sup_{\bs{\theta}\in\Theta_{H_1}} \ln g(\bs{v}|\bs{\theta}) $ when the data $ \bs{v} $ is obtained under 
	$ H_1 $ from a network with a topology described by $ \bs{A}_{H_1} $ with the noise free network variables $ \bs{\omega}_{H_1} = (\bs{p}^T_{H_1},\hpt\bs{\phi}_{H_1,F}^T,\hpt\bs{\phi}_{H_1,C}^T)^T $.  Given {\bf Definition}~\ref{lemma_app_sensor_placement}, $ {\bs{\theta}}^* $ approaches $ ( \vecwang{\bs{A}_{H_1}}, \bs{\omega}_{H_1} )^T $ as $ {T_a}\rightarrow\infty $.  This implies that the probability that $ \bs{A}^* \ne\bs{A}_{H_1} $ goes to 0 as $ {T_a}\rightarrow\infty $.
\end{lemma}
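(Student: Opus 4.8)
The plan is to exploit two structural features: by {\bf Assumption}~\ref{Assumption_2} the observations are i.i.d.\ across the $T_a$ time indices, and the optimization defining $\bs{\theta}^*$ separates into an outer search over the finite set of admissible incidence matrices $\mathcal{A}_{H_1}$ and, for each fixed $\bs{A}$, an inner constrained maximum-likelihood problem over the continuous variables $\bs{\omega}$. Writing the profiled objective as $\mu(\bs{A})$ from \eqref{original_problem}, the discrete part of the estimate satisfies $\bs{A}^* = \arg\max_{\bs{A}\in\mathcal{A}_{H_1}}\mu(\bs{A})$, so it suffices to show that $\bs{A}^*=\bs{A}_{H_1}$ with probability approaching one and that the accompanying continuous optimizer converges to $\bs{\omega}_{H_1}$.

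First I would apply the law of large numbers to the normalized objective $\tfrac{1}{T_a}\ln g(\bs{v}|\bs{A},\bs{\omega})$, which, since the summands over $t$ are i.i.d., converges to the per-observation expected log-likelihood evaluated under the true law $g(\cdot|\vecwang{\bs{A}_{H_1}},\bs{\omega}_{H_1})$. Because the noise variances are fixed and known, this limit equals, up to an additive constant, the negative Kullback--Leibler divergence from the true Gaussian law to the candidate law; the information inequality then shows it is maximized over all $(\bs{A},\bs{\omega})$ with $\bs{f}(\bs{A},\bs{\omega})=\bs{0}$ exactly at parameters reproducing the true means of the measured pressures ($\bs{p}_{H_1}$), flows, and injections. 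The true parameter lies in $\Theta_{H_1}$, since data from $H_1$ forces $\bs{A}_{H_1}\ne\bs{A}_{H_0}$, so it is attainable inside the feasible set of the supremum.

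The crucial step is the separation of the wrong topologies. {\bf Definition}~\ref{lemma_app_sensor_placement} (proper sensor placement) encodes identifiability, so for every admissible $\bs{A}\ne\bs{A}_{H_1}$ no feasible $\bs{\omega}$ can reproduce all the true measured means, which forces the limiting profiled objective at $\bs{A}$ to be strictly smaller than at $\bs{A}_{H_1}$. Because $\mathcal{A}_{H_1}$ is a finite set of incidence matrices, these finitely many strict inequalities yield a uniform positive gap. Combining this gap with the (uniform over the finite topology set) convergence from the law of large numbers shows that, with probability tending to one, $\mu(\bs{A}_{H_1})$ strictly exceeds $\mu(\bs{A})$ for every other $\bs{A}\in\mathcal{A}_{H_1}$; hence $\bs{A}^*=\bs{A}_{H_1}$ eventually, and standard constrained-MLE consistency under the now-correct topology sends the continuous optimizer to $\bs{\omega}_{H_1}$, so $\bs{\theta}^*\to(\vecwang{\bs{A}_{H_1}},\bs{\omega}_{H_1})^T$.

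I expect the main obstacle to be the identifiability argument producing the strict inequality for each wrong topology; this is precisely the content abstracted into {\bf Definition}~\ref{lemma_app_sensor_placement}, and the cleanest route is to take that consistency hypothesis as given and convert it into the discrete statement. For the final implication, since the entries of any admissible $\bs{A}$ take values in $\{-1,0,+1\}$, distinct topologies differ in some entry by at least $1$; therefore convergence of $\vecwang{\bs{A}^*}$ to $\vecwang{\bs{A}_{H_1}}$ in probability gives $P\!\left(\|\vecwang{\bs{A}^*}-\vecwang{\bs{A}_{H_1}}\|_\infty\ge \tfrac{1}{2}\right)\to 0$, which is exactly $P(\bs{A}^*\ne\bs{A}_{H_1})\to 0$.
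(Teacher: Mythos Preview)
Your proposal is correct but substantially more elaborate than the paper's own argument. The paper's proof is a single sentence: ``The proof follows from the definition of consistency.'' The point is that {\bf Definition}~\ref{lemma_app_sensor_placement} does not merely encode identifiability, as you phrase it; it directly \emph{assumes} that the constrained MLE of $\bs{\theta}$ over $\Theta_{H_i}$ is consistent when the data come from $H_i$. Since under $H_1$ the true parameter $(\vecwang{\bs{A}_{H_1}},\bs{\omega}_{H_1})^T$ lies in $\Theta_{H_1}$, consistency of $\bs{\theta}^*$ is literally the hypothesis, and the discrete consequence $P(\bs{A}^*\ne\bs{A}_{H_1})\to 0$ follows from your final paragraph's observation that distinct incidence matrices are separated by at least $1$ in $\|\cdot\|_\infty$.

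What you have written is essentially a sketch of \emph{why} such consistency should hold (i.i.d.\ structure, law of large numbers on the normalized log-likelihood, Kullback--Leibler separation, finiteness of $\mathcal{A}_{H_1}$ to get a uniform gap). That is a genuine argument and buys an explanation the paper does not supply; it would be appropriate if Definition~\ref{lemma_app_sensor_placement} were stated as an identifiability condition rather than a consistency assumption. Given the definition as written, however, the lemma is immediate and your machinery, while sound, is not needed here. It is worth noting that the paper defers the substantive work to the next lemma (Lemma~\ref{lemma_new}), where a finite-$T_a$ version is argued via KKT conditions and bias/variance reasoning rather than the KL route you outline.
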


The proof follows from the definition of consistency.   Thus for $ {T_a}\rightarrow\infty $, all the estimates must be exactly the true values in the system.  

\begin{lemma}\label{lemma_new}
	For the case in Lemma~\ref{lemma_choose_wrong_A_go_zero_asym}, for a sufficiently large but finite $T_a$, the probability that $ \bs{A}^* \ne\bs{A}_{H_1} $ can be made arbitrary close to zero. 
\end{lemma}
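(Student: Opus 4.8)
The plan is to read Lemma~\ref{lemma_new} as the finite-sample (large-but-finite $T_a$) restatement of the limit already established in Lemma~\ref{lemma_choose_wrong_A_go_zero_asym}, and to supply the $\epsilon$--$T_a$ quantification that a limit statement only asserts in the $T_a\to\infty$ regime. By Lemma~\ref{lemma_choose_wrong_A_go_zero_asym} together with {\bf Definition}~\ref{lemma_app_sensor_placement}, the constrained ML estimate $\bs{\theta}^*=(\vecwang{\bs{A}^*},\bs{\omega}^*)^T$ is consistent, i.e.\ $\bs{\theta}^*\to(\vecwang{\bs{A}_{H_1}},\bs{\omega}_{H_1})^T$ in probability as $T_a\to\infty$. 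The first step is to exploit the discreteness of the topology component: the entries of $\bs{A}$ lie in $\{-1,0,+1\}$, so any $\bs{A}\ne\bs{A}_{H_1}$ has at least one entry differing by an integer of magnitude $\ge 1$, giving $\|\vecwang{\bs{A}}-\vecwang{\bs{A}_{H_1}}\|\ge 1$. Since $\vecwang{\bs{A}}$ is a sub-vector of $\bs{\theta}$, this yields the inclusion $\{\bs{A}^*\ne\bs{A}_{H_1}\}\subseteq\{\|\bs{\theta}^*-(\vecwang{\bs{A}_{H_1}},\bs{\omega}_{H_1})^T\|\ge 1\}$.

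Next I would invoke the definition of convergence in probability for a radius $\epsilon_0\in(0,1)$: consistency gives $P(\|\bs{\theta}^*-(\vecwang{\bs{A}_{H_1}},\bs{\omega}_{H_1})^T\|\ge\epsilon_0)\to 0$, so for every target $\delta>0$ there is a \emph{finite} $T_0(\delta)$ with $P(\|\bs{\theta}^*-(\vecwang{\bs{A}_{H_1}},\bs{\omega}_{H_1})^T\|\ge\epsilon_0)<\delta$ for all $T_a>T_0(\delta)$. Chaining this with the inclusion above gives $P(\bs{A}^*\ne\bs{A}_{H_1})<\delta$ for all finite $T_a>T_0(\delta)$; since $\delta$ is arbitrary, the probability is driven arbitrarily close to zero already at a finite number of observations, which is exactly the claim.

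If a more constructive, self-contained bound is wanted, I would instead argue by a union bound over the finitely many candidate topologies. Writing the profile log-likelihood $\ell_{T_a}(\bs{A})=\sup_{\bs{\omega}:\bs{f}(\bs{A},\bs{\omega})=\bs{0}}\ln g(\bs{v}|\bs{A},\bs{\omega})$, the event $\{\bs{A}^*\ne\bs{A}_{H_1}\}$ lies in $\bigcup_{\bs{A}\ne\bs{A}_{H_1}}\{\ell_{T_a}(\bs{A})\ge\ell_{T_a}(\bs{A}_{H_1})\}$. Because $\ln g(\bs{v}|\bs{A},\bs{\omega})$ is a sum of $T_a$ i.i.d.\ per-observation terms, $\tfrac{1}{T_a}\ell_{T_a}(\bs{A})$ concentrates around its mean $\bar\ell(\bs{A})$, and the proper sensor placement of {\bf Definition}~\ref{lemma_app_sensor_placement} (through identifiability) supplies a strict gap $\gamma=\min_{\bs{A}\ne\bs{A}_{H_1}}[\bar\ell(\bs{A}_{H_1})-\bar\ell(\bs{A})]>0$. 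Each wrong-topology event then forces a normalized deviation of at least $\gamma$, so its probability decays with $T_a$; the union bound over the finite topology set together with a choice of $T_a$ large but finite pushes the total below any prescribed $\delta$.

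The main obstacle I anticipate is not the probabilistic bookkeeping but justifying the strict identifiability gap $\gamma>0$ for every incorrect topology --- equivalently, confirming that the consistency granted by {\bf Definition}~\ref{lemma_app_sensor_placement} genuinely singles out $\bs{A}_{H_1}$ as the \emph{unique} maximizer of the expected constrained log-likelihood, and that the supremum over the continuous nuisance parameter $\bs{\omega}$ subject to $\bs{f}(\bs{A},\bs{\omega})=\bs{0}$ is controlled uniformly enough that a uniform (rather than merely pointwise) law of large numbers applies to $\ell_{T_a}(\bs{A})$. In the direct route of the first two paragraphs this obstacle is already absorbed into Lemma~\ref{lemma_choose_wrong_A_go_zero_asym}, so the discreteness of $\bs{A}$ does all the remaining work; I would therefore lead with that route and offer the union-bound argument only as the quantitative refinement.
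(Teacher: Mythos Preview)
Your proposal is correct, but the route differs substantially from the paper's. Your primary argument is the clean observation that Lemma~\ref{lemma_new} is nothing more than the $\epsilon$--$\delta$ unpacking of the limit already stated in Lemma~\ref{lemma_choose_wrong_A_go_zero_asym}, with discreteness of $\bs{A}$ turning convergence of $\bs{\theta}^*$ into an event inclusion; this is valid and arguably the most economical proof. The paper instead gives a constructive, mechanism-based argument: it invokes the KKT conditions for the constrained MLE to show explicitly that when $\bs{A}=\bs{A}_{H_1}$ the estimator $\bs{\omega}^*$ is unbiased with covariance scaling like $1/T_a$ (displaying \eqref{mean_unbiased}--\eqref{cov_unbiased}), while for any $\bs{A}\ne\bs{A}_{H_1}$ the estimator is biased with the same $1/T_a$ variance scaling; it then argues that the squared-norm terms in \eqref{likelihood_f} concentrate near zero under the correct $\bs{A}$ and near a nonzero bias under a wrong $\bs{A}$, so the overlap of the two distributions can be made as small as desired. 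Your union-bound alternative is closer in spirit to this, but the paper works at the level of the estimator's bias/variance rather than appealing to a uniform LLN and an abstract identifiability gap $\gamma$. What your approach buys is brevity and a fully rigorous chain back to Definition~\ref{lemma_app_sensor_placement}; what the paper's approach buys is an explicit picture of the separation mechanism and an implicit $1/T_a$ rate, at the cost of a more heuristic final step about ``overlap of pdfs.''
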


\begin{proof}

Based on the Karush Kuhn Tucker (KKT) conditions we can show that if $ \bs{A}=\bs{A}_{H_1} $, all the elements of $ \bs{\omega}^* $ are unbiased and their variances are inversely proportional to $ {T_a} $. Thus considering $ \bs{\phi}^* $ when $ \bs{A}=\bs{A}_{H_1} $, we can show 
\begin{align}
	E\{ &\bs{\phi}^* \}=[\sigma_\phi^2\bs{A}\hpt\diag{\bs{\delta}_q}\hpt\bs{A}^T + \sigma_q^2\hpt\diag{\bs{\delta}_\phi}]^\dagger\notag\\	&[\sigma_\phi^2\bs{A}\hpt\diag{\bs{\delta}_q}\hpt\bs{A}+\sigma_q^2\hpt\diag{\bs{\delta}_\phi}]\bs{\phi}_{H_1}=\bs{\phi}_{H_1},\label{mean_unbiased}
\end{align}
and
\begin{align}
\text{Cov}[\bs{\phi}^*] 
=\frac{\sigma_\phi^2\sigma_q^2}{T_a}[\sigma_\phi^2\bs{A}\diag{\bs{\delta}_q}\bs{A}^T + \sigma_q^2\diag{\bs{\delta}_\phi}]^{\dagger T}. \label{cov_unbiased}
\end{align}
We can show similar equations for the other elements of $ \bs{\omega}^* $.
Now if $ \bs{A}\ne\bs{A}_{H_1} $, the KKT conditions show that all the elements of $ \bs{\omega}^* $ are biased and still their variance are inversely proportional to $ {T_a} $. 

When $ \bs{A} = \bs{A}_{H_1} $, all the scalar parameter estimates of the components of $ \bs{\omega} $, other than $ \bs{A} $, are unbiased and the variance of each parameter estimate becomes smaller as ${T_a}$ increases.  Thus, the parameter estimates get better and better as ${T_a}$ increases since their distribution becomes more closely focused around the correct value. This makes all the squared norm terms in $ \ln g(\bs{v}|\bs{\theta}) $ in \eqref{likelihood_f} very close to zero so the likelihood $ g(\bs{v}|\bs{\theta}) $ is as large as it can be.  Note that the squared norm term involving $\tilde{\bs{q}}[t]$ in $ \ln g(\bs{v}|\bs{\theta}) $ is also a narrow Gaussian centered on zero (becoming more narrow as ${T_a}$ increases) so the argument to this square norm is also very close to zero. This is what happens if $ \bs{A} = \bs{A}_{H_1} $.

Now if $ \bs{A}\ne\bs{A}_{H_1} $, then the mean vector of the estimate of $ \bs{\omega} $ will be biased and the arguments to the squared norm terms of $ \ln g(\bs{v}|\bs{\theta}) $ in \eqref{likelihood_f} are also biased and again these terms are Gaussian.  Now as ${T_a}$ is increased the variances of each of these terms will decrease so the distribution becomes more closely focused around an incorrect value.  At a large enough ${T_a}$, the squared error in these term with bias ($ \bs{A} \ne \bs{A}_{H_1} $) will be be much larger than the unbiased error in the estimates when $ \bs{A} = \bs{A}_{H_1} $ and all the variances are small.  In fact at large enough ${T_a}$, the overlap between the probability density functions of the biased terms and the unbiased terms will be small and we can make this overlap as small as we like by increasing ${T_a}$.  This makes the probability that the likelihood with $ \bs{A} = \bs{A}_{H_1} $ is larger than that with $ \bs{A} \ne \bs{A}_{H_1} $ true with a probability near unity which we can make as close as we like to unity by increasing ${T_a}$. This completes the proof.
\end{proof} 


Based on Lemma~\ref{lemma_new}, under $H_1$ we must have $ \bs{A}^* = \bs{A}_{H_1} $ with high probability for sufficiently large ${T_a}$ so (\ref{log_GLRT_test}) is asymptotically equivalent with high probability to 
\begin{align}
\sup_{\bs{\omega}\in\mathcal{A}_{\omega_1}}\ln g(\bs{v}|\bs{A}_{H_1},\bs{\omega}) - \sup_{\bs{\omega}\in\mathcal{A}_{\omega_0}}\ln g(\bs{v}|\bs{A}_{H_0},\bs{\omega}) \LRT{H_1}{H_0}\ln\rho. \label{equ_GLRT_test}
\end{align}
in which $ \mathcal{A}_{\omega_i} = \left\{ \bs{\omega}: f(\bs{A}_{H_i}, \bs{\omega}) = \bs{0} \right\},i\in\{0,1\} $.  Further simplification results by using the fact that such a test is asymptotically equivalent to the corresponding Wald test as described in the following theorem. 
\begin{theorem}\label{GLRT_App_Wald_THM}
	Under $H_1$, for $ {T_a}\rightarrow\infty $ the test in (\ref{equ_GLRT_test}) becomes equivalent to the test 
	\begin{align}
	\left[\hat{\bs{\omega}}-\bs{\omega}^*\left(\bs{A}_{H_0}\right)\right]^T&\text{CCRB}^\dagger\left(\bs{A}_{H_0},\bs{\omega}^*\left(\bs{A}_{H_0}\right)\right)\notag\\&\left[\hat{\bs{\omega}}-\bs{\omega}^*\left(\bs{A}_{H_0}\right)\right]\LRT{H_1}{H_0}\ln\rho, \label{cons_wald_test}
	\end{align}
	where $ \text{CCRB}^\dagger\left(\bs{A}_{H_0},\bs{\omega}^*\left(\bs{A}_{H_0}\right)\right) $ denotes the pseudo-inverse of (\ref{def_CCRB}), $ \hat{\bs{\omega}} = \arg\sup_{\omega\in\mathcal{A}_{\omega_1}}\ln g(\bs{v}|\bs{A}_{H_1},\bs{\omega}) $, and $ \rho $ is chosen to fix the false alarm probability.
\end{theorem}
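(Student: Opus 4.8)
The plan is to exploit a structural feature of this Gaussian problem: for a fixed topology $ \bs{A} $, the log-likelihood in \eqref{likelihood_f} is \emph{exactly} quadratic in $ \bs{\omega} $, and because every sensor mean is linear in $ \bs{\omega} $ its Hessian is the constant matrix $ -\bs{J}(\bs{A},\bs{\omega}) $ of \eqref{def_FIM}, independent of $ \bs{\omega} $. The asymptotic analysis therefore reduces to comparing two constrained quadratic programs, and the target identity \eqref{cons_wald_test} is an instance of the classical likelihood-ratio/Wald/score equivalence adapted to the constrained estimator of \eqref{def_U}--\eqref{def_CCRB}. The equivalence is to be read in the contiguous regime that underlies the non-central chi-squared performance analysis of this section, where the two constrained maximizers lie within $ O_p(T_a^{-1/2}) $ of one another; by Lemma~\ref{lemma_new} the first supremum in \eqref{equ_GLRT_test} is attained at the true topology $ \bs{A}_{H_1} $ with probability approaching one, so $ \hat{\bs{\omega}} $ may be treated as the constrained ML estimate under the correct model.

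First I would write the statistic in \eqref{equ_GLRT_test} as $ \ell_1(\hat{\bs{\omega}}) - \ell_0(\bs{\omega}^*) $ with $ \ell_i(\bs{\omega}) = \ln g(\bs{v}\,|\,\bs{A}_{H_i},\bs{\omega}) $ and $ \bs{\omega}^* = \bs{\omega}^*(\bs{A}_{H_0}) $, noting from \eqref{likelihood_f} that $ \ell_0 $ and $ \ell_1 $ coincide except in the injection-residual term through $ \bs{A}^T\bs{\phi} $. Inserting the bridge value $ \ell_1(\bs{\omega}^*) $ gives $ G = [\ell_1(\hat{\bs{\omega}}) - \ell_1(\bs{\omega}^*)] + [\ell_1(\bs{\omega}^*) - \ell_0(\bs{\omega}^*)] $, and the first bracket is expanded exactly about the constrained maximizer $ \hat{\bs{\omega}} $. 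Since the Karush--Kuhn--Tucker conditions for \eqref{original_problem} place $ \nabla\ell_1(\hat{\bs{\omega}}) $ in the row space of $ \bs{F}(\bs{A}_{H_1},\hat{\bs{\omega}}) $, normal to the tangent directions, and since $ \bs{J}(\bs{A},\bs{\omega}) $ is constant in $ \bs{\omega} $, the expansion terminates at second order, leaving the quadratic $ \tfrac{1}{2}(\hat{\bs{\omega}}-\bs{\omega}^*)^T\bs{J}(\bs{A}_{H_1},\hat{\bs{\omega}})(\hat{\bs{\omega}}-\bs{\omega}^*) $ together with a normal cross-term and the topology gap $ \ell_1(\bs{\omega}^*)-\ell_0(\bs{\omega}^*) $.

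Next I would collapse the quadratic piece onto the constraint surface. Parametrizing feasible deviations as $ \bs{\omega}-\bs{\omega}^* = \bs{U}\,\bs{\beta} $ with $ \bs{U} = \bs{U}(\bs{A}_{H_0},\bs{\omega}^*) $ from \eqref{def_U} replaces $ \bs{J} $ by its projection $ \bs{U}^T\bs{J}\,\bs{U} $, which is invertible by the proper-sensor-placement requirement of Definition~\ref{lemma_app_sensor_placement}. Using $ \bs{U}^T\bs{U}=\bs{I} $ and the identity $ \big(\bs{U}[\bs{U}^T\bs{J}\,\bs{U}]^{-1}\bs{U}^T\big)^{\dagger} = \bs{U}[\bs{U}^T\bs{J}\,\bs{U}]\bs{U}^T $, which holds precisely because $ \bs{U} $ has orthonormal columns, the projected quadratic equals $ (\hat{\bs{\omega}}-\bs{\omega}^*)^T\,\text{CCRB}^{\dagger}(\bs{A}_{H_0},\bs{\omega}^*)\,(\hat{\bs{\omega}}-\bs{\omega}^*) $ from \eqref{def_CCRB}, matching the left side of \eqref{cons_wald_test}; the constant $ \tfrac{1}{2} $ and any leading factor are absorbed into the recalibrated threshold $ \ln\rho $, which is in any case reset to fix $ P_{fa} $.

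The main obstacle is that the two suprema optimize \emph{different} objectives over \emph{different} constraint manifolds, so the decomposition above leaves two non-quadratic remainders---the normal cross-term $ \nabla\ell_1(\hat{\bs{\omega}})^T(\hat{\bs{\omega}}-\bs{\omega}^*) $ and the topology gap $ \ell_1(\bs{\omega}^*)-\ell_0(\bs{\omega}^*) $, the latter being exactly the injection-residual difference that carries the detection power. Under contiguity both remainders are $ O_p(1) $, of the same order as the quadratic itself, so the crux is not that each vanishes but that they \emph{combine and cancel}: the cross-term is of this order only because $ \bs{\omega}^* $ lies off the $ H_1 $ manifold by an amount proportional to the topology change, and I must show that this off-manifold component, paired against the normal gradient, reproduces and cancels the injection-residual gap. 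Making this cancellation rigorous---equivalently, showing that the likelihood ratio reduces to the squared whitened distance between the two fitted injection means and that $ \bs{J}(\bs{A}_{H_1},\cdot) $ may be replaced by the null-evaluated $ \bs{J}(\bs{A}_{H_0},\bs{\omega}^*) $ with only $ o_p(1) $ error---together with careful treatment of the Moore--Penrose pseudoinverse of the rank-deficient $ \text{CCRB} $, is the technical heart of the proof; the exact quadraticity of the likelihood and the $ \bs{\omega} $-independence of $ \bs{J}(\bs{A},\bs{\omega}) $ are what keep every remainder estimate tractable.
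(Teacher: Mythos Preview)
The paper's own proof is a one-line citation: it invokes the constrained Wald/GLRT equivalence of Moore and Sadler together with Appendix~6A of Kay, and stops. Your proposal takes a genuinely different route, attempting a direct derivation that exploits the exact quadraticity of $\ln g(\bs{v}\,|\,\bs{A},\bs{\omega})$ in $\bs{\omega}$ and the $\bs{\omega}$-independence of $\bs{J}(\bs{A},\bs{\omega})$. That is a sensible and more self-contained strategy, and your bridge decomposition $G=[\ell_1(\hat{\bs{\omega}})-\ell_1(\bs{\omega}^*)]+[\ell_1(\bs{\omega}^*)-\ell_0(\bs{\omega}^*)]$ together with the pseudoinverse identity $(\bs{U}[\bs{U}^T\bs{J}\bs{U}]^{-1}\bs{U}^T)^{\dagger}=\bs{U}[\bs{U}^T\bs{J}\bs{U}]\bs{U}^T$ is exactly the algebra that later appears in the proof of Theorem~\ref{Asymptotic_Performance_of_Standard_GLRT}.

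There is, however, a real gap where you yourself flag it. The two suprema in \eqref{equ_GLRT_test} use different $\bs{A}$ matrices, so $\hat{\bs{\omega}}-\bs{\omega}^*(\bs{A}_{H_0})$ does not lie in the column space of $\bs{U}(\bs{A}_{H_0},\bs{\omega}^*)$, and the KKT normal-cross term plus the topology gap $\ell_1(\bs{\omega}^*)-\ell_0(\bs{\omega}^*)$ do not individually vanish. You assert that they cancel under contiguity, but the topology discrepancy $\bs{A}_{H_1}\neq\bs{A}_{H_0}$ is \emph{fixed}, not a Pitman-local alternative shrinking at rate $T_a^{-1/2}$; consequently $\bs{J}(\bs{A}_{H_1})-\bs{J}(\bs{A}_{H_0})$ is $O(T_a)$, not $o_p(1)$, and the ``replace $\bs{J}(\bs{A}_{H_1})$ by $\bs{J}(\bs{A}_{H_0})$'' step needs a different justification than a contiguity remainder bound. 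The paper sidesteps this entirely by importing the constrained-CRB machinery wholesale; if you want your direct route to go through, you will need to argue the cancellation explicitly (e.g.\ by writing both $\ell_0$ and $\ell_1$ in terms of the common injection residuals and showing the difference collapses to the whitened squared mean shift), rather than leaving it as the ``technical heart'' to be filled in.
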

\begin{proof}
	The proof follows from the result in \cite{moore2010constrained} combined with Appendix 6A in \cite{kay1998fundamentals}. 
\end{proof}

The following lemma describes the distribution of (\ref{cons_wald_test}).

\begin{lemma}[Distribution of (\ref{cons_wald_test})]\label{C_chi_square_lemma}
	Define 
	$ \lambda \allowbreak= \allowbreak[\bs{\omega}^*(\bs{A}_{H_1}) \allowbreak-\allowbreak \bs{\omega}^*\left(\bs{A}_{H_0}\right)]^T\allowbreak\text{CCRB}^\dagger(\bs{A}_{H_0},\allowbreak \bs{\omega}^*(\bs{A}_{H_1}))\allowbreak\left[\bs{\omega}^*(\bs{A}_{H_1}) - \bs{\omega}^*\left(\bs{A}_{H_0}\right)\right] $. Then, under $ H_1 $ the test statistic in (\ref{cons_wald_test}) asymptotically (large $ {T_a} $) follows the non-central chi-squared distribution with $ L+N $ degrees of the freedom and non-centrality parameter $ \lambda $.
\end{lemma}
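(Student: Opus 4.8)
The plan is to derive the claimed law by combining the asymptotic normality of the constrained maximum likelihood estimator with the classical distributional result for quadratic forms in Gaussian vectors. First I would invoke the constrained ML asymptotic theory underlying Theorem~\ref{GLRT_App_Wald_THM} (the result of \cite{moore2010constrained} together with Appendix~6A of \cite{kay1998fundamentals}): since the sensor placement satisfies \textbf{Definition}~\ref{lemma_app_sensor_placement}, the estimator $\hat{\bs{\omega}} = \arg\sup_{\bs{\omega}\in\mathcal{A}_{\omega_1}}\ln g(\bs{v}|\bs{A}_{H_1},\bs{\omega})$ is consistent and asymptotically efficient, so that for large $T_a$
\begin{align}
\hat{\bs{\omega}} \;\sim\; \mathcal{N}\!\left(\bs{\omega}^*(\bs{A}_{H_1}),\; \text{CCRB}\!\left(\bs{A}_{H_1},\bs{\omega}^*(\bs{A}_{H_1})\right)\right).
\end{align}
This is the key input: it reduces the Wald statistic in \eqref{cons_wald_test} to a pure computation about a single Gaussian vector.

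Second, I would center the statistic. Writing $\bs{d} = \hat{\bs{\omega}}-\bs{\omega}^*(\bs{A}_{H_0})$ and splitting
\begin{align}
\bs{d} = \left[\hat{\bs{\omega}}-\bs{\omega}^*(\bs{A}_{H_1})\right] + \left[\bs{\omega}^*(\bs{A}_{H_1})-\bs{\omega}^*(\bs{A}_{H_0})\right]
\end{align}
exhibits $\bs{d}$ as asymptotically Gaussian with mean $\bs{\mu}=\bs{\omega}^*(\bs{A}_{H_1})-\bs{\omega}^*(\bs{A}_{H_0})$ and covariance $\bs{\Sigma}=\text{CCRB}(\bs{A}_{H_1},\bs{\omega}^*(\bs{A}_{H_1}))$, the first bracket being zero-mean and the second deterministic. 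The test statistic is then the quadratic form $\bs{d}^T\bs{W}\bs{d}$ with $\bs{W}=\text{CCRB}^\dagger(\bs{A}_{H_0},\bs{\omega}^*(\bs{A}_{H_0}))$.

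Third, I would apply the standard theorem on quadratic forms: if $\bs{d}\sim\mathcal{N}(\bs{\mu},\bs{\Sigma})$ and $\bs{W}\bs{\Sigma}$ is idempotent, then $\bs{d}^T\bs{W}\bs{d}$ is non-central chi-squared with $\rank{\bs{W}\bs{\Sigma}}$ degrees of freedom and non-centrality $\bs{\mu}^T\bs{W}\bs{\mu}$. Here I would exploit the explicit projection form in \eqref{def_CCRB}, $\text{CCRB}(\bs{A},\bs{\omega})=\bs{U}(\bs{U}^T\bs{J}\bs{U})^{-1}\bs{U}^T$ with $\bs{U}^T\bs{U}=\bs{I}$, whose pseudo-inverse is $\text{CCRB}^\dagger(\bs{A},\bs{\omega})=\bs{U}(\bs{U}^T\bs{J}\bs{U})\bs{U}^T$, so that $\bs{W}\bs{\Sigma}$ collapses to a product of the orthogonal projectors built from $\bs{U}(\bs{A}_{H_0})$ and $\bs{U}(\bs{A}_{H_1})$. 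Substituting $\bs{\mu}$ into $\bs{\mu}^T\bs{W}\bs{\mu}$ then reproduces the non-centrality $\lambda$ in the statement, after using consistency (Lemma~\ref{lemma_choose_wrong_A_go_zero_asym}) to replace $\bs{\omega}^*(\bs{A}_{H_0})$ by $\bs{\omega}^*(\bs{A}_{H_1})$ inside $\text{CCRB}^\dagger$, an asymptotically harmless substitution.

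The hard part will be the third step: establishing that $\bs{W}\bs{\Sigma}$ is (asymptotically) idempotent and that its rank equals exactly $L+N$. The delicacy is twofold. The weighting $\bs{W}$ is evaluated at $\bs{A}_{H_0}$ while the covariance $\bs{\Sigma}$ is evaluated at $\bs{A}_{H_1}$, so the ranges of the two projectors --- the null spaces of $\bs{F}(\bs{A}_{H_0},\cdot)$ and $\bs{F}(\bs{A}_{H_1},\cdot)$ --- need not coincide, and I would control their alignment through a continuity/local-alternative argument so that $\bs{W}\bs{\Sigma}$ behaves as a projector for consistent estimates and large $T_a$. Moreover, the naive projector product yields the dimension of a single constraint null space rather than the full $\dim(\bs{\omega})=L+N$, so pinning the degrees of freedom to $L+N$ is where essentially all the work lies: it requires a careful rank analysis combining the two distinct constraint surfaces with the Fisher-information structure from \eqref{def_FIM} rather than a one-line projection count. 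Once the correct degrees of freedom and idempotency are secured, the reduction of $\bs{\mu}^T\bs{W}\bs{\mu}$ to $\lambda$ is routine.
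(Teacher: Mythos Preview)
The paper does not actually prove this lemma: its entire proof is the single sentence ``The proof is shown in Appendix 6C in \cite{kay1998fundamentals},'' i.e.\ a bare citation to the standard noncentral $\chi^2$ result for Wald-type statistics. Your proposal is therefore far more detailed than what the paper offers, and the skeleton you lay out---asymptotic normality of the constrained MLE with covariance equal to the CCRB, followed by the quadratic-form-in-Gaussians theorem---is precisely the machinery that underlies Kay's appendix. In that sense your approach and the paper's ``approach'' coincide: you are simply unpacking the reference.

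Your worry in the third step is well placed and is not addressed anywhere in the paper. Note that $\text{CCRB}^\dagger$ has rank equal to the number of columns of $\bs{U}$, which is the dimension of the null space of $\bs{F}$; from \eqref{eq:explicit_F} and the discussion after it, $\bs{F}$ has rank $L$, so this null space has dimension $(N+L)-L=N$, not $N+L$. Thus a naive application of the idempotency/rank argument yields $N$ degrees of freedom, and obtaining the $L+N$ stated in the lemma (and reused in Theorem~\ref{Asymptotic_Performance_of_Standard_GLRT}) genuinely requires extra justification that neither you nor the paper supplies. The paper simply inherits the degrees-of-freedom count from the cited textbook result without reconciling it with the constrained setting and the rank of $\text{CCRB}^\dagger$; your flagging of this as ``where essentially all the work lies'' is the correct diagnosis, and you should not expect to find that work done in the paper.
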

\begin{proof}
	The proof is shown in Appendix 6C in \cite{kay1998fundamentals}.
\end{proof}

Define the Marcum Q function with $ w $ degrees of freedom \cite{sun2010monotonicity} as
\begin{align}
Q_w(a,b) = \int_{b}^{\infty} x\left( \frac{x}{a} \right)^{w-1}\exp\left( -\frac{x^2+a^2}{2} \right)I_{w-1}(ax)dx\label{eq:defMarcumQ}
\end{align}
where $ I_{w-1}(x) $ is the modified Bessel function of order $ w-1 $. Then the inverse of the Marcum Q function $ Q^{-1}_w(c,d) $ is defined as the solution $ y $ to $ Q_w(c,y) = d$ \cite{gil2014asymptotic}. The following theorem employs Lemma \ref{C_chi_square_lemma} to describe the performance of the Standard GLRT for large $ T $.

\begin{theorem}[Asymptotic Performance Approximation of the Standard GLRT] \label{Asymptotic_Performance_of_Standard_GLRT}
	For a sufficiently large $ {T_a} $ and a fixed false alarm probability $ P_{fa} $, an accurate expression for the detection probability for the Standard GLRT is 
	\begin{align}
	P_d = Q_{L+N}(\sqrt{\lambda}, \sqrt{\rho}) \label{Pd_standard}
	\end{align}
	where $ \rho = Q_{L+N}^{-1}(0, P_{fa}) $, $ \bs{U} = \bs{U}(\bs{A}_{H_0},\bs{\omega}^*\left(\bs{A}_{H_0}\right)) $ was defined in (\ref{def_U}),
	\begin{align}
	\lambda = \left[\bs{\omega}^*(\bs{A}_{H_1}) - \bs{\omega}^*\left(\bs{A}_{H_0}\right)\right]^T&\bs{U} \bs{U}^T\bs{J}\left(\bs{A}_{H_0}\right)\bs{U} \bs{U}^T\notag\\&\left[\bs{\omega}^*(\bs{A}_{H_1}) - \bs{\omega}^*(\bs{A}_{H_0})\right]\label{def_lambda},
	\end{align}
	and (\ref{def_FIM}) yields
	\begin{align}
	\bs{J}\left(\bs{A}_{H_0}\right) =  {T_a}\begin{bmatrix}
	\bs{J}'_{pp} & \bs{0} \\
	\bs{0} & \bs{J}'_{\phi\phi}\left(\bs{A}_{H_0}\right)
	\end{bmatrix}\label{fisher_infor_m_all},
	\end{align}
	where the sub-matrices of $ \bs{J}\left(\bs{A}_{H_0}\right) $ are
	\begin{align}
	\bs{J}'_{pp} &= \fracPS{p}\diag{\bs{\delta}_p},\\
	\bs{J}'_{\phi\phi}\left(\bs{A}_{H_0}\right) &= \frac{1}{\sigma_{\phi}^2}\diag{\bs{\delta}_\phi} + \frac{1}{\sigma_q^2}\bs{A}_{H_0}\diag{\bs{\delta}_q}\bs{A}_{H_0}^T.\label{fisher_infor_standard_GLRT_sub_m_end} 
	\end{align}
	Note that $ P_d $ in (\ref{Pd_standard}) increases monotonically with $ {T_a} $. 
\end{theorem}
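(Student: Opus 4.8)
The plan is to assemble the performance expression from the three asymptotic equivalences already established and then reduce the resulting non-central chi-squared tail probability to the Marcum Q form, filling in the two remaining pieces (the Fisher information and the non-centrality parameter) by direct computation. Concretely, Lemma~\ref{lemma_new} lets me replace the Standard GLRT \eqref{log_GLRT_test} by the fixed-topology test \eqref{equ_GLRT_test} with probability approaching one, Theorem~\ref{GLRT_App_Wald_THM} replaces that by the constrained Wald quadratic form \eqref{cons_wald_test}, and Lemma~\ref{C_chi_square_lemma} states that under $H_1$ this quadratic form is asymptotically non-central chi-squared with $L+N$ degrees of freedom and non-centrality $\lambda$. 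The detection probability is then the complementary CDF of that distribution at the threshold, which I would identify with $Q_{L+N}(\sqrt{\lambda},\sqrt{\rho})$ through the standard CCDF--Marcum-Q relationship behind the definition \eqref{eq:defMarcumQ}, with the threshold fixed by the false-alarm constraint to complete \eqref{Pd_standard}.

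Next I would supply the explicit Fisher information \eqref{fisher_infor_m_all}. The key observation is that, for a fixed $\bs{A}$, the means of the three measurement blocks are linear in $\bs{\omega}=(\bs{p}^T,\bs{\phi}_F^T,\bs{\phi}_C^T)^T$: $\tilde{\bs{p}}$ has mean $\bs{p}$, $\tilde{\bs{\phi}}$ has mean $\bs{\phi}$, and $\tilde{\bs{q}}$ has mean $\bs{A}^T\bs{\phi}$. Substituting these constant Jacobians into the Gaussian FIM definition \eqref{def_FIM} and summing the $T_a$ i.i.d. observations yields the standard form in which $\bs{J}(\bs{A})$ is $T_a$ times a Jacobian-transposed, inverse-covariance-weighted Jacobian sum, a quantity that does not depend on $\bs{\omega}$. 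Because no single measurement couples $\bs{p}$ with $\bs{\phi}$, the off-diagonal block vanishes, the $\bs{p}$ block is $\tfrac{T_a}{\sigma_p^2}\diag{\bs{\delta}_p}$, and the $\bs{\phi}$ block collects the flow-sensor term $\tfrac{1}{\sigma_\phi^2}\diag{\bs{\delta}_\phi}$ and the injection-sensor term $\tfrac{1}{\sigma_q^2}\bs{A}\diag{\bs{\delta}_q}\bs{A}^T$, giving exactly \eqref{fisher_infor_standard_GLRT_sub_m_end}. The $\bs{\omega}$-independence of $\bs{J}$ is what allows it to be evaluated once at $\bs{A}_{H_0}$.

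I would then rewrite the non-centrality parameter. Substituting the definition \eqref{def_CCRB} of the constrained CRB into the $\text{CCRB}^\dagger$ of Lemma~\ref{C_chi_square_lemma}, and using that the columns of $\bs{U}$ in \eqref{def_U} are orthonormal ($\bs{U}^T\bs{U}=\bs{I}$), so that $\bs{U}^T\bs{J}\bs{U}$ is a nonsingular square matrix, a short Moore--Penrose verification gives
\begin{align}
\left(\bs{U}\left[\bs{U}^T\bs{J}\bs{U}\right]^{-1}\bs{U}^T\right)^\dagger = \bs{U}\bs{U}^T\bs{J}\bs{U}\bs{U}^T. \label{eq:ccrb_pinv}
\end{align}
Plugging \eqref{eq:ccrb_pinv} into Lemma~\ref{C_chi_square_lemma} produces the quadratic form \eqref{def_lambda}. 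Here I would flag one approximation: Lemma~\ref{C_chi_square_lemma} evaluates the bound at $\bs{\omega}^*(\bs{A}_{H_1})$, whereas \eqref{def_lambda} evaluates $\bs{U}$ at the $H_0$ operating point $\bs{\omega}^*(\bs{A}_{H_0})$; since $\bs{J}$ is $\bs{\omega}$-free, only the Jacobian base point inside $\bs{U}$ differs, and this is the sense in which \eqref{Pd_standard} is an accurate rather than exact predictor.

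Finally, for the threshold and monotonicity, I would use the complementary standard result (cf.~\cite{kay1998fundamentals}) that under $H_0$ the same statistic is central chi-squared ($\lambda=0$), so the false-alarm constraint determines the threshold through the inverse Marcum Q, giving $\rho=Q_{L+N}^{-1}(0,P_{fa})$, a quantity depending only on $P_{fa}$ and the degrees of freedom and hence not on $T_a$ (the $1/T_a$ scaling of the CCRB exactly normalizes the estimation error). Since $\bs{J}(\bs{A}_{H_0})\propto T_a$ while the population difference $\bs{\omega}^*(\bs{A}_{H_1})-\bs{\omega}^*(\bs{A}_{H_0})$ is fixed, the non-centrality $\lambda$ in \eqref{def_lambda} grows linearly in $T_a$; invoking the monotonicity of $Q_{L+N}(a,b)$ in its first argument \cite{sun2010monotonicity} then shows $P_d=Q_{L+N}(\sqrt{\lambda},\sqrt{\rho})$ increases with $T_a$. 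I expect the main obstacle to be the bookkeeping around evaluation points and conventions: reconciling the $\bs{\omega}^*(\bs{A}_{H_1})$ versus $\bs{\omega}^*(\bs{A}_{H_0})$ base point for $\bs{U}$, and matching the Marcum-Q order together with the chi versus chi-squared square-root scaling to the $L+N$ degrees of freedom of Lemma~\ref{C_chi_square_lemma}, so that the composition of the several asymptotic equivalences remains legitimate at the finite $T_a$ for which \eqref{Pd_standard} is claimed accurate.
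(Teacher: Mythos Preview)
Your proposal is correct and follows essentially the same route as the paper: invoke Lemma~\ref{C_chi_square_lemma} to get the non-central chi-squared law, then simplify $\text{CCRB}^\dagger$ to the quadratic form \eqref{def_lambda}. The only difference is in how that simplification is carried out: the paper factors $\bs{U}[\bs{U}^T\bs{J}\bs{U}]^{-1}\bs{U}^T=\bs{E}\bs{E}^T$ with $\bs{E}=\bs{U}[\bs{U}^T\bs{J}\bs{U}]^{-1/2}$ and applies the identity $(\bs{E}\bs{E}^T)^\dagger=\bs{E}(\bs{E}^T\bs{E})^{-2}\bs{E}^T$, whereas you verify the Moore--Penrose conditions directly using $\bs{U}^T\bs{U}=\bs{I}$; both yield \eqref{eq:ccrb_pinv} in one line. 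Your treatment is in fact more complete than the paper's short proof, which does not spell out the FIM derivation \eqref{fisher_infor_m_all}--\eqref{fisher_infor_standard_GLRT_sub_m_end}, the threshold determination under $H_0$, or the monotonicity in $T_a$, and your flag about the $\bs{\omega}^*(\bs{A}_{H_1})$ versus $\bs{\omega}^*(\bs{A}_{H_0})$ evaluation point for $\bs{U}$ is a genuine subtlety the paper glosses over.
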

\begin{proof}
	Using Lemma \ref{C_chi_square_lemma}, under $ H_1 $ the test in (\ref{log_GLRT_test}) asymptotically follows the non-central chi-squared distribution with $ L+N $ degrees of freedom and the non-centrality parameter
	\begin{align}
	\lambda = \left[\bs{\omega}^*(\bs{A}_{H_1}) - \bs{\omega}^*(\bs{A}_{H_0})\right]^T&\left\{\bs{U}\left[ \bs{U}^T\bs{J}(\bs{A}_{H_0})\bs{U} \right]^{-1}\bs{U}^T\right\}^\dagger\notag\\&\left[\bs{\omega}^*(\bs{A}_{H_1}) - \bs{\omega}^*(\bs{A}_{H_0})\right],
	\end{align}
	To resolve the pseudo inverse, note that $ (\bs{E}\bs{E}^T)^\dagger=\bs{E}(\bs{E}^T\bs{E})^{-2}\bs{E}^T $.
	Let $ \bs{E} = \bs{U}\left[\bs{U}^T\bs{J}(\bs{A}_{H_0})\bs{U}\right]^{-1/2} $ then we have 
	\begin{align}
	\lambda = \left[\bs{\omega}^*(\bs{A}_{H_1}) - \bs{\omega}^*(\bs{A}_{H_0})\right]^T&\bs{U} \bs{U}^T\bs{J}(\bs{A}_{H_0})\bs{U}\bs{U}^T\notag\\&\left[\bs{\omega}^*(\bs{A}_{H_1}) - \bs{\omega}^*(\bs{A}_{H_0})\right].\label{proof_final_lambda}
	\end{align}
\end{proof}
From Theorem \ref{thm_relaxed_is_exact}, the Relaxed GLRT is asymptotically equivalent to the Standard GLRT. 
Thus to derive the asymptotic performance of the Relaxed GLRT, we just need to calculate the $ \bs{\omega}^*(\bs{A}) $ corresponding to the optimal $ \bs{X}^*(\bs{A}) $ obtained from \eqref{relaxed_problem}. Using (\ref{def_Omega}) and equating the solution of (\ref{equ_log_GLRT_test}) to (\ref{relaxed_GLRT_test}) (with significant algebra) when \eqref{exact_relaxed_final_condi} is satisfied 
\begin{align}
\bs{X}^*(\bs{A}) &= \begin{bmatrix}
\bs{\omega}^*(\bs{A}) \\ 1
\end{bmatrix}\begin{bmatrix}
\bs{\omega}^{*T}(\bs{A}) & 1
\end{bmatrix}.
\end{align}
Given the definitions of $S$ and $\bs{X}$ before (26), the solution, called $ \overline{\bs{\omega}}^*(\bs{A}) $, is  
\begin{align}
\begin{bmatrix}
\overline{\bs{\omega}}^*(\bs{A}) \\ 1
\end{bmatrix} = \bs{X}^*_{[:,S]}(\bs{A}).
\end{align}
in which $ \bs{X}^*_{[:,S]}(\bs{A}) $ is $ S^{th} $ column of $ \bs{X}^*(\bs{A}) $. 
Therefore, the asymptotic performance of the Relaxed GLRT under $H_1$ can be obtained by substituting $ \overline{\bs{\omega}}^*(\bs{A}_{H_1}) $ into the asymptotic performance of the Standard GLRT  shown in Theorem \ref{Asymptotic_Performance_of_Standard_GLRT}.

\newcommand{\Aoepn}{\bs{A}_{\text{open}}}
Next we provide a simplied sensor placement procedure. 
 Let $ \Aoepn $ denote the incidence matrix as per \eqref{incidence_matrix} for the topology with all changeable pipelines opened. Similarly, we have $ \bs{B}_{\text{open}} $ defined using the conventions in \eqref{def_B}. 
\begin{theorem}[Simplified Proper Sensor Placement]\label{thm_suff_cond_app_sen_pla}
	Define $ \widetilde{\bs{F}}(\bs{A}) = [\bs{B}\quad \diag{\bs{c}}] $. Let $ \widetilde{\bs{U}}(\bs{A}) = [\bs{I}_{N}\quad-\bs{B}^T\text{diag}^{-1}(\bs{c})]^T $.
	The requirement of Definition \ref{lemma_app_sensor_placement} will be guaranteed if 
	\begin{align}
	&\text{rank}\left[\widetilde{\bs{U}}^T(\bs{A}_{\text{open}})\bs{J}\left(\bs{A}_{\text{open}}\right)\widetilde{\bs{U}}(\bs{A}_{\text{open}})\right] = N. \label{eq:app_pre_condition}
	\end{align}
\end{theorem}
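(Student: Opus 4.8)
The plan is to recast Definition~\ref{lemma_app_sensor_placement} as a nonsingularity statement about the constrained Fisher information and then to recognize the displayed test as an $\bs{\omega}$-free surrogate for it. By the standard asymptotics of constrained ML estimation --- the same regularity that makes the CCRB in (\ref{def_CCRB}) exist and that underlies Theorem~\ref{GLRT_App_Wald_THM} (see \cite{moore2010constrained} and Appendix~6A of \cite{kay1998fundamentals}) --- the constrained estimate is consistent at a regular true point $(\bs{A},\bs{\omega})$ precisely when the restricted information $\bs{U}^T(\bs{A},\bs{\omega})\bs{J}(\bs{A})\bs{U}(\bs{A},\bs{\omega})$ is nonsingular, i.e. has full rank $N$; here $N$ is the dimension of the tangent space $\mathrm{null}(\bs{F}(\bs{A},\bs{\omega}))$, which equals $(N+L)-L=N$ once the $L$ Weymouth constraints are independent. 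It therefore suffices to show that the single rank test at $\bs{A}_{\text{open}}$ forces $\bs{U}^T\bs{J}\bs{U}$ to be nonsingular at every $(\bs{A},\bs{\omega})$ that can be the true parameter under $H_0$ or $H_1$.

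First I would eliminate $\bs{\omega}$. Differentiating (\ref{con_func_stan_GLRT}) gives $\bs{F}(\bs{A},\bs{\omega}) = 2\,[\,\bs{B}\ \ -\diag{\bs{c}}\,]\,\bs{D}(\bs{\omega})$ with the invertible diagonal $\bs{D}(\bs{\omega})=\diag{(\bs{p}^T,\bs{\phi}^T)}$, which is simply the linearization of the constraint in the squared variables $\bs{p}\odot\bs{p}$ and $\bs{\phi}\odot\bs{\phi}$. Hence $\mathrm{null}(\bs{F}(\bs{A},\bs{\omega}))=\bs{D}^{-1}(\bs{\omega})\,\mathrm{null}([\,\bs{B}\ \ -\diag{\bs{c}}\,])$, and one checks that $\widetilde{\bs{F}}(\bs{A})\widetilde{\bs{U}}(\bs{A})=\bs{0}$ with $\widetilde{\bs{U}}$ of full column rank $N$, so $\mathrm{null}([\,\bs{B}\ \ -\diag{\bs{c}}\,])=\mathrm{range}(\widetilde{\bs{U}}(\bs{A}))$ up to a harmless sign on its lower block. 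Because the observations are Gaussian with means linear in $\bs{\omega}$, $\bs{J}(\bs{A})$ in (\ref{fisher_infor_m_all}) is independent of $\bs{\omega}$, and a short computation collapses the surrogate to the closed, $\bs{\omega}$-free $N\times N$ form
\[
\widetilde{\bs{U}}^T\bs{J}(\bs{A})\widetilde{\bs{U}} = T_a\Big[\,\bs{J}'_{pp} + \bs{B}^T\text{diag}^{-1}(\bs{c})\,\bs{J}'_{\phi\phi}(\bs{A})\,\text{diag}^{-1}(\bs{c})\,\bs{B}\,\Big].
\]
Writing $\bs{J}=\bs{G}^T\bs{G}$, full rank of this matrix is equivalent to $\mathrm{range}(\widetilde{\bs{U}}(\bs{A}))\cap\mathrm{null}(\bs{J}(\bs{A}))=\{\bs{0}\}$: the only squared-pressure perturbation $\bs{u}$ obeying (i) $\bs{\delta}_p\odot\bs{u}=\bs{0}$, (ii) $\bs{\delta}_\phi\odot\bs{B}\bs{u}=\bs{0}$, and (iii) $\bs{\delta}_q\odot\bs{A}^T\text{diag}^{-1}(\bs{c})\bs{B}\bs{u}=\bs{0}$ is $\bs{u}=\bs{0}$.

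The crux is to transfer this $\bs{\omega}$-free condition back to the true restricted information. Since $\bs{U}(\bs{A},\bs{\omega})=\bs{D}^{-1}(\bs{\omega})\widetilde{\bs{U}}(\bs{A})\bs{R}$ for an invertible $\bs{R}$, one has $\mathrm{rank}(\bs{U}^T\bs{J}\bs{U})=\mathrm{rank}(\widetilde{\bs{U}}^T\bs{D}^{-1}(\bs{\omega})\bs{J}\,\bs{D}^{-1}(\bs{\omega})\widetilde{\bs{U}})$, so I must understand how the diagonal $\bs{D}^{-1}(\bs{\omega})$ acts on $\mathrm{null}(\bs{J})$. The pressure and flow-sensor parts of $\mathrm{null}(\bs{J})$ are invariant under $\bs{D}^{-1}(\bs{\omega})$ because the Hadamard masks $\bs{\delta}_p,\bs{\delta}_\phi$ commute with a diagonal scaling, so conditions (i)--(ii) are untouched by the operating point; the delicate term is the injection block $\bs{A}\diag{\bs{\delta}_q}\bs{A}^T$, which does not commute with $\text{diag}^{-1}(\bs{\phi})$, so condition (iii) picks up a $\bs{\phi}$-weighting and $\mathrm{null}(\bs{J})$ is not $\bs{D}$-invariant. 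I would close this gap by using that the injection contribution is positive semidefinite and can only add information, arguing that a perturbation already excluded by the scaling-invariant pressure/flow part is excluded for every admissible nonzero $\bs{\phi}$ as well; making this uniform over all operating points is the step I expect to fight hardest with.

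Finally I would reduce the family of admissible topologies to the single matrix $\bs{A}_{\text{open}}$. Because an ``open'' changeable pipeline is absent, $\bs{A}_{\text{open}}$ is the least-connected admissible topology and every topology occurring under $H_0$ or $H_1$ is obtained from it by closing (adding) changeable pipelines; each added pipeline contributes extra nonnegative terms to the surrogate through its flow sensor and its share of the injection balance, so full rank should only be easier to achieve as pipelines are added. I would make this precise by a positive-semidefinite (Weyl/interlacing) monotonicity argument on $\widetilde{\bs{U}}^T\bs{J}\widetilde{\bs{U}}$, concluding that nonsingularity at $\bs{A}_{\text{open}}$ propagates to all richer topologies and hence that Definition~\ref{lemma_app_sensor_placement} holds under both hypotheses. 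The subtlety here --- and the second genuine obstacle --- is that closing a pipeline modifies the injection balance (condition (iii)) rather than merely appending a new row, so the monotonicity must be established from the explicit node--pipeline structure of $\bs{A}$ rather than from a naive ``added rank-one term'' bound.
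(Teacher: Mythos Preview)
Your plan matches the paper's proof in structure: reduce Definition~\ref{lemma_app_sensor_placement} to nonsingularity of the restricted Fisher information $\bs{U}^T\bs{J}\bs{U}$ (the paper cites \cite{moore2008maximum} for this), replace the $\bs{\omega}$-dependent null-space basis $\bs{U}$ by the $\bs{\omega}$-free $\widetilde{\bs{U}}$, compute $\widetilde{\bs{U}}^T\bs{J}\widetilde{\bs{U}}$ explicitly, and then argue monotonicity as changeable pipelines are closed so that the single check at $\bs{A}_{\text{open}}$ suffices.

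Two differences are worth flagging. First, on eliminating $\bs{\omega}$: the paper simply asserts that $\bs{U}(\bs{A},\bs{\omega}^*)$ and $\widetilde{\bs{U}}(\bs{A})$ are both bases of $\mathrm{null}(\widetilde{\bs{F}}(\bs{A}))$ and hence $\bs{U}=\widetilde{\bs{U}}\bs{P}$ for some invertible $\bs{P}$, which gives $\mathrm{rank}(\bs{U}^T\bs{J}\bs{U})=\mathrm{rank}(\widetilde{\bs{U}}^T\bs{J}\widetilde{\bs{U}})$ in one line. Your more careful relation $\bs{U}=\bs{D}^{-1}(\bs{\omega})\widetilde{\bs{U}}\bs{R}$ is what actually follows from $\bs{F}=\widetilde{\bs{F}}\,\diag{2\bs{\omega}^*}$, and you correctly notice that the injection block does not commute with the diagonal scaling; the paper does not engage with this and effectively sidesteps the obstacle you identify rather than resolving it.

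Second, on the monotonicity step: the paper's device is to observe that the explicit $N\times N$ matrix
\[
\widetilde{\bs{U}}^T\bs{J}\widetilde{\bs{U}}
= \tfrac{1}{\sigma_p^2}\diag{\bs{\delta}_p}
+ \tfrac{1}{\sigma_\phi^2}\bs{B}^T\text{diag}^{-1}(\bs{c})\diag{\bs{\delta}_\phi}\text{diag}^{-1}(\bs{c})\bs{B}
+ \tfrac{1}{\sigma_q^2}\bs{B}^T\text{diag}^{-1}(\bs{c})\bs{A}\diag{\bs{\delta}_q}\bs{A}^T\text{diag}^{-1}(\bs{c})\bs{B}
\]
is a diagonal plus two weighted graph-Laplacian-type terms, and then invoke the graph-theoretic fact (Merris \cite{merris1994laplacian}) that adding an edge to a graph cannot decrease the rank of its Laplacian. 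This is precisely the ``explicit node--pipeline structure of $\bs{A}$'' you say you would need; it dispatches your second obstacle more directly than a generic Weyl/interlacing bound, since closing a pipeline adds an edge to the underlying graph and the rank can only go up.
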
 
\begin{proof}
	Based on Theorem 2 in \cite{moore2008maximum}, for a general $ \bs{A} $, to satisfy the requirement of Definition 2, we need
	\begin{align}
	&\text{rank}\left[\bs{U}^T\left(\bs{A},\bs{\omega}^*\left(\bs{A}\right)\right)\bs{J}\left(\bs{A}\right)\bs{U}\left(\bs{A},\bs{\omega}^*\left(\bs{A}\right)\right)\right] =\notag\\&	S-\text{rank}\left\{\bs{F}\left(\bs{A},\bs{\omega}^*\left(\bs{A}\right)\right)\right\}-1. \label{eq:app_pre_condition_with_w}
	\end{align}
	Recall that $ \bs{\omega}^* = [\bs{p}^{*T}\quad \bs{\phi}^{*T}]^T $.
	By taking the partial derivative of $ \bs{f}(\bs{A},\bs{\omega}) $ with respect to $ \bs{\omega} $, we have
	\begin{align}
	\bs{F}\left(\bs{A},\bs{\omega}^*\left(\bs{A}\right)\right) &= \begin{bmatrix}
	2\bs{B}\hpt\diag{\bs{p}^*} & 2\diag{\bs{c}\odot\bs{\phi}^*}
	\end{bmatrix} \notag\\
	&= \widetilde{\bs{F}}(\bs{A})\hpt\diag{2\bs{\omega}^*\left(\bs{A}\right)}.\label{eq:explicit_F}
	\end{align}
	where the ranks of $ \bs{F}\left(\bs{A},\bs{\omega}^*\left(\bs{A}\right)\right) $ and $ \widetilde{\bs{F}}(\bs{A}) $ are both $ L $ since $ \bs{c}_l > 0,\hpt l = 1,\dots,L $ as per \eqref{weymouth}. 
	
	Define a square non-singular matrix $ \bs{P}\in\mathbb{R}^{N\times N} $ such that $ \bs{U}\left(\bs{A},\bs{\omega}^*\left(\bs{A}\right)\right) = \widetilde{\bs{U}}(\bs{A})\bs{P} $. This matrix $ \bs{P} $ exists since the columns of both $ \bs{U}\left(\bs{A},\bs{\omega}^*\left(\bs{A}\right)\right) $ and $ \widetilde{\bs{U}}(\bs{A}) $ are individually a basis for the null space of $ \widetilde{\bs{F}}(\bs{A}) $. Then, we have
	\begin{align}
	&\text{rank}\left[\bs{U}^T\left(\bs{A},\bs{\omega}^*\left(\bs{A}\right)\right)\bs{J}\left(\bs{A}\right)\bs{U}\left(\bs{A},\bs{\omega}^*\left(\bs{A}\right)\right)\right] \notag\\&= \text{rank}\left[\bs{P}^T\widetilde{\bs{U}}(\bs{A})^T\bs{J}\left(\bs{A}\right)\widetilde{\bs{U}}(\bs{A})\bs{P}\right] \notag\\&= \text{rank}\left[\widetilde{\bs{U}}^T(\bs{A})\bs{J}\left(\bs{A}\right)\widetilde{\bs{U}}(\bs{A})\right]. \label{eq:rankUJUsam}
	\end{align}
	
	By \eqref{fisher_infor_m_all}-\eqref{fisher_infor_standard_GLRT_sub_m_end} and the definition of $ \widetilde{\bs{U}}(\bs{A}) $, we have
	\begin{align}
		&\widetilde{\bs{U}}^T(\bs{A})\bs{J}\left(\bs{A}\right)\widetilde{\bs{U}}(\bs{A}) =
		  \fracPS{\phi}\bs{B}^T\text{diag}^{-1}(\bs{c})\diag{\bs{\delta}_\phi}\text{diag}^{-1}(\bs{c})\hpt\bs{B}+\notag\\
		 &\fracPS{\phi}\bs{B}^T\text{diag}^{-1}(\bs{c})\bs{A}\hpt\diag{\bs{\delta}_q}\hpt\bs{A}^T\text{diag}^{-1}(\bs{c})\hpt\bs{B}+\frac{1}{\sigma_p^2}\diag{\bs{\delta}_p}.\label{eq:longEqRank}
	\end{align}
	The first term of the right hand side of \eqref{eq:longEqRank} is the weighted Laplacian matrix (edges are weighted by $ \bs{c} $ and $ \bs{\delta}_\phi $)
	while the 2nd term has the same rank as the weighted Laplacian matrix. Connecting any unconnected links will increase or maintain the rank of a Laplacian matrix (Section 2 in \cite{merris1994laplacian}). Thus, $ \text{rank}\left[\widetilde{\bs{U}}^T(\bs{A}_{\text{open}})\bs{J}\left(\bs{A}_{\text{open}}\right)\widetilde{\bs{U}}(\bs{A}_{\text{open}})\right] \le \text{rank}\left[\widetilde{\bs{U}}^T(\bs{A})\bs{J}\left(\bs{A}\right)\widetilde{\bs{U}}(\bs{A})\right] $. 
	Hence, if \eqref{eq:app_pre_condition} is true for $ \bs{A}_{\text{open}} $, then \eqref{eq:app_pre_condition} must be true for all $ \bs{A} $  obtained from $ \bs{A}_{\text{open}} $ by closing changeable pipelines.
\end{proof}

We first define an optimal placement as a placement which satisfies Definition 2 while having the lowest cost.
We can find the optimal placement by solving 
\begin{align}
\min_{\bs{\delta}} &\sum_{n=1}^N\left(d_{p,n}{\delta}_{p,n}+d_{q,n}{\delta}_{q,n}\right)+\sum_{l=1}^{L_F}d_{F,l}{\delta}_{F,l}+\sum_{k=1}^{L_C}d_{C,k}{\delta}_{C,k},\label{opt_sen_pla_equ1}\\
&\text{subject to: } (\ref{eq:app_pre_condition}), \label{constraints_in_conclusion}
\end{align}
where $ d_{p,n} $ and $ d_{q,n} $ are costs for placing pressure and injection sensors at the $ n^{th} $ node respectively, and $ d_{F,l} $ and $ d_{C,l} $ are costs for placing flow meters at the $ l^{th} $ pipeline, respectively. Solving \eqref{opt_sen_pla_equ1} and \eqref{constraints_in_conclusion} is hard since \eqref{eq:app_pre_condition} is not linear. 

Thus we describe a practical approach to find a placement which is  simple and typically provides acceptable cost (not so much more than optimum) based on our numerical investigations, some given in this paper. First note that placing a pressure sensor at every node always satisfies Definition \ref{lemma_app_sensor_placement}. This can be easily proved by substituting $ \bs{\delta}_p = \bs{1} $ into \eqref{eq:app_pre_condition}. Then, we can try to remove an arbitrary sensor from the placement and add another sensor which has a lower cost and allows a new placement which satisfies \eqref{eq:app_pre_condition}. We repeat this procedure until the cost cannot be further decreased. Based on \eqref{eq:app_pre_condition}, the minimum number of placed sensors is $ N $. Thus, our starting point $ \bs{\delta}_p = \bs{1} $ uses this minimum number of sensors. Since the replacing step only deceases the total cost, the final solution can only be closer to the optimal. This approach is efficient since it starts at a  point with the minimum number of sensors and trys to exchange sensors to reduce cost. We provide numerical results for this placement approach in Section \ref{sec_num_simulation}.

\section{Numerical Results}\label{sec_num_simulation}

\begin{figure}
	\centering
	\subfloat[]{\includegraphics[width=0.23\textwidth]{./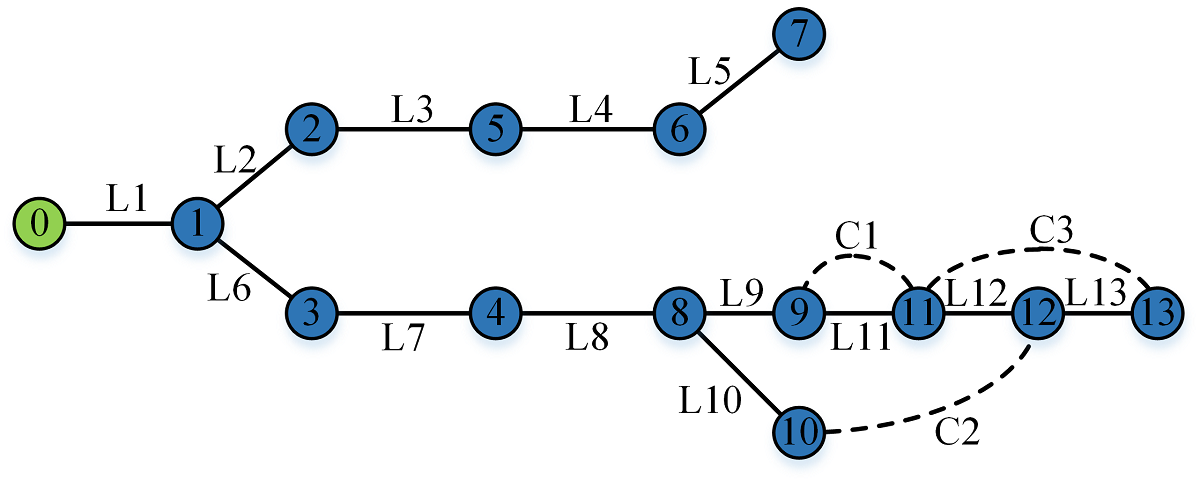}\label{fig_model_gas_network_Bining}}
	\hfil
	\subfloat[]{\includegraphics[width=0.23\textwidth]{./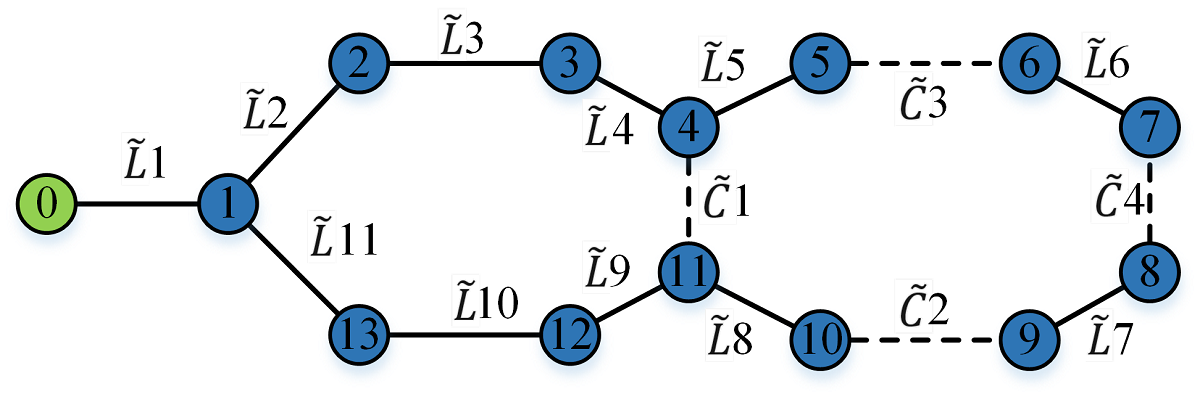}\label{fig_model_gas_network_PESGM17}}
	\caption{Natural gas networks evaluated in this paper. Dashed lines indicate changeable pipelines while solid lines are fixed pipelines. Blue nodes involve withdraws of gas from networks where as the green node involves the injection of gas into networks. (a) Network 1 \cite{zhao2017coordinated} (b) Network 2 \cite{ojha2016solving}.}
	\label{fig_model_gas_network}
\end{figure}

\begin{table}
	\renewcommand{\arraystretch}{1.3}
	\caption{Changeable pipeline settings under $ H_0 $/$ H_1 $ for 4 different cases in Network 1}
	\label{table_different_cases_net1}
	\centering
	\begin{tabularx}{0.43\textwidth}{c *3{>{\Centering}X}}
		\hline\hline
		& \multicolumn{3}{c}{ Changeable pipeline states ($ H_0 $/$ H_1 $)}\\
		\cline{2-4}
		{Case} & $ C1 $ & $ C2 $ & $ C3 $ \\
		\hline
		Case 1 & Closed/Closed & Closed/Closed & Closed/Open \\
		
		Case 2 & Open/Open & Closed/Closed & Open/Closed \\
		
		Case 3 & Closed/Closed & Closed/Closed & Open/Closed \\
		
		Case 4 & Open/Closed & Open/Closed & Open/Closed \\
		\hline\hline
	\end{tabularx}
\end{table}

\begin{table}
	\renewcommand{\arraystretch}{1.3}
	\caption{Changeable pipeline settings under $ H_0 $/$ H_1 $ for 2 different cases in Network 2}
	\label{table_different_cases_net2}
	\centering
	\begin{tabularx}{0.50\textwidth}{c*4{>{\Centering}X}}
		\hline\hline
		& \multicolumn{4}{c}{ Changeable pipeline states ($ H_0 $/$ H_1 $)}\\
		\cline{2-5}
		{Case} & $ \tilde{C}1 $ & $ \tilde{C}2 $ & $ \tilde{C}3 $ & $ \tilde{C}4 $\\
		\hline
		Case 1 & Closed/Closed & Closed/Closed & Closed/Closed & Closed/Open \\
		Case 2 & Closed/Open & Closed/Closed & Closed/Closed & Closed/Open \\
		\hline\hline
	\end{tabularx}
\end{table}

\begin{table}[!t]
	\renewcommand{\arraystretch}{1.3}
	\caption{Gas injections and pipeline characteristic values used in numerical investigations.}
	\label{table_variables}
	\centering
	\begin{tabular}{lllllllll}
		\hline\hline
		\multicolumn{4}{c}{Network 1} & & \multicolumn{4}{c}{Network 2}\\
		\cline{1-4} \cline{6-9}
		Node & $ \bs{q} $ & Pipe. & $ \bs{c} $ & & Node & $ \bs{q} $ & Pipe. & $ \bs{c} $ \\
		\hline
		0   & $ 223 $  & L1-L13  & 12 & & 0           & $ 183 $   & $ \tilde{L}1-\tilde{L}11 $ & 12\\
		1-7  & $ -20 $ & C1-C3   & 12 & & 1-3,12,13   & $ -20 $   & $ \tilde{C}1-\tilde{C}4 $  & 12\\
		8-13 & $ -8 $  &         &    & & 4-6         & $ -8 $    &                 &\\
		&         &         &    & & 8-11        & $ -6.5 $  &                 &\\
		\hline\hline
		\multicolumn{9}{l}{The unit of $ \bs{q} $ is [$ m^3/s $]. The unit of $ \bs{c} $ is [$ \text{kg}^2m^{-8}s^{-2} $]}
	\end{tabular}
\end{table}

\begin{table}
	\renewcommand{\arraystretch}{1.3}
	\caption{Pipeline gas flows for Cases 1-3 of Network 1 under $ H_0 $ and $ H_1 $}
	\label{table_different_flow_net1}
	\centering
	\begin{tabular}{lcccccc}
		\hline\hline
		& \multicolumn{6}{c}{ Flows for each case [$ m^3/s $].}\\
		\hline
		& $ H_1 $ &  $ H_1 $ & $ H_1 $ & $ H_0 $ & $ H_0 $ & $ H_0 $\\
		{Pipe.} & Case 1  &  Case 2  &  Case 3 & Case 1  & Case 2 & Case 3\\
		\hline
		L1 	   & 223.0 & 223.0 & 223.0  & 223.0 & 223.0 & 223.0\\
		L2 	   & 100.0 & 100.0 & 100.0  & 100.0 & 100.0 & 100.0\\
		L3 	   & 75.0  & 75.0  & 75.0   & 75.0  & 75.0  & 75.0\\
		L4 	   & 50.0  & 50.0  & 50.0   & 50.0  & 50.0  & 50.0\\
		L5 	   & 25.0  & 25.0  & 25.0   & 25.0  & 25.0  & 25.0\\
		L6 	   & 98.0  & 98.0  & 98.0   & 98.0  & 98.0  & 98.0\\
		L7 	   & 73.0  & 73.0  & 73.0   & 73.0  & 73.0  & 73.0\\
		L8 	   & 48.0  & 48.0  & 48.0   & 48.0  & 48.0  & 48.0\\
		L9 	   & 20.4  & 20.0  & 21.4   & 21.4  & 19.2  & 20.4\\
		L10    & 6.2   & 12.0  & 6.7    & 6.7   & 11.2  & 6.2\\
		L11    & 4.4   & 0.0   & 0.9    & 0.9   & 3.2   & 4.4\\
		L12    & 8.0   & 4.0   & 3.5    & 3.5   & 8.0   & 8.0\\
		L13    & 19.5  & 20.0  & 18.5   & 18.5  & 20.8  & 19.5\\
		\hline
		\hline
	\end{tabular}
\end{table}

\begin{table}
	\renewcommand{\arraystretch}{1.3}
	\caption{Pipeline gas flows for Cases 1 and 2 of Network 2 under $ H_0 $ and $ H_1 $}
	\label{table_different_flow_net2}
	\centering
	\begin{tabular}{cccccccc}
		\hline\hline
		\multicolumn{8}{c}{ Flows for each case [$ m^3/s $]}\\
		\hline
		& $ H_1 $ &  $ H_1 $ & $ H_0 $ & & $ H_1 $ &  $ H_1 $ & $ H_0 $\\
		{Pipe.} & Case 1  &  Case 2 & Cases & {Pipe.} & Case 1  &  Case 2 & Cases\\
		\hline
		$ \tilde{L} $1 	   & 183.0 & 183.0 & 183.0 & $ \tilde{L} $7 	   & 6.5   & 6.5   & 7.9\\
		$ \tilde{L} $2 	   & 79.4  & 82.0  & 79.2  & $ \tilde{L} $8 	   & 19.5  & 19.5  & 20.9\\
		$ \tilde{L} $3 	   & 54.4  & 57.0  & 54.2  & $ \tilde{L} $9 	   & 28.5  & 26.0  & 28.7\\
		$ \tilde{L} $4 	   & 29.4  & 32.0  & 29.2  & $ \tilde{L} $10    & 53.5  & 51.0  & 53.7\\
		$ \tilde{L} $5 	   & 24.0  & 24.0  & 22.5  & $ \tilde{L} $11    & 78.5  & 76.0  & 78.7\\
		$ \tilde{L} $6 	   & 8.0   & 8.0   & 6.5\\
		\hline
		\hline
	\end{tabular}
\end{table}

\begin{table}
	\renewcommand{\arraystretch}{1.3}
	\caption{Directions of pipelines flows under $ H_0 $ for Cases 1-3 of Network 1}
	\label{table_def_tilde_A_net1}
	\centering
	\begin{tabular}{llllll}
	\hline\hline
	Pipeline & Start & End & Pipeline & Start & end\\
	Num. & Node & Node & Num. & Node & Node\\		
	\hline
	L1  & 0 & 1 & L9  & 8  & 9\\
	L2  & 1 & 2 & L10 & 8  & 10\\
	L3  & 2 & 5 & L11 & 9  & 11\\
	L4  & 5 & 6 & L12 & 11 & 12\\
	L5  & 6 & 7 & L13 & 12 & 13 \\
	L6  & 1 & 2 & C1  & 1  & 4\\
	L7  & 3 & 4 & C2  & 3  & 6\\
	L8  & 4 & 8 & C3  & 5  & 8\\
	\hline
	\hline
\end{tabular}
\end{table}

\begin{table}
	\renewcommand{\arraystretch}{1.3}
	\caption{Directions of pipelines flows under $ H_0 $ for Cases 1 and 2 of Network 2}
	\label{table_def_tilde_A_net2}
	\centering
	\begin{tabular}{llllll}
		\hline\hline
		Pipeline & Start & End & Pipeline & Start & end\\
		Num. & Node & Node & Num. & Node & Node\\		
		\hline
		L1  & 0 & 1 & L9  & 12 & 11\\
		L2  & 1 & 2 & L10 & 13 & 12\\
		L3  & 2 & 3 & L11 & 1  & 13\\
		L4  & 3 & 4 & C1  & 4  & 11\\
		L5  & 4 & 5 & C2  & 10 & 4 \\
		L6  & 6 & 7 & C3  & 5  & 6\\
		L7  & 9 & 8 & C4  & 7  & 8\\
		L8  & 11 & 10 \\
		\hline
		\hline
	\end{tabular}
\end{table}

In this section, we evaluate the performance of the Relaxed GLRT. 
To this end, we consider two natural gas networks in this paper. 
Fig. \ref{fig_model_gas_network_Bining} shows one 14-node natural gas network \cite{zhao2017coordinated}, called Network 1 while Fig. \ref{fig_model_gas_network_PESGM17} presents another 14-node natural gas network \cite{ojha2016solving}, called Network 2.
Pipelines marked $ C $ (in Network 1) or $ \tilde{C} $ (in Network 2) are changeable pipelines while the others are fixed. 
Table \ref{table_different_cases_net1} describes the states of changeable pipelines that the operator believes to be present ($ H_0 $) along with the actual pipeline settings ($ H_1 $) in three different cases (Cases 1-4) of Network 1.
Table \ref{table_different_cases_net2} describes the states of changeable pipelines in two cases (Cases 1 and 2) of Network 2.
Table \ref{table_variables} describes the gas injections/withdraws and the pipeline characteristic values used in our numerical investigations for both networks, which are fixed regardless of the changeable pipeline settings.
Table \ref{table_different_flow_net1} describes the actual gas flows for each pipeline under $ H_0 $ and $ H_1 $ for Network 1 while Table \ref{table_different_flow_net2} describes these flows for Network 2. 
Table \ref{table_def_tilde_A_net1} and Table \ref{table_def_tilde_A_net2} describe the directions of gas flows under $ H_0 $ for Network 1 and Network 2, respectively.
For Cases 1-4 of Network 1 under $ H_1 $, the direction of the flow of $ L4 $ is from 6 to 5, while the others are the same as in Table \ref{table_def_tilde_A_net1}. 
For Cases 1 and 2 of Network 2 under $ H_1 $, the direction of the flow of $ \tilde{L}7 $ is from 8 to 9, while the others are the same as in Table \ref{table_def_tilde_A_net2}. 
The relative size of the noise for each sensor is described using the relative standard derivation (RSD)
$ \tau = \sigma/\mu $
where $ \sigma $ and $ \mu $ are the standard derivation and the mean of the noisy measurement. RSD is often quoted as a percentage. We assume that each sensor has the same RSD. 
Our results assume pressure and injection/withdraw sensors at all nodes as well as flow sensors on all fixed pipelines to collect noisy measurements of pressures, injections/withdrawns, and flows. 
The metric we use to quantify the performance is the detection probability under $ 0.1\% $ false alarm probability for a given RSD.

If the detection probability is higher than $ 99.9\% $ in a specific case, we declare that the Relaxed GLRT obtains reliable performance for that case.
We employed Monte-Carlo simulations using $ 10^5 $ runs.

In Fig. \ref{fig_simu_perf_diff_RSD_0_10_q2v5}, we study $ P_{d} $ versus the RSD for Case 1 of Network 1 and Case 1 of Network 2. 
Fig. \ref{fig_simu_perf_diff_RSD_0_10_q2v5} shows that $ P_{d} $ increases as the RSD decreases.
For different $ {T_a} $, the RSD required to obtain good performance is also different. 
For $ {T_a}=35,40,45 $ and Case 1 of Network 1, the Relaxed GLRT provides good performance when RSD is lower than $ 6.6\%, 7\% $ and $ 7.5\% $, respectively.
Moreover, Fig. \ref{fig_simu_RSD10_q2_all} shows the $ P_{d} $ of all cases described in Tables \ref{table_different_cases_net1} and \ref{table_different_cases_net2}.
We observe that the Relaxed GLRT will always obtain good performance with a sufficiently large $ {T_a} $.
The minimum number of observations needed to obtain good performance for all cases is listed in the third row of Table \ref{table_fig_T_needed_list}. 

\begin{figure}
	\centering
	\subfloat[]{\includegraphics[width=0.35\textwidth]{./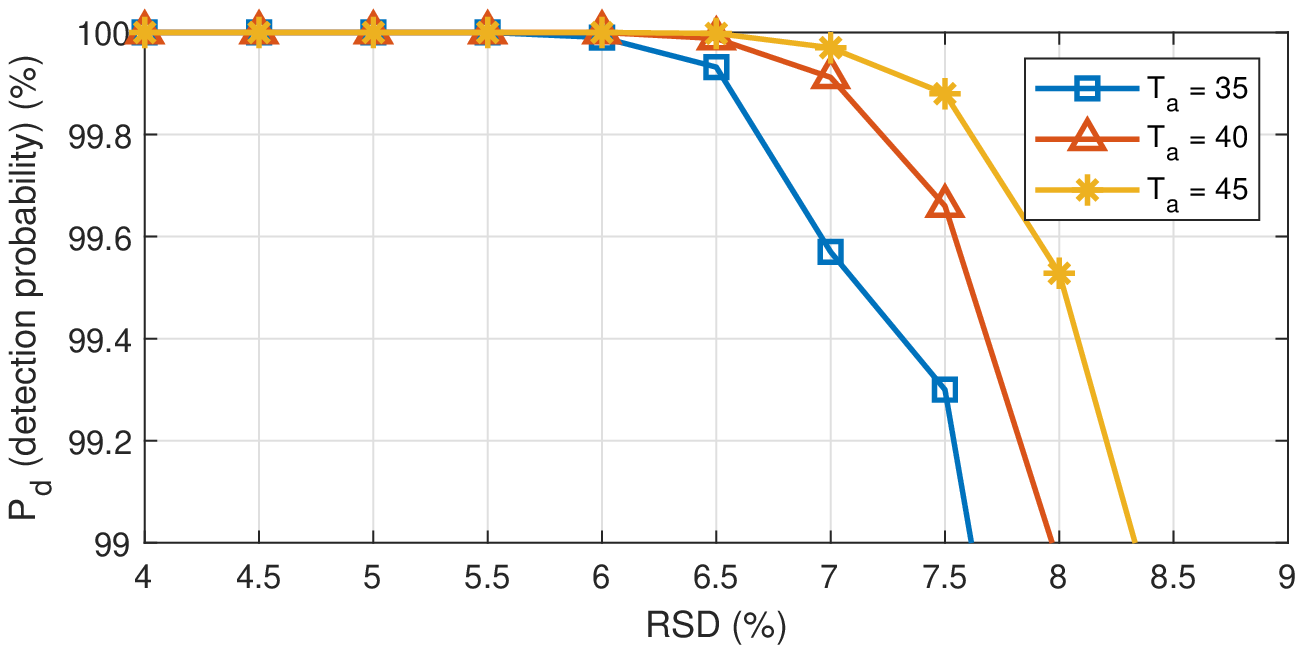}\label{fig_simu_perf_diff_RSD_0_10_q2v5_a}}
	\hfil
	\subfloat[]{\includegraphics[width=0.35\textwidth]{./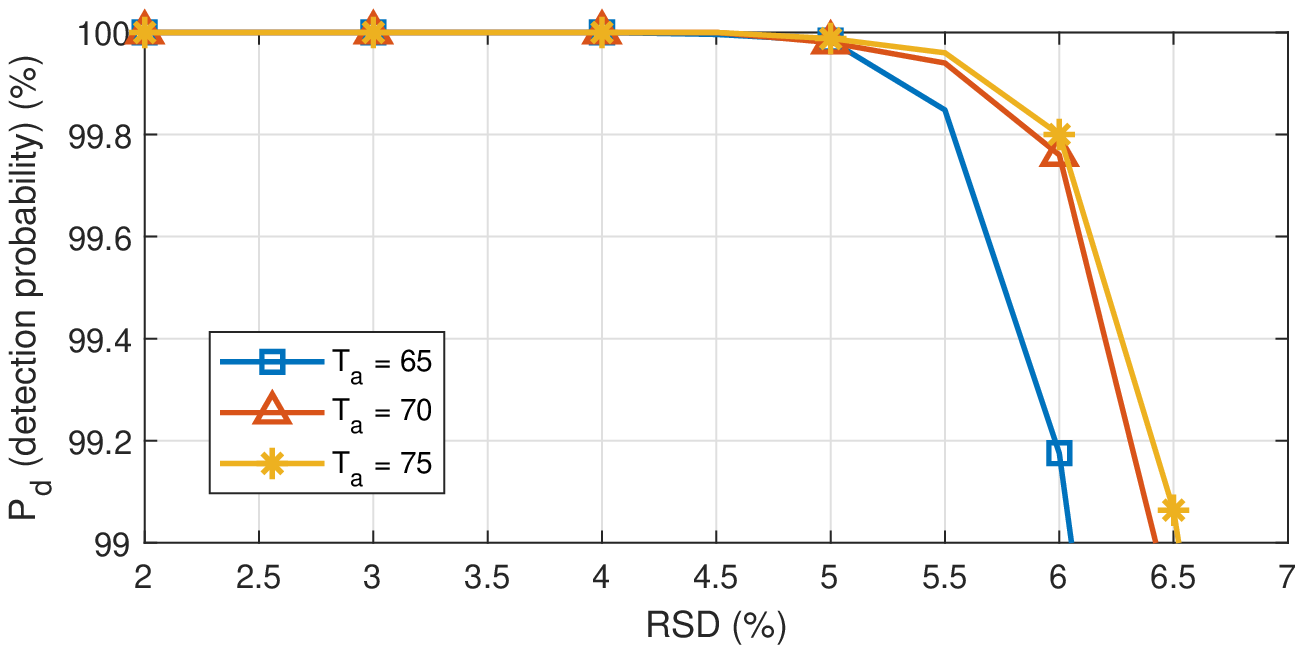}\label{fig_simu_perf_diff_RSD_0_10_q2v5_b}}
	\caption{Simulated $ P_{d} $ under different RSD values. (a) Case 1, Network 1; (b) Case 1, Network 2.}
	\label{fig_simu_perf_diff_RSD_0_10_q2v5}
\end{figure} 

\begin{figure}
	\centering
	\subfloat[]{\includegraphics[width=0.35\textwidth]{./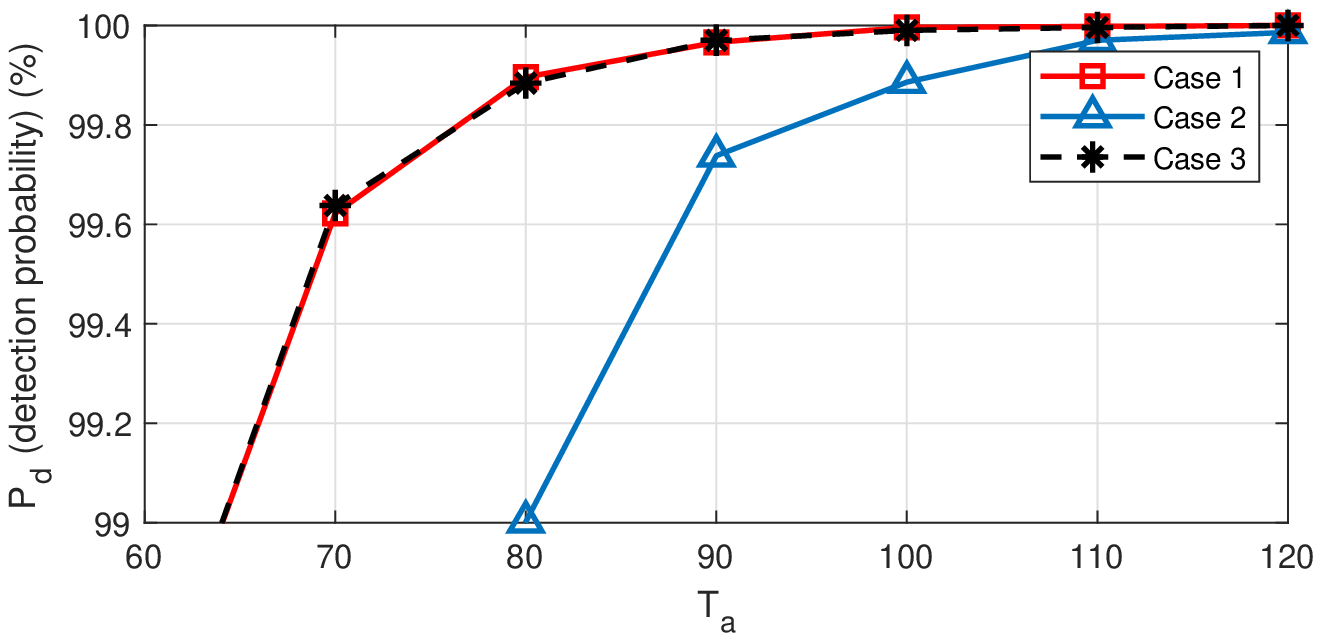}\label{fig_simu_RSD10_q2_all_a}}
	\hfil
	\subfloat[]{\includegraphics[width=0.35\textwidth]{./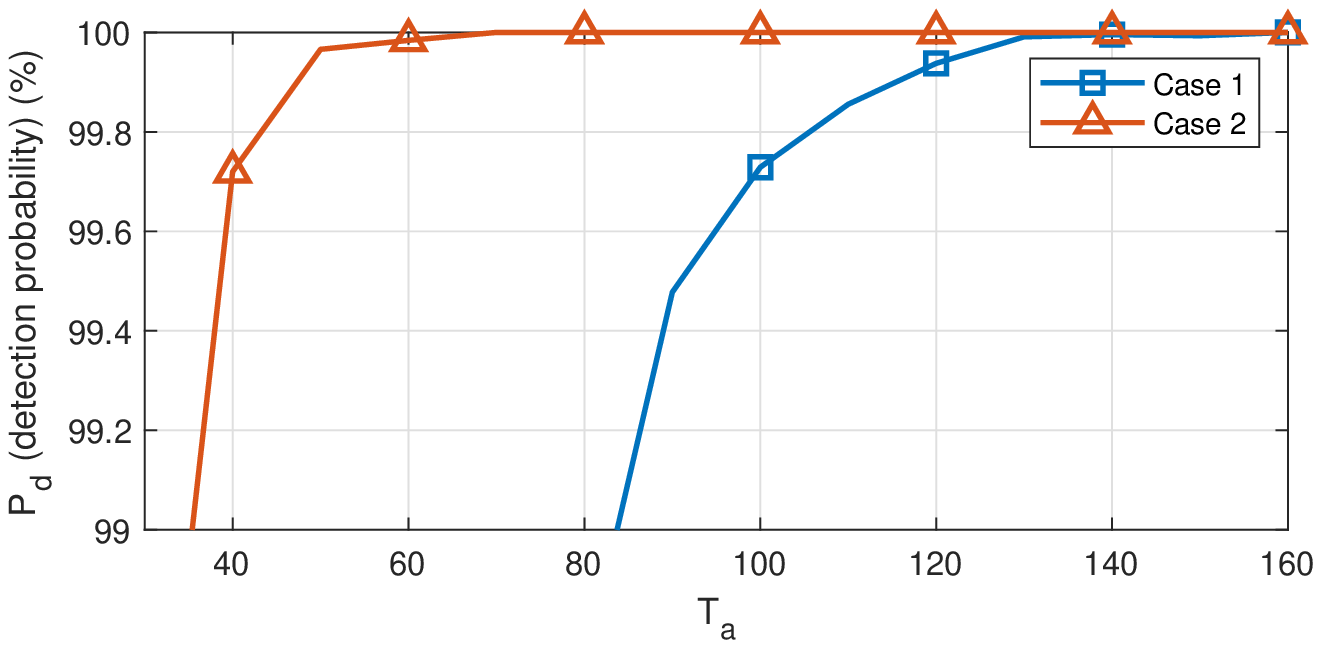}\label{fig_simu_RSD10_q2_all_b}}
	\caption{Simulated $ P_{d} $ of all cases. (a) Cases 1-3, Network 1 with $ 10\% $ RSD; (b) Cases 1 and 2, Network 2 with $ 8\% $ RSD.}
	\label{fig_simu_RSD10_q2_all}
\end{figure} 

\begin{table}
	\renewcommand{\arraystretch}{1.3}
	\caption{The detection probability $ P_{d} $ of simulations and the asymptotic performance for all cases}
	\label{table_fig_T_needed_list}
	\centering
	\begin{tabular}{lcccccc}
		\hline\hline
		& \multicolumn{3}{c}{Network 1} & & \multicolumn{2}{c}{Network 2}\\
		\cline{2-4} \cline{6-7}
		& Case 1 & Case 2 & Case 3 & & Case 1 & Case 2 \\	
		\hline	
		Min. $ {T_a} $ & $ 80 $ & $ 100 $ & $ 80 $ & & $ 114 $ & $ 47 $ \\
		$ P_{d} $ (Simu.)   & $ 99.90\% $ & $ 99.89\% $ & $ 99.90\% $ & & $ 99.89\% $ & $ 99.89\% $\\
		$ P_{d} $ (Asymp.)    & $ 99.65\% $ & $ 99.85\% $ & $ 99.65\% $ & & $ 99.87\% $ & $ 99.94\% $\\
		$ P_{d} $ Diff.      & $ 0.25\% $  & $ 0.04\% $  & $ 0.25\% $  & & $ 0.02\%  $ & $ 0.05\% $\\
		\hline
		\hline
		\multicolumn{7}{l}{$ 10\% $ RSD for Network 1; $ 8\% $ RSD for Network 2.}
	\end{tabular}
\end{table}

Next, we evaluated the minimum $ {T_a} $ to achieve good performance for the three cases under Network~1 and for the two cases under Network~2 and we listed these values of ${T_a}$ in Table \ref{table_fig_T_needed_list}. 
Table \ref{table_fig_T_needed_list} also lists both the asymptotic performance defined in Theorem \ref{Asymptotic_Performance_of_Standard_GLRT} for $ P_{d} $ and the simulated $ P_{d} $ for these cases.  These results show that the asymptotic performance is accurate for the value of $ {T_a} $ listed in Table \ref{table_fig_T_needed_list} which imply good performance. In particular, we see that for the given value of $ {T_a} $, the difference between the asymptotic performance and the simulated $ P_{d} $ is less than $ 0.25\% $ for all evaluated cases. Additionally, if we use an even larger value of $ T_a $, as shown in the yellow zones in Figures \ref{fig_asym_simu_perf} and \ref{fig_asym_simu_perf_net2}, the asymptotic performance is closer to the simulated performance.

\begin{figure}
	\centering
	\subfloat[]{\includegraphics[width=0.35\textwidth]{./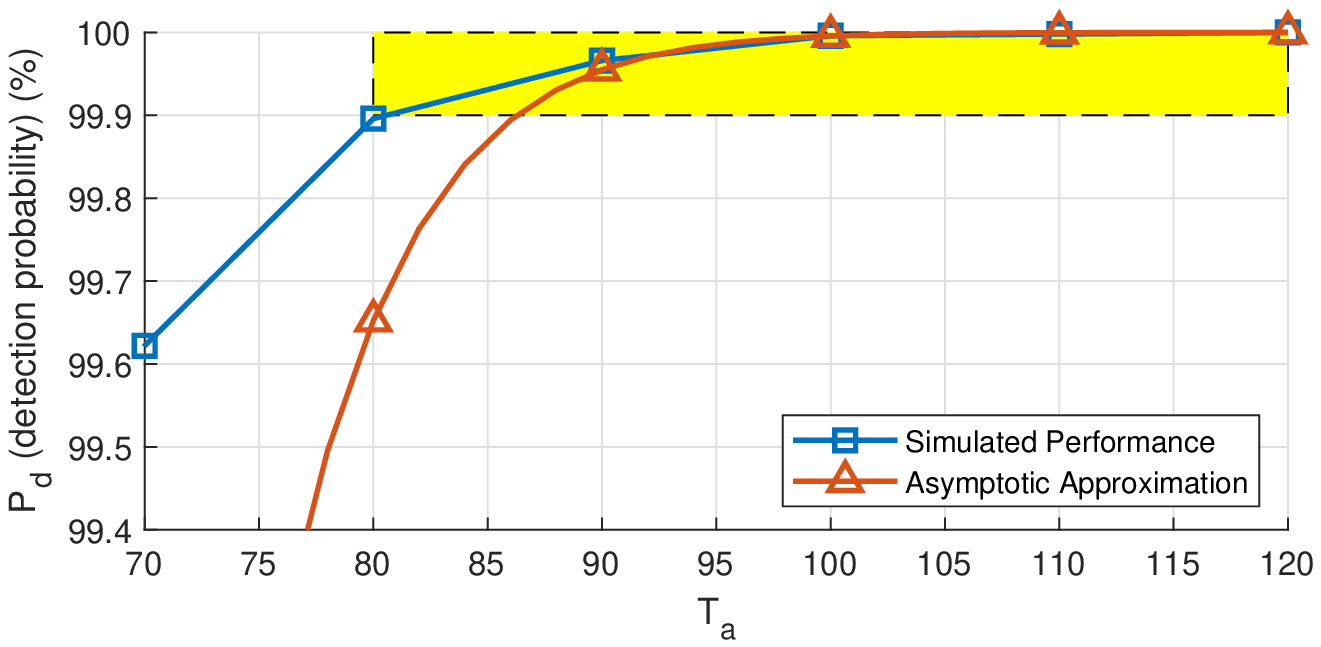}\label{fig_asym_simu_perf_a}}
	\hfil
	\subfloat[]{\includegraphics[width=0.35\textwidth]{./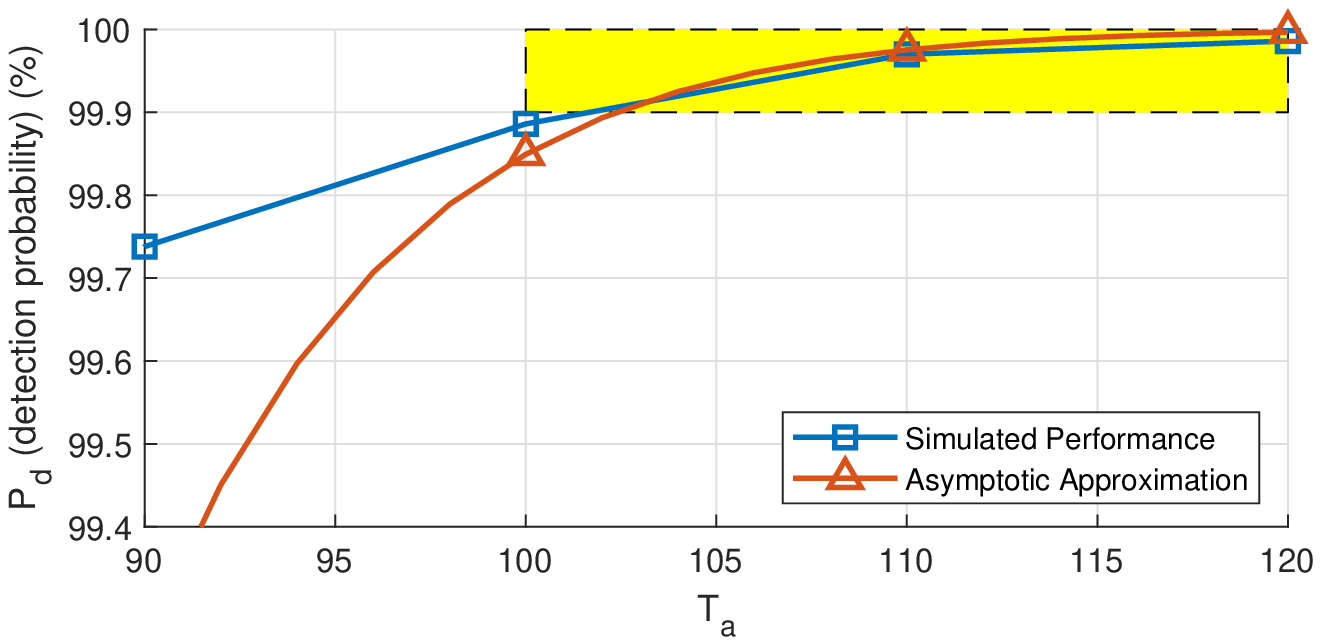}\label{fig_asym_simu_perf_b}}
	\hfil
	\subfloat[]{\includegraphics[width=0.35\textwidth]{./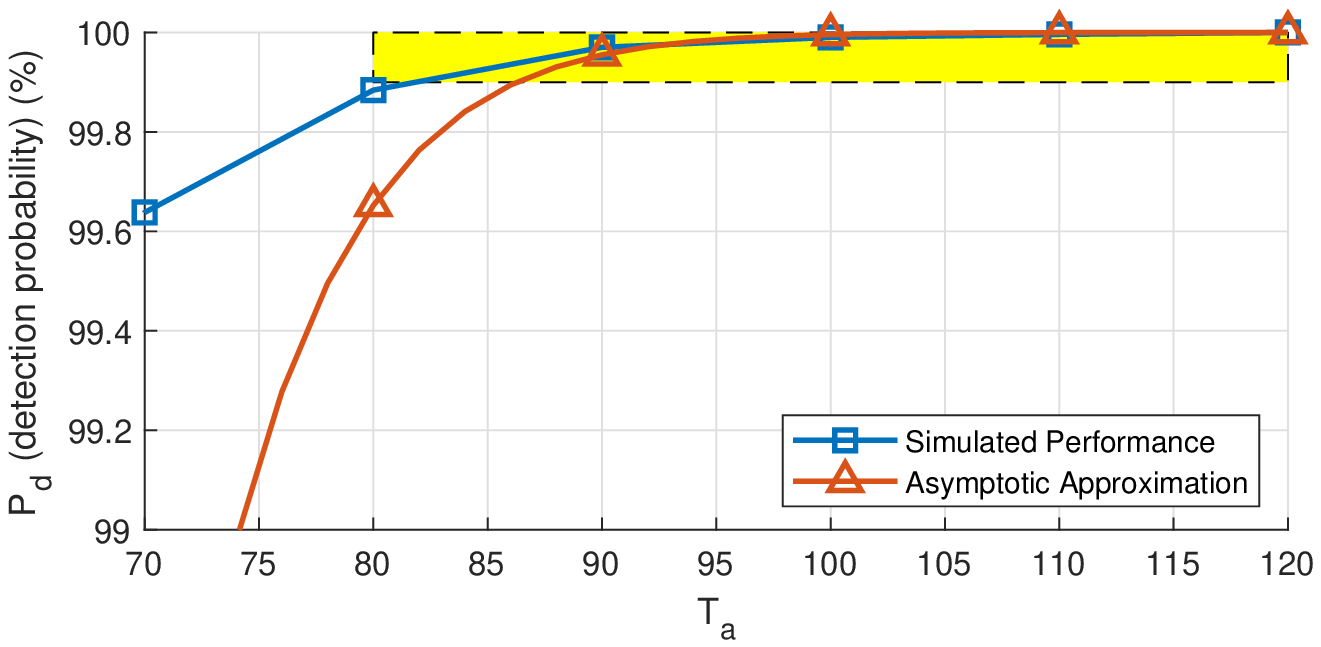}\label{fig_asym_simu_perf_c}}
	\caption{Asymptotic performances of the Relaxed GLRT compares to the simulated $ P_{d} $. (a) Case 1, Network 1; (b) Case 2, Network 1; (c) Case 3, Network 1.}
	\label{fig_asym_simu_perf}
\end{figure}

\begin{figure}
	\centering
	\subfloat[]{\includegraphics[width=0.35\textwidth]{./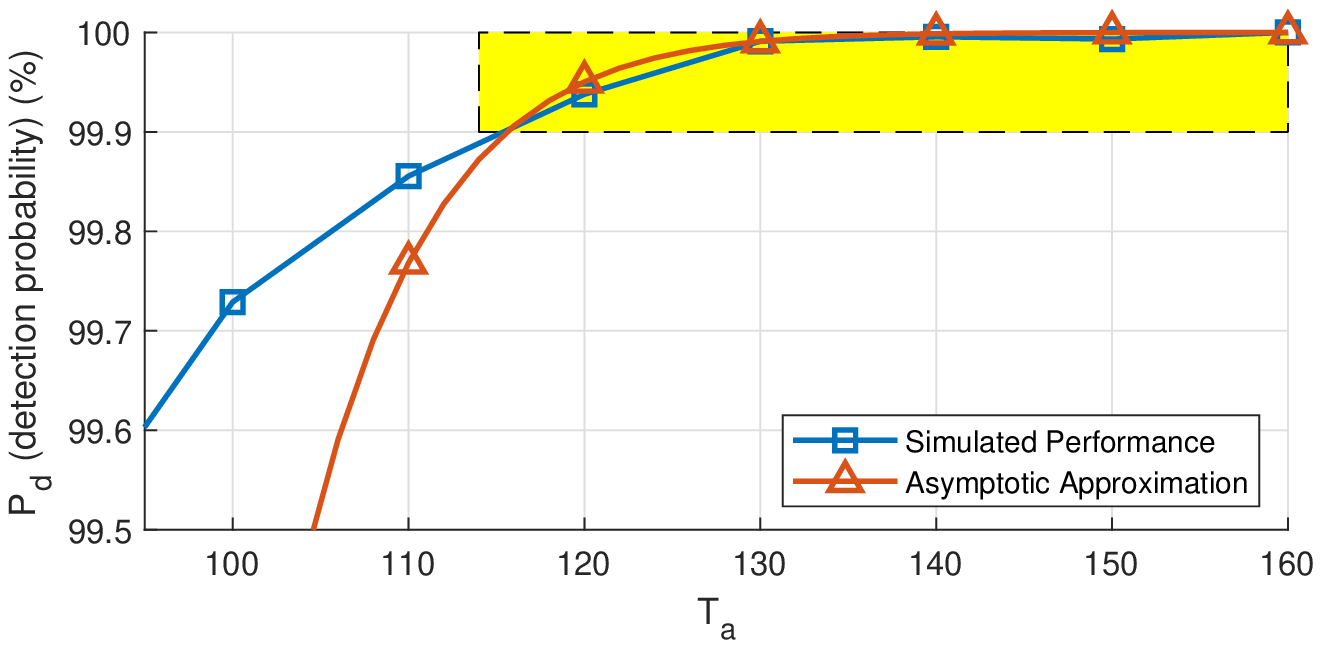}\label{fig_asym_simu_perf_net2_a}}
	\hfil
	\subfloat[]{\includegraphics[width=0.35\textwidth]{./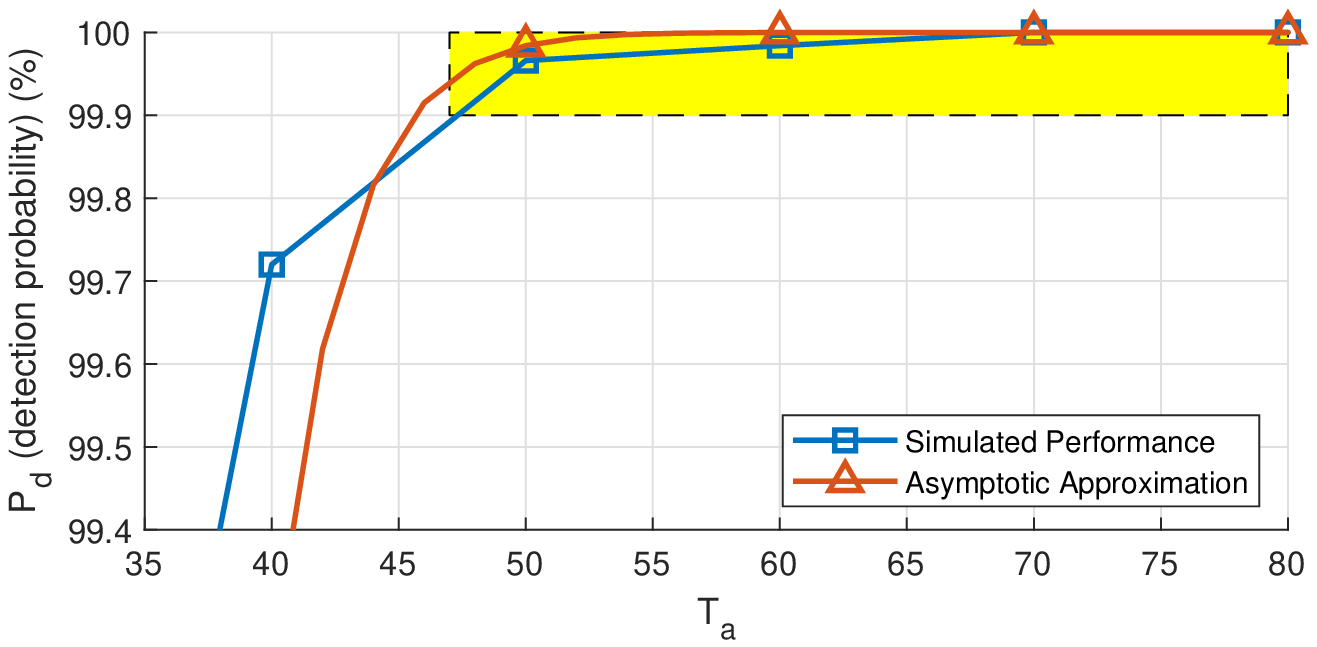}\label{fig_asym_simu_perf_net2_b}}
	\caption{Asymptotic performances of the Relaxed GLRT compares to the simulated $ P_{d} $. (a) Case 1, Network 2; (b) Case 2, Network 2.}
	\label{fig_asym_simu_perf_net2}
\end{figure} 
We give numerical result to verify that the solution to each optimization in (33)  (the relaxed problem) is unit rank based on (24) and (26) for a sufficiently large $ T_a $. For this numerical investigation, when an eigenvalue is smaller than $ 10^{-4} $, we regard it as zero. Fig. \ref{fig:rankxversust} shows the rank of the optimal $ \bs{X} $ (number of non-zero eigenvalues) for Network 1 in Case 1 and 2 with $ 5\% $ RSD. Fig. \ref{fig:rankxversust} shows that the rank of the optimal $ \bs{X} $ is always $ 1 $ for $ T_a > 100 $.

\begin{figure}
	\centering
	\includegraphics[width=0.7\linewidth]{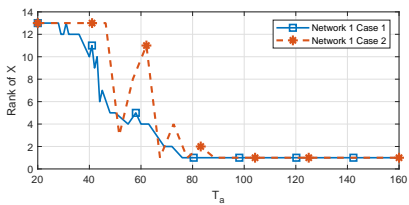}
	\caption{The rank of the optimal $ \bs{X} $ for Network 1 Case 1 and 2.}
	\label{fig:rankxversust}
\end{figure}

\begin{figure}
	\centering
	\subfloat[]{\includegraphics[width=0.35\textwidth]{./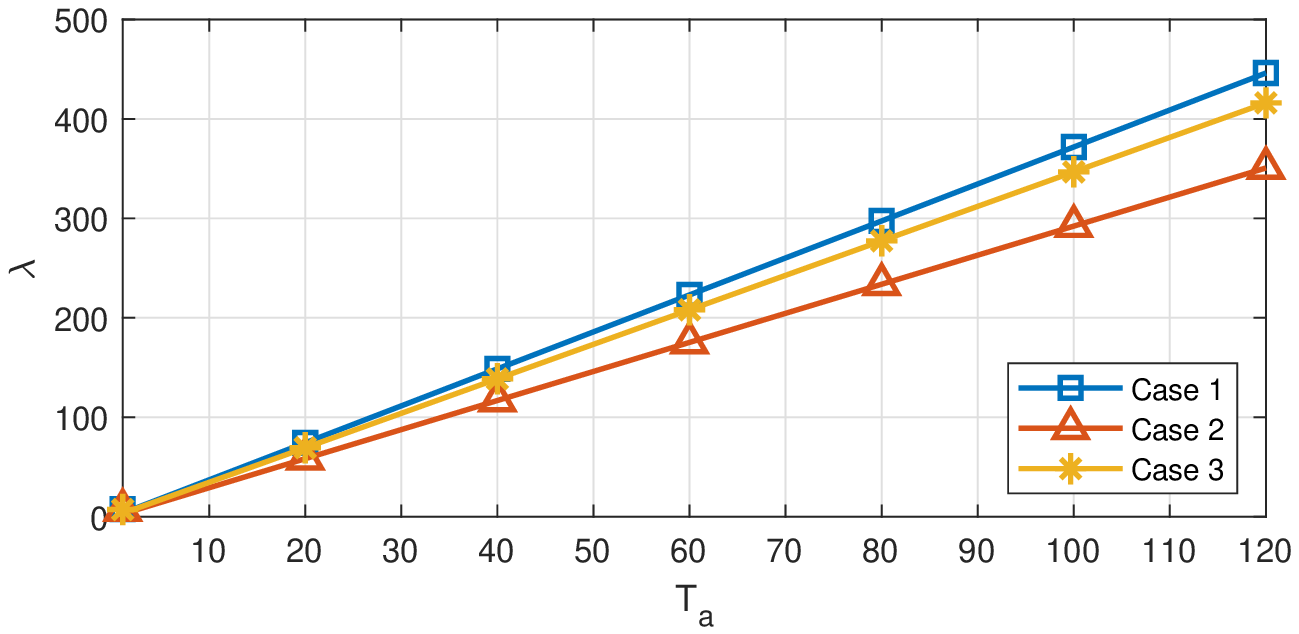}\label{fig_lambda_evaluation_a}}
	\hfil
	\subfloat[]{\includegraphics[width=0.35\textwidth]{./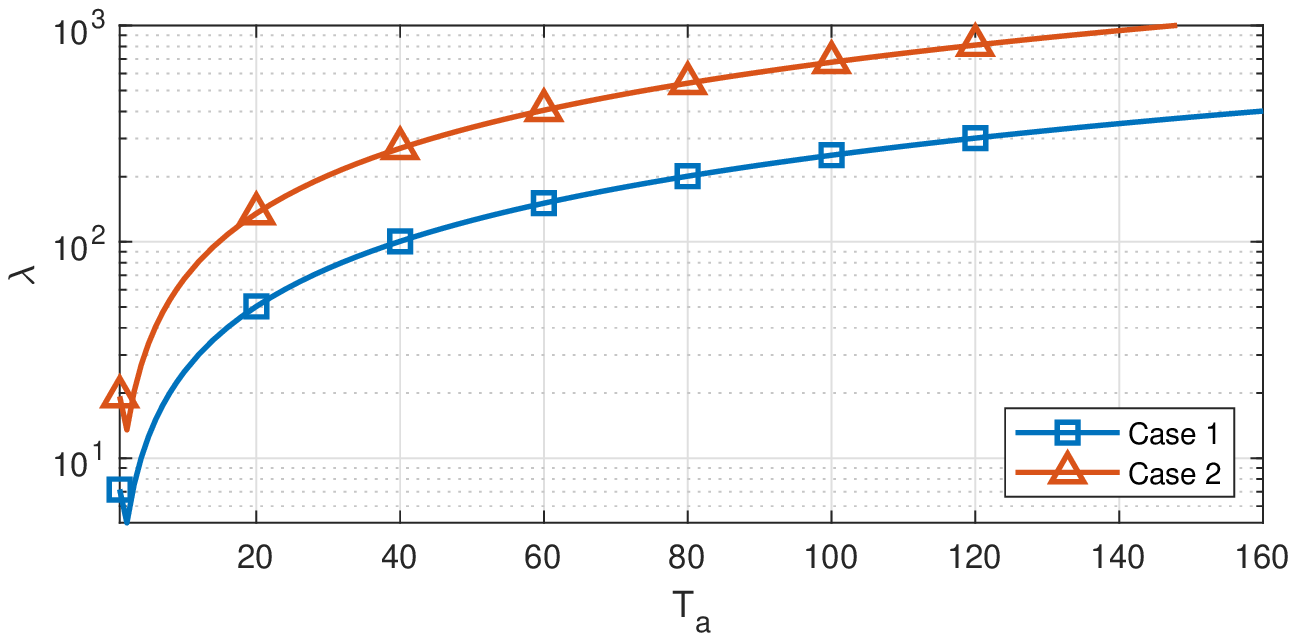}\label{fig_lambda_evaluation_b}}
	\caption{Evaluation of $ \lambda $. (a) Cases 1-3, Network 1, $ 10\% $ RSD; (b) Cases 1 and 2, Network 2, $ 8\% $ RSD.}
	\label{fig_lambda_evaluation}
\end{figure}

The detection probability $ P_{d} $ is very sensitive to the exact pipeline settings under $ H_1 $ and $ H_0 $. 
In practice we must choose $ {T_a} $ sufficiently large such that the most difficult topology change will be detected properly.
In order to choose $ T_a $, we first find the worst case by employing the asymptotic performance in Theorem \ref{Asymptotic_Performance_of_Standard_GLRT}, and then use it to find the required $ {T_a} $. 
For example, based on (\ref{def_lambda}), Fig. \ref{fig_lambda_evaluation} plots the values of $ \lambda $ for the cases of Networks 1 and 2 described in Tables \ref{table_different_cases_net1}-\ref{table_def_tilde_A_net2}.
These evaluations show that Case 2 of Network 1 and Case 1 of Network 2 are the worst of the considered cases since they have the smallest $ \lambda $. 
Based on (\ref{Pd_standard}) and (\ref{def_lambda}), the minimum value of $ {T_a} $ to obtain good performance for the hardest case is determined by
\begin{align}
\overline{{T_a}} = &\lfloor Q^{-1}_{S^2}\left( 0.999, [Q^{-1}_{S^2}(0,10^{-3})]^{0.5} \right)/\lambda\rfloor, \label{best_of_lambda}
\end{align}
with $ \lambda $ chosen from the hardest case and defined in \eqref{def_lambda}. 
In (\ref{best_of_lambda}), $ \lfloor x\rfloor $ denotes the floor operator, and $ Q^{-1}_{S^2}(P_d, \rho) $ denotes the inverse Marcum Q-function with $ S^2 $ degrees of freedom defined in \cite{gil2014asymptotic}. 
Using (\ref{best_of_lambda}), we know that the values of $ {T_a} $ to obtain good performance for the worst case of Network 1 and Network 2 are $ 100 $ and $ 114 $ which agree well with simulation results. 

Let $ P_{d|\bs{A}^*\ne\bs{A}_{H_1}} $ denote the probability of detection given $ \bs{A}^*\ne\bs{A}_{H_1} $.
The total detection probability is equal to
\begin{align}
P_{d} = P_{d|\bs{A}^*\ne\bs{A}_{H_1}}\text{Pr}(\bs{A}^*\ne\bs{A}_{H_1}) + P_{d|\bs{A}^*=\bs{A}_{H_1}}\text{Pr}(\bs{A}^*=\bs{A}_{H_1}). \label{total_pdr}
\end{align}
Our asymptotic performance ignores the first term in the RHS of (\ref{total_pdr}) since $ \text{Pr}(\bs{A}^*=\bs{A}_{H_1}) $ goes to 1 for large $ T $. Table~\ref{table_prob_choose_right_A} presents the simulated $ \text{Pr}(\bs{A}^*=\bs{A}_{H_1}) $ as well as the two terms in the RHS of (\ref{total_pdr}), separately. Table~\ref{table_prob_choose_right_A} shows that $ P_{d|\bs{A}^*\ne\bs{A}_{H_1}}\text{Pr}(\bs{A}^*\ne\bs{A}_{H_1}) $ goes to zero as $ T $ increases. 

\begin{table*}
	\renewcommand{\arraystretch}{1.3}
	\caption{The probability of $ \bs{A}=\bs{A}_{H_1} $ and corresponding $ P_{d} $.}
	\label{table_prob_choose_right_A}
	\centering
	\begin{tabular}{c|lccccccc}
		\hline\hline
		Pipeline Setting& $ {T_a} $ & $ 50 $ & $ 60 $ & $ 70 $ & $ 80 $ & $ 90 $ & $ 100 $\\
		\hline
		Network 1 & $ P_{d|\bs{A}^*\ne\bs{A}_{H_1}}\text{Pr}(\bs{A}^*\ne\bs{A}_{H_1}) $ & $ 0.0069\% $ & $ 0.0013\% $ & $ 0.0002\% $ & $ <0.0001\% $ & $ <0.0001\% $ & $ <0.0001\% $\\
		Case 1& $ P_{d|\bs{A}^*=\bs{A}_{H_1}}\text{Pr}(\bs{A}^*=\bs{A}_{H_1}) $ & $ 98.11\% $ & $ 99.12\% $ & $ 99.62\% $ & $ 99.74\% $ & $ 99.84\% $ & $ 99.99\% $\\
		RSD$ =10\% $& $ \text{Pr}(\bs{A}^*=\bs{A}_{H_1}) $ & $ 99.16\% $ & $ 99.64\% $& $ 99.86\% $& $ 99.92\% $& $ 99.94\% $& $ 99.99\% $\\
		\hline
		Network 1 & $ P_{d|\bs{A}^*\ne\bs{A}_{H_1}}\text{Pr}(\bs{A}^*\ne\bs{A}_{H_1}) $ & $ 0.42\% $ & $ 0.06\% $ & $ 0.03\% $ & $ 0.01\% $ & $ 0.01\% $ & $ 0.01\% $\\
		Case 2 & $ P_{d|\bs{A}^*=\bs{A}_{H_1}}\text{Pr}(\bs{A}^*=\bs{A}_{H_1}) $ & $ 98.49\% $ & $ 99.38\% $ & $ 99.58\% $ & $ 99.76\% $ & $ 99.86\% $ & $ 99.98\% $\\
		RSD$ =10\% $& $ \text{Pr}(\bs{A}^*=\bs{A}_{H_1}) $ & $ 99.34\% $ & $ 99.74\% $& $ 99.82\% $& $ 99.90\% $& $ 99.92\% $& $ 99.99\% $\\
		\hline
		\hline
	\end{tabular}
\end{table*}

Table \ref{table:algorithmRunTime} shows the run times of the algorithm described in Section \ref{eq:effiTopoVerAlgo} along with the run times needed for solving the Relaxed GLRT by enumeration for all cases of Network 1 and 2 with $ T = 100 $ and $ 10\% $ RSD.
The threshold $ \epsilon $ is set to $ Q_{L+N}^{-1}(0,P_{fa}) $ in which  $ Q_{L+N}^{-1}(0,P_{fa}) $ was defined after \eqref{eq:defMarcumQ} based on \cite{monticelli1983reliable}.
These results show that for all cases the efficient algorithm is much faster than solving the Relaxed GLRT by enumeration. 
We see that when $ H_1 $ is true, the algorithm stops quickly for all cases as we stated in Section \ref{eq:effiTopoVerAlgo}. 

To evaluate the performance of our placement approach, 
For each Monte-Carlo run, the cost of placing any sensor is independently and randomly set from $ 1 $ to $ 70 $. 
Let $ D $ and $ D^* $ denote the cost obtained by our placement approach and the cost of the optimal placement obtained by exhausting search, respectively.
Let $ D/D^* $ denote the approximation ratio. 
Table \ref{table:placeResult} shows the approximation ratio $ D/D^* $ and the run times for 4 different gas networks. Table \ref{table:placeResult} shows that our placement approach is very close to the optimal solution for all cases. Table \ref{table:placeResult} also shows that our placement approach is very efficient since, on average we check \eqref{eq:app_pre_condition} less than $ 21 $ times while the exhausting search needs to check \eqref{eq:app_pre_condition} $ 2.20\times 10^6 $ times.

\begin{table}
	\renewcommand{\arraystretch}{1.3}
	\caption{The average run times of the algorithm}
	\label{table:algorithmRunTime}
	\centering
	\begin{tabular}{lccccccc}
		\hline\hline
		& \multicolumn{4}{c}{Network 1} & & \multicolumn{2}{c}{Network 2}\\
		\cline{2-5} \cline{7-8}
		& Case 1 & Case 2 & Case 3 & Case 4 & & Case 1 & Case 2 \\	
		\hline	
		$ H_0 $ true & $ 2.56s $ & $ 2.89s $ & $ 2.01s $ & $ 3.44s $ & & $ 2.80s $ & $ 1.57s $ \\
		$ H_1 $ true & $ 0.54s $ & $ 0.62s $ & $ 1.48s $ & $ 0.64s $ & & $ 1.43s $ & $ 0.62s $\\
		Enum.   & $ 5.30s $ & $ 5.56s $ & $ 4.14s $ & $ 5.36s $ & & $ 9.76s $ & $ 9.84s $\\
		\hline
		\hline
	\end{tabular}
\end{table}

\begin{table}
	\renewcommand{\arraystretch}{1.3}
	\caption{The performance of our placement approach for different networks.}
	\label{table:placeResult}
	\centering
	\begin{tabular}{ccccc}
		\hline\hline
		Networks & \includegraphics[width=0.45in]{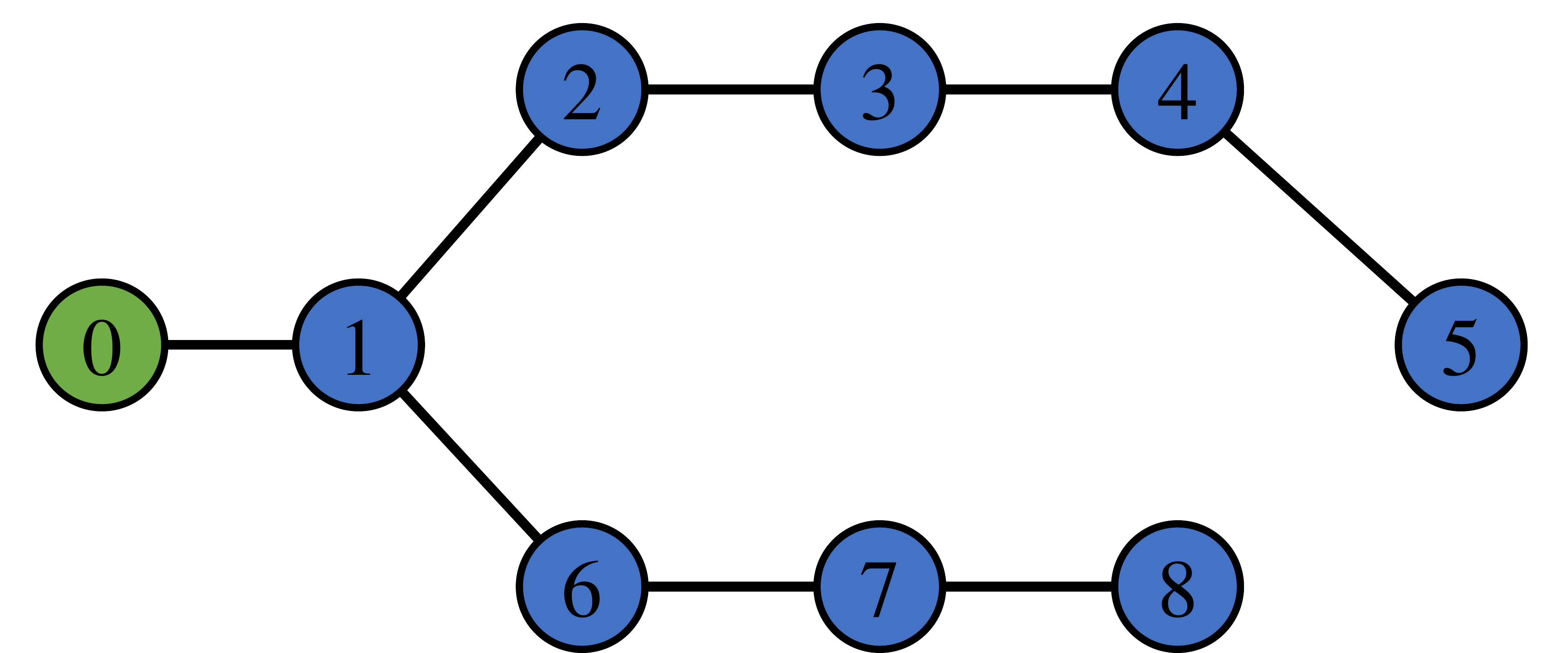} & \includegraphics[width=0.45in]{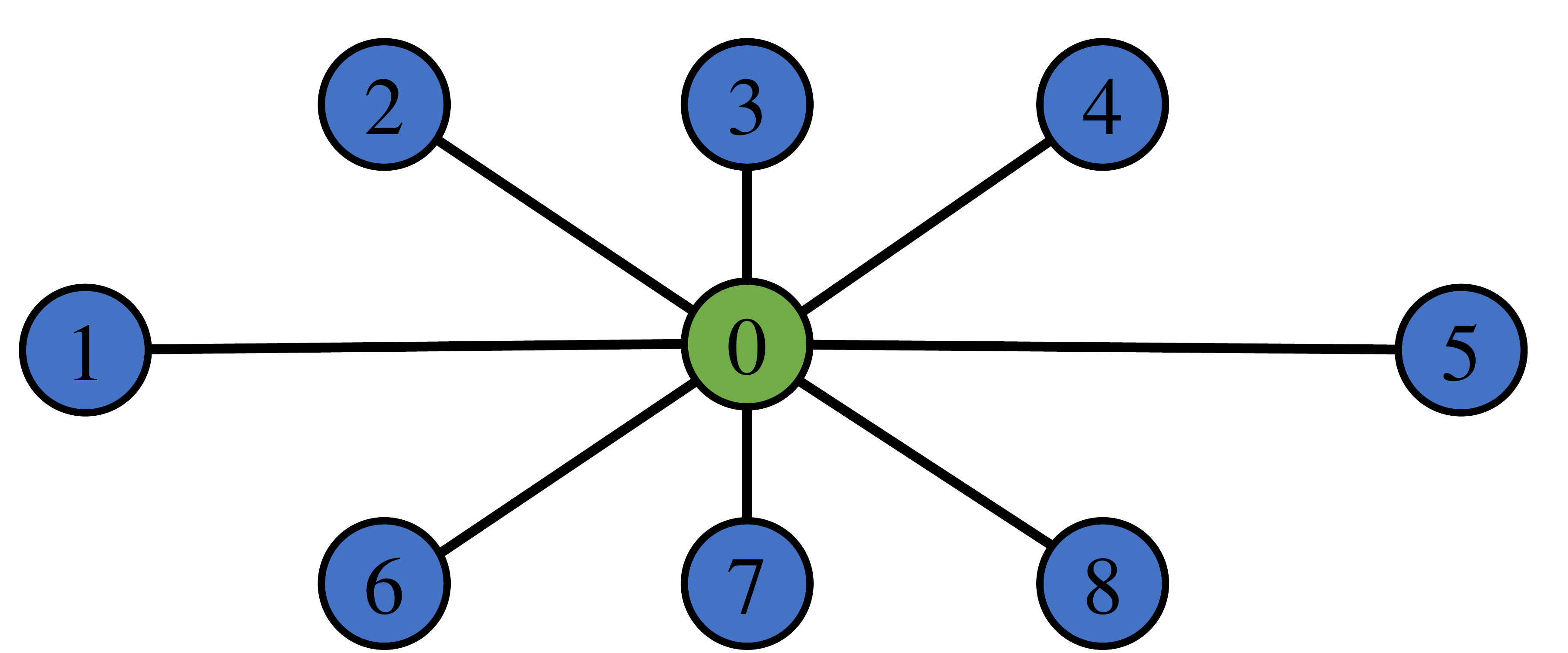} & \includegraphics[width=0.45in]{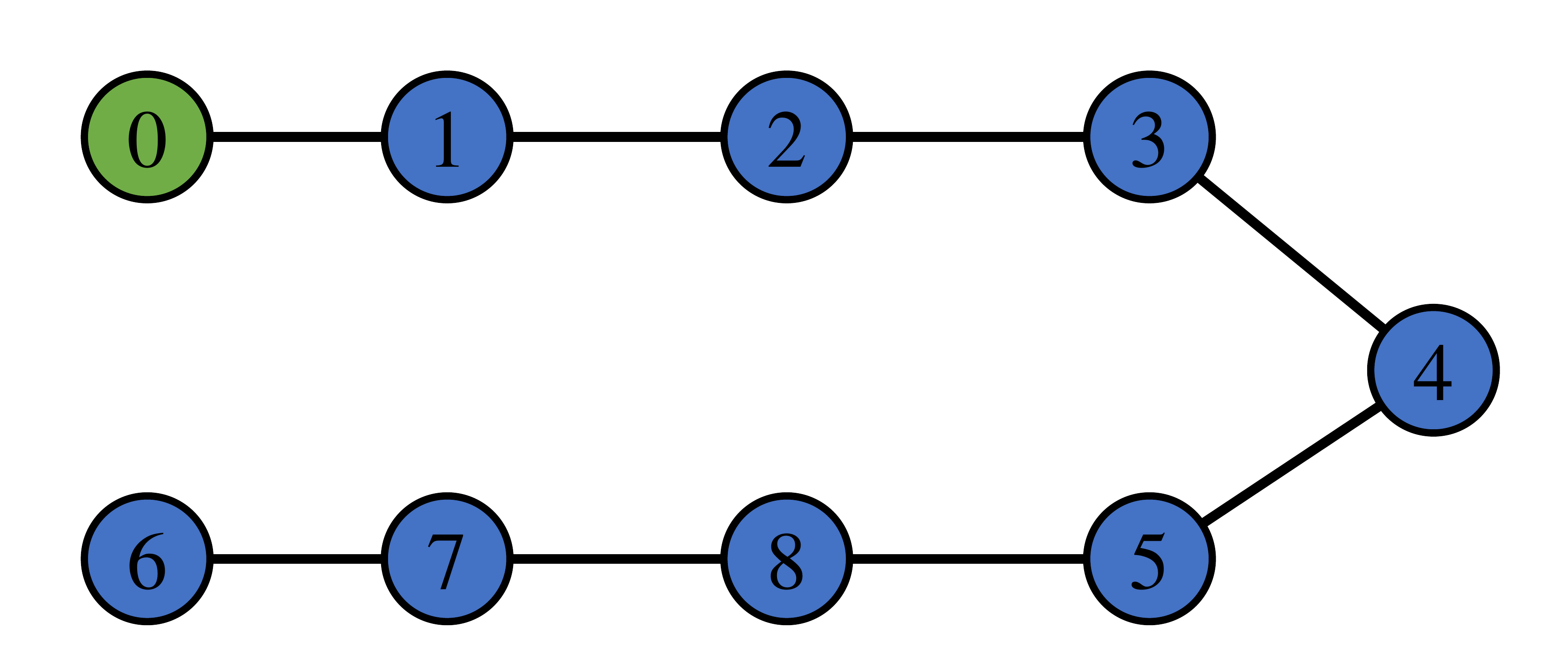} & \includegraphics[width=0.45in]{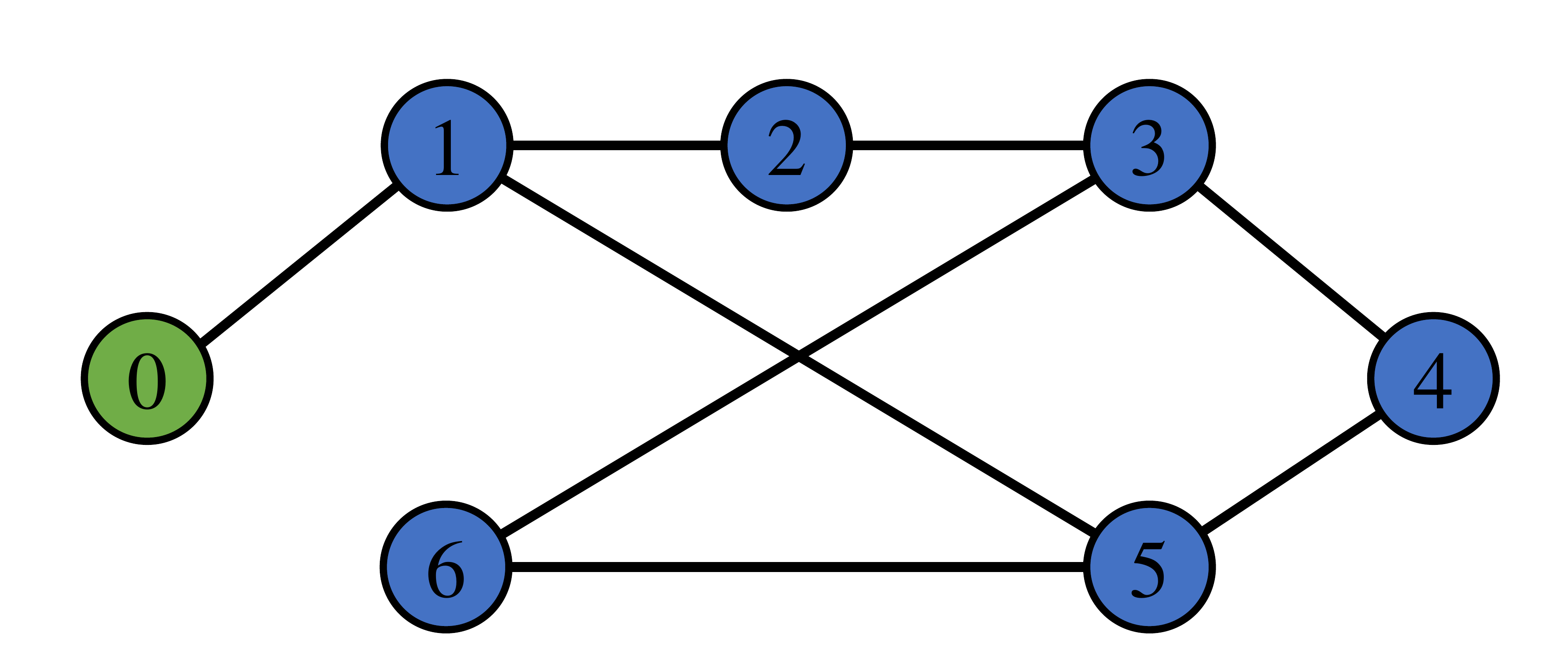} \\
		\hline	
		$ D/D^* $ & $ 1.1398 $ & $ 1.2263 $ & $ 1.1224 $ & $ 1.2703 $\\
		Run time & $ 0.1438 $s & $ 0.1945 $s & $ 0.1586 $s & $ 0.1655 $s \\
		\hline
		\hline
		\multicolumn{5}{l}{Exhausting search takes $ 2123.70 $s .}
	\end{tabular}
\end{table}

The numerical results illustrate the following properties of the Relaxed GLRT:
	1) {\bf The Relaxed GLRT provides good performance for sufficiently large $ {T_a} $ for all cases studied}: Simulation results show that $ P_{d} $ increases as $ {T_a} $ is increased in each case. 
	2) {\bf The performance of the Relaxed GLRT depends on the pipeline settings under $ H_0 $ and $ H_1 $}:
	Table \ref{table_fig_T_needed_list} shows that $ P_{d} $ varies for different cases and networks given fixed $ {T_a} $.
	3) {\bf The asymptotic performance proposed in Theorem \ref{Asymptotic_Performance_of_Standard_GLRT} is accurate for the values of $ {T_a} $ needed to achieve good performance.}
	4) {\bf In cases where we require good performance, we use $ \lambda $ to find out the most difficult pipeline change to detect, then the required number of observations to achieve good performance is given by (\ref{best_of_lambda}).}

\section{conclusions and discussion}

In this paper, we describe a new algorithm for employing sensor measurements to perform topology verification of natural gas networks.  Our algorithm is based on relaxing some optimization problems in the classical statistical hypothesis testing approach called the GLRT. The GLRT generally works well when the optimization problems involved in computing it are convex, but it is difficult to guarantee good performance when some of the optimization problems are nonconvex. However, in our application the GLRT will involve some nonconvex optimizations that correspond to estimating some continuous valued parameters describing the mathematical models of some pressures and flows in the natural gas network. Our new algorithm removed some constraints to produce convex optimizations for the estimation of these continuous valued variables, while ensuring identical performance to the GLRT for an appropriate number of sensor observations. Hence, our new algorithm is much efficient and reliable since solving it only involves convex optimization programming. We derive new closed-form expressions for the asymptotic performance of our algorithm in terms of the probability of detection. We have not seen any derivations in the literature of the asymptotic performance for the cases we consider here in this paper, which involve estimations of discrete variables which describe the natural gas network topology. The asymptotic results also describe the required number of observations needed to achieve reliable performance. We also provide sufficient conditions for an appropriate sensor placement. 
The approach employed to derive the closed-form asymptotic performance expressions is pretty general and can be used to solve any GLRT which involves both continuous and discrete quantities that must be estimated or any modified GLRT which uses relaxed versions of the optimizations employing these estimations. 

As this is the first paper on this topic of topology verification of natural gas networks, we focus on the most basic setting.  For example, we do not consider cases where the sensors may fail and give faulty measurements or where the communications which send the sensor data may fail.  We also do not consider cases where the sensor measurements might be manipulated by attackers. 
We intend to study these cases in future work.  
The baseline provided in this paper is very helpful as a start for such work, since it describes very favorable performance with no attacks or failures. 
If we can find acceptable complexity topology identification approaches which can identify attacks and failures and use this information to appropriately process the compromised and uncompromised data to achieve performance close to the unattacked performance, see \cite{zhang2018approaches}, then these approaches will be very useful for practical deployment.  

\appendices
\section{The Proof of Theorem \ref{thm_relaxed_is_exact}: Asymptotic equivalence of the relaxed ML to the original ML} \label{app:proofOfTheoremAsympEquivalent}

Define a function $ {h}:\mathbb{R}^{S\times S}\rightarrow\mathbb{R}^S $ by $ {h}(\bs{X}) = \begin{bmatrix}\bs{X}_{1,1}&\bs{X}_{2,2}&\dots&\bs{X}_{S,S}	\end{bmatrix}^T $.
In the relaxed ML \eqref{relaxed_problem}, the constraints $ \tilde{f}(m,{\bs{A}},\bs{X}) = \Tr{\bs{Z}(m,{\bs{A}})\bs{X}}+\bs{b}_mp_0^2 $ confine only diagonal elements of $ \bs{X} $, because $  \bs{Z}(m,{\bs{A}}) $ is a diagonal matrix which was defined as
\begin{align}
\bs{Z}(m,{\bs{A}}) &= \diag{\begin{bmatrix} \bs{B}_{[m,:]}&-\bs{c}_m\bs{e}_{m,{1}\times L}&0 \end{bmatrix}}.
\end{align} 
where $ m=1,\dots,L $, $ \bs{B}_{[m,:]} $ denotes the $ m^{th} $ row of $ \bs{B} $, $ \bs{c}_m $ denotes $ m^{th} $ entry of the pipeline characteristic vector $ \bs{c} $, and $ \bs{e}_{m,1\times L} $ is an $ 1\times L $ row vector with all zero entries except for the $ m^{th} $ which is 1. 
Then, the diagonal elements of $ \bs{X} $ live in the solution space of the following linear equations
\begin{align}
\begin{bmatrix}
\tilde{f}(1,\bs{A},\bs{X})\\
\vdots\\
\tilde{f}(L,\bs{A},\bs{X})
\end{bmatrix} = \bs{0}.
\end{align}
Let $ \Pi(\bs{A}) = \{ h(\bs{X}) | \tilde{f}(m,\bs{A},\bs{X}) = 0, m=1,\dots,L \} $.
Then, the relaxed ML \eqref{relaxed_problem} becomes
\begin{align}
\xi(\bs{A}) =  \sup_{\begin{subarray}{c} {h}(\bs{X})\in\Pi(\bs{A}),\hpt \bs{X}\ge 0,\hpt \bs{X} = \bs{X}^T \\ \end{subarray}}  C-\Tr{\bs{M}(\bs{A},\bs{v})\bs{X}} \label{relaxed_prob_equiv}.
\end{align}
Define $ \bs{1} $ as an all one column vector whose the dimension is equal to the number of observations. Define the vectors $ \overline{\bs{\delta}}_p = \bs{1}\otimes \bs{\delta}_p $, $ \overline{\bs{\delta}}_q = \bs{1}\otimes \bs{\delta}_q $, $ \overline{\bs{\delta}}_\phi = \bs{1}\otimes \begin{bmatrix}
\bs{\delta}_F^T&\bs{\delta}_C^T
\end{bmatrix}^T $, $ \overline{\bs{I}}_L = \bs{1}\otimes \bs{I}_L $, $ \overline{\bs{I}}_N = \bs{1}\otimes \bs{I}_N $, and $ \overline{\bs{A}} = \bs{1}\otimes \bs{A}^T $ where $ \otimes $ denotes the kronecker product.
Define the full observation vectors as $ \overline{\bs{p}} = (\tilde{\bs{p}}^T[1],\hpt\cdots,\tilde{\bs{p}}^T[T_a])^T $, $ \overline{\bs{q}} = (\tilde{\bs{q}}^T[1],\hpt\cdots,\tilde{\bs{q}}^T[T_a])^T $, and $ \overline{\bs{\phi}} = (\tilde{\bs{\phi}}^T[1],\hpt\cdots,\tilde{\bs{\phi}}^T[T_a])^T $.
Define the full noise vectors as $ \overline{\bs{n}}_p = (\bs{n}_p^T[1],\hpt\cdots,\hpt\bs{n}_p^T[T_a])^T $, $ \overline{\bs{n}}_q = (\bs{n}_q^T[1],\hpt\cdots,\hpt\bs{n}_q^T[T_a])^T $, and $ \overline{\bs{n}}_\phi = (\bs{n}_\phi^T[1],\hpt\cdots,\hpt\bs{n}_\phi^T[T_a])^T $.
Then, the original ML \eqref{original_problem} becomes
\begin{align}
&\mu(\bs{A})=\sup_{  \bs{\omega}\odot\bs{\omega}\in\Pi(\bs{A})}  C - \frac{1}{2}\left[\left\|\frac{1}{\sigma_\phi}
\overline{\bs{\delta}}_\phi\odot\left(
\overline{\bs{\phi}} - \overline{\bs{I}}_L\bs{\phi}\right)\right\|^2 \right.\notag\\ 
&\left.+  \left\|\frac{1}{\sigma_q}\overline{\bs{\delta}}_q\odot\left(\overline{\bs{q}} - \overline{\bs{A}}
\bs{\phi}\right)\right\|^2
+ \left\|\frac{1}{\sigma_p}\overline{\bs{\delta}}_p\odot\left(\overline{\bs{p}} - \overline{\bs{I}}_N\bs{p}\right)\right\|^2\right]. \label{original_prob_equiv}
\end{align} 
Proposition 5 in \cite{so2010probabilistic} shows that $ \mu(\bs{A}) = \xi(\bs{A}) $ when the following conditions are satisfied simultaneously 
\begin{align}
\eta_{\min}\left[ \overline{\bs{I}}^T\diag{\overline{\bs{\delta}}_p}\overline{\bs{I}} \right] &> \left\| \overline{\bs{I}}^T\diag{\overline{\bs{\delta}}_p}^T\overline{\bs{n}}_p \right\|_{\infty},\label{exact_condition_1}\\ 
\eta_{\min}\left[ \overline{\bs{A}}^T\diag{\overline{\bs{\delta}}_q}\overline{\bs{A}} \right] &> \left\| \overline{\bs{A}}^T\diag{\overline{\bs{\delta}}_q}^T\overline{\bs{n}}_q \right\|_{\infty}, \label{exact_condition_2}\\ 
\eta_{\min}\left[ \overline{\bs{I}}^T\diag{\overline{\bs{\delta}}_\phi}\overline{\bs{I}} \right] &> \left\| \overline{\bs{I}}^T\diag{\overline{\bs{\delta}}_\phi}^T\overline{\bs{n}}_\phi \right\|_{\infty}. \label{exact_condition_3}
\end{align}
Using $ \eta_{\min}({T_a}\bs{A}) = {T_a}\eta_{\min}(\bs{A}) $, \eqref{exact_condition_1}-\eqref{exact_condition_3} become
\begin{align}
{T_a} &> \left\| \diag{\bs{\delta}_p}\sum_{t=1}^{T_a}\bs{n}_p[t] \right\|_{\infty},\\ 
{T_a}\eta_{\min}\left[ \bs{A}\hpt\diag{\bs{\delta}_q}\hpt\bs{A}^T \right] &> \left\| \bs{A}\hpt\diag{\bs{\delta}_q}\sum_{t=1}^{T_a}\bs{n}_q[t] \right\|_{\infty},\\ 
{T_a} &> \left\| \diag{\bs{\delta}_\phi}\sum_{t=1}^{T_a}\bs{n}_\phi[t] \right\|_{\infty}, 
\end{align}
or, taken together, we have \eqref{exact_relaxed_final_condi} in Theorem \ref{thm_relaxed_is_exact}.

\ifCLASSOPTIONcaptionsoff
\newpage
\fi

\bibliographystyle{IEEEtran}

\end{document}